\newcommand{\probl}[3]{
  \begin{flushleft}
    \fbox{
      \begin{minipage}{13.5cm}
        \noindent {\sc #1}\\
        {\bf Instance:} #2\\
        {\bf Question:} #3
      \end{minipage}}
    \medskip
  \end{flushleft}
}
\newcommand{\paraprobl}[5]
{
  \begin{flushleft}
    \fbox{
      \begin{minipage}{#5cm}
        \noindent {\textsc {#1}}\\
        {\bf Instance:} #2\\
        {\bf Parameter:} #4\\
        {\bf Question:} #3
      \end{minipage}
    }
  \end{flushleft}
}
\newcommand{\Acal}{\mathcal{A}}
\newtheorem{rrule}{Reduction Rule}
\newtheorem{observation}{Observation}
\newcommand{\balpha}{\boldsymbol{\alpha}}
\newcommand{\bmd}{\boldsymbol{d}}
\newcommand{\bigO}[1]{\mathcal{O}\!\left(#1\right)}
\newcommand{\bigOs}[1]{\mathcal{O}^*\!\left(#1\right)}
\newcommand{\dcup}{\ \dot\cup\ }
\BODY\end{proof}
\newcommand{\pname}[1]{\textsc{#1}}
\newcommand{\inners}{1.2pt}
\newcommand{\outers}{1pt}
\newcommand{\tdef}[1]{\emph{#1}}
\newclass{\Hard}{hard}
\newclass{\Hness}{hardness}
\newclass{\Complete}{complete}
\newclass{\para}{para}
\newclass{\Cness}{completeness}
\newfunc{\tw}{tw}
\newfunc{\YES}{YES}
\newfunc{\NOi}{NO}
\newfunc{\dc}{dc}
\newfunc{\dcc}{d\overline{c}}
\newclass{\ETH}{ETH}
\newclass{\dgr}{deg}
\title{Finding Cuts of Bounded Degree: Complexity, FPT and Exact Algorithms, and Kernelization}
\titlerunning{Finding Cuts of Bounded Degree}
\author{Guilherme C. M. Gomes}{Universidade Federal de Minas Gerais, \and Departamento de Ciência da Computação, Belo Horizonte, Brazil}{gcm.gomes@dcc.ufmg.br}{https://orcid.org/0000-0002-5164-1460
}{Coordenação de Aperfeiçoamento de Pessoal de Nível Superior - Brasil (CAPES) - Finance Code 001.}
\author{Ignasi Sau}{CNRS, LIRMM, Universit\'e de Montpellier, Montpellier France}{ignasi.sau@lirmm.fr}{https://orcid.org/0000-0002-8981-9287
}{Projects DEMOGRAPH (ANR-16-CE40-0028) and ESIGMA (ANR-17-CE23-0010).}
\authorrunning{G.\,C. M. Gomes and I.\,Sau}
\keywords{matching cut; bounded degree cut; parameterized complexity; {\sf FPT} algorithm; polynomial kernel; distance to cluster.}
\begin{document}

\maketitle

\begin{abstract}
A \emph{matching cut} is a partition of the vertex set of a graph into two sets $A$ and $B$ such that each vertex has at most one neighbor in the other side of the cut. The \textsc{Matching Cut} problem asks whether a graph has a matching cut, and has been intensively studied in the literature. Motivated by a question posed by Komusiewicz et al.~[IPEC 2018], we introduce a natural generalization of this problem, which we call \textsc{$d$-Cut}: for a positive integer $d$, a \emph{$d$-cut} is a bipartition of the vertex set of a graph into two sets $A$ and $B$ such that each vertex has at most $d$ neighbors across the cut.  We generalize (and in some cases, improve) a number of results for the \textsc{Matching Cut} problem. Namely, we begin with an \NP-hardness reduction for \textsc{$d$-Cut} on $(2d+2)$-regular graphs and a polynomial algorithm for graphs of maximum degree at most $d+2$. The degree bound in the hardness result is unlikely to be improved, as it would disprove a long-standing conjecture in the context of internal partitions. We then give {\sf FPT} algorithms for several parameters: the maximum number of edges crossing the cut, treewidth, distance to cluster, and distance to co-cluster. In particular, the treewidth algorithm improves upon the running time of the best known algorithm for \textsc{Matching Cut}. Our main technical contribution, building on the techniques of Komusiewicz et al.~[IPEC 2018], is a polynomial kernel for \textsc{$d$-Cut} for every positive integer $d$, parameterized by the distance to a cluster graph. We also rule out the existence of polynomial kernels when parameterizing simultaneously by the number of edges crossing the cut, the treewidth, and the maximum degree. Finally, we provide an exact exponential algorithm slightly faster than the naive brute force approach running in time $\bigOs{2^n}$.
\end{abstract}

\section{Introduction}
\label{sec:intro}

 A \tdef{cut} of a graph $G = (V, E)$ is a bipartition of its vertex set $V(G)$ into two non-empty sets, denoted by $(A,B)$.
The set of all edges with one endpoint in $A$ and the other in $B$ is the \tdef{edge cut}, or the set of \tdef{crossing edges}, of $(A,B)$.
A \tdef{matching cut} is a (possibly empty) edge cut that is a matching, that is, such that its edges are pairwise vertex-disjoint. Equivalently, $(A, B)$ is a matching cut of $G$ if and only if every vertex is incident to at most one crossing edge of $(A, B)$~\cite{matching_cut_graham, chvatal_matching_cut}, that is, it has at most one neighbor across the cut.

Motivated by an open question posed by Komusiewicz et al.~\cite{matching_cut_ipec} during the presentation of their article,  we investigate a natural generalization that arises from this alternative definition, which we call \tdef{$d$-cut}.
Namely, for a positive integer $d \geq 1$, a $d$-cut is a a cut $(A, B)$ such that each vertex has at most $d$ neighbors across the partition, that is, every vertex in $A$ has at most $d$ neighbors in $B$, and vice-versa. Note that a $1$-cut is a matching cut.
As expected, not every graph admits a $d$-cut, and the \pname{$d$-Cut} problem is the problem of, for a fixed integer $d \geq 1$, deciding whether or not an input graph $G$ has a $d$-cut.


When $d=1$, we refer to the problem as \pname{Matching Cut}.
Graphs with no matching cut first appeared in Graham's manuscript~\cite{matching_cut_graham} under the name of \textit{indecomposable graphs}, presenting some examples and properties of decomposable and indecomposable graphs, leaving their recognition as an open problem.
In answer to Graham's question, Chv\'atal~\cite{chvatal_matching_cut} proved that the problem is \NP-hard for graphs of maximum degree at least four and polynomially solvable for graphs of maximum degree at most three; in fact, as shown by Moshi~\cite{matching_cut_moshi}, every graph of maximum degree three and at least eight vertices has a matching cut.

Chvátal's results spurred a lot of research on the complexity of the problem~\cite{matching_cut_ipec,matching_cut_structural,matching_cut_tcs, matching_cut_diameter, matching_cut_planar, matching_cut_series_parallel, stable_cutset_line_graphs}.
In particular, Bonsma~\cite{matching_cut_planar} showed that \pname{Matching Cut} remains \NP-hard for planar graphs of maximum degree four and for planar graphs of girth five;
Le and Randerath~\cite{stable_cutset_line_graphs} gave an \NP-hardness reduction for bipartite graphs of maximum degree four;
Le and Le~\cite{matching_cut_diameter} proved that \pname{Matching Cut} is \NP-hard for graphs of diameter at least three, and presented a polynomial-time algorithm for graphs of diameter at most two.
Beyond planar graphs, Bonsma's work~\cite{matching_cut_planar} also proves that the matching cut property is expressible in monadic second order logic and, by Courcelle's Theorem~\cite{courcelle_theorem}, it follows that \pname{Matching Cut} is $\FPT$ when parameterized by the treewidth of the input graph; he concludes with a proof that the problem admits a polynomial-time algorithm for graphs of bounded cliquewidth.

Kratsch and Le~\cite{matching_cut_tcs} noted that Chv\'atal's original reduction also shows that, unless the Exponential Time Hypothesis~\cite{eth} (\ETH) fails\footnote{The $\ETH$ states that 3-\textsc{SAT} on $n$ variables cannot be solved in time $2^{o(n)}$; see~\cite{eth} for more details.}, there is no algorithm solving \pname{Matching Cut} in time $2^{o(n)}$ on $n$-vertex input graphs.
Also in~\cite{matching_cut_tcs}, the authors provide a first branching algorithm, running\footnote{The $\bigOs{\cdot}$ notation suppresses factors that are bounded by a polynomial in the input size.} in time $\bigOs{2^{n/2}}$, a single-exponential $\FPT$ algorithm when parameterized by the vertex cover number $\tau(G)$, and an algorithm generalizing the polynomial cases of line graphs~\cite{matching_cut_moshi} and claw-free graphs~\cite{matching_cut_planar}.
Kratsch and Le~\cite{matching_cut_tcs} also asked for the existence a single-exponential algorithm parameterized by treewidth.
In response, Aravind et al.~\cite{matching_cut_structural} provided a $\bigOs{12^{\tw(G)}}$ algorithm for \pname{Matching Cut} using nice tree decompositions, along with $\FPT$ algorithms for other structural parameters, namely neighborhood diversity, twin-cover, and distance to split graph.

The natural parameter -- the number of edges crossing the cut -- has also been considered.
Indeed, Marx et al.~\cite{marx_treewidth_reduction} tackled the \pname{Stable Cutset} problem, to which \textsc{Matching Cut} can be easily reduced via the line graph, and through a breakthrough technique showed that this problem is $\FPT$ when parameterized by the maximum size of the stable cutset.
Recently, Komusiewicz et al.~\cite{matching_cut_ipec} improved on the results of Kratsch and Le~\cite{matching_cut_tcs}, providing an exact exponential algorithm for \textsc{Matching Cut} running in  time $\bigOs{1.3803^n}$, as well as $\FPT$ algorithms parameterized by the distance to a cluster graph and the distance to a co-cluster graph, which improve the algorithm parameterized by the vertex cover number, since both parameters are easily seen to be smaller than the vertex cover number.
For the distance to cluster parameter, they also presented a quadratic kernel; while for a combination of treewidth, maximum degree, and number of crossing edges, they showed that no polynomial kernel exists unless $\NP \subseteq \coNP/\poly$.

A problem  closely related to \pname{$d$-Cut} is that of \pname{Internal Partition}, first studied by Thomassen~\cite{internal_partition_thomassen}.
In this problem, we seek a bipartition of the vertices of an input graph such that every vertex has at least as many neighbors in its
own part as in the other part. Such a partition is called an \tdef{internal partition}.
Usually, the problem is posed in a more general form: given functions $a,b: V(G) \rightarrow \mathbb{Z}_+$, we seek a bipartition $(A,B)$ of $V(G)$ such that every $v \in A$ satisfies $\dgr_A(v) \geq a(v)$ and every $u \in B$ satisfies $\dgr_B(u) \geq b(u)$, where $\dgr_A(v)$ denotes the number of neighbors of $v$ in the set $A$. Such a partition is called an \tdef{$(a,b)$-internal partition}.

Originally, Thomassen asked in~\cite{internal_partition_thomassen} whether for any pair of positive integers $s,t$, a graph $G$ with $\delta(G) \geq s + t + 1$ has a vertex bipartition $(A,B)$ with $\delta(G[A]) \geq s$ and $\delta(G[B]) \geq t$.
Stiebitz~\cite{internal_partition_stiebitz} answered that, in fact, for any graph $G$ and any pair of functions $a,b: V(G) \rightarrow \mathbb{Z}_+$ satisfying $\dgr(v) \geq a(v) + b(v) + 1$ for every $v \in V(G)$, $G$ has an $(a,b)$-internal partition.
Following Stiebitz's work, Kaneko~\cite{internal_partition_triangle_free} showed that if $G$ is triangle-free, then the pair $a,b$ only needs to satisfy $\dgr(v) \geq a(v) + b(v)$.
More recently, Ma and Yang~\cite{internal_partition_c4_free} proved that, if $G$ is $\{C_4, K_4, \text{diamond}\}$-free, then $\dgr(v) \geq a(v) + b(v) - 1$ is enough.
Furthermore, they also showed, for any pair $a,b$, a family of graphs such that $\dgr(v) \geq a(v) + b(v) - 2$ for every $v \in V(G)$ that do not admit an $(a,b)$-internal partition.

It is conjectured that, for every positive integer $r$, there exists some constant $n_r$ for which every $r$-regular graph with more than $n_r$ vertices has an internal partition~\cite{DeVos09,internal_partition_regular6} (the conjecture for $r$ even appeared first in~\cite{internal_partition_regular3_4}).
The cases $r \in \{3,4\}$ have been settled by Shafique and Dutton~\cite{internal_partition_regular3_4}; the case $r=6$ has been verified by Ban and Linial~\cite{internal_partition_regular6}.
This latter result implies that every 6-regular graph of sufficiently large size has a 3-cut.

\medskip

\noindent \textbf{Our results}. We aim at generalizing several of the previously reported results  for \pname{Matching Cut}.
First, we show in Section~\ref{sec:np}, by using a reduction inspired by Chvátal's~\cite{chvatal_matching_cut}, that for every $d \geq 1$, \pname{$d$-Cut} is \NP-hard even when restricted to $(2d+2)$-regular graphs and that, if $\Delta(G) \leq d+2$, finding a $d$-cut can be done in polynomial time. The degree bound in the \NP-hardness result is unlikely to be improved: if we had an \NP-hardness result for \textsc{$d$-Cut} restricted to $(2d+1)$-regular graphs, this would disprove the conjecture about the existence of internal partitions on $r$-regular graphs~\cite{DeVos09,internal_partition_regular6,internal_partition_regular3_4} for $r$ odd, unless all the problems in $\NP$ could be solved in {\sl constant} time. We conclude the section by giving a simple exact exponential algorithm that,  for every $d \geq 1$, runs in time $\bigOs{c_d^n}$ for some constant $c_d < 2$, hence improving over the trivial brute-force algorithm running in time $\bigOs{2^n}$.

We then proceed to analyze the problem in terms of its parameterized complexity.
Section~\ref{sec:param} begins with a proof, using the treewidth reduction technique of Marx et al.~\cite{marx_treewidth_reduction}, that \pname{$d$-Cut} is $\FPT$ parameterized by the maximum number of edges crossing the cut.
Afterwards, we present a dynamic programming algorithm for \pname{$d$-Cut} parameterized by treewidth running in time $\bigOs{2^{\tw(G)+1}(d+1)^{2\tw(G) + 2}}$; in particular, for $d=1$ this algorithm runs in time $\bigOs{8^{\tw(G)}}$ and improves the one given by Aravind et al.~\cite{matching_cut_structural} for \pname{Matching Cut}, running in time  $\bigOs{12^{\tw(G)}}$.
By employing the cross-composition framework of Bodlaender et al.~\cite{cross_composition} and using a reduction similar to the one in~\cite{matching_cut_ipec}, we show that, unless $\NP \subseteq \coNP/\poly$, there is no polynomial kernel for \pname{$d$-Cut} parameterized simultaneously by the number of crossing edges, the maximum degree, and the treewidth of the input graph.
We then present a polynomial kernel and an $\FPT$ algorithm when parameterizing by the distance to cluster, denoted by $\dc(G)$. This polynomial kernel is our main technical contribution, and it is strongly inspired by the technique presented by  Komusiewicz et al.~\cite{matching_cut_ipec} for \textsc{Matching Cut}. Finally, we give an $\FPT$ algorithm parameterized by the distance to co-cluster, denoted by $\dcc(G)$.
These results imply fixed-parameter tractability for \pname{$d$-Cut} parameterized by $\tau(G)$.

We present in Section~\ref{sec:concl} our concluding remarks and some open questions. We start by providing some basic preliminaries in Section~\ref{sec:prelim}.

\section{Preliminaries}\label{sec:prelim}

We use standard graph-theoretic notation, and we consider simple undirected graphs without loops or multiple edges; see~\cite{Die10} for any undefined terminology. When the graph is clear from the context, the degree (that is, the number of neighbors) of a vertex $v$ is denoted by  $\dgr(v)$, and the number of neighbors of a vertex $v$ in a set $A \subseteq V(G)$ is denoted by $\dgr_A(v)$. The minimum degree, the maximum degree, the line graph, and the vertex cover number of a graph $G$ are denoted by $\delta(G)$, $\Delta(G)$, $L(G)$,
and $\tau(G)$, respectively. For a positive integer $k \geq 1$, we denote by $[k]$ the set containing every integer $i$ such that $1 \leq i \leq k$.

We refer the reader to~\cite{DF13,CyganFKLMPPS15} for basic background on parameterized complexity, and we recall here only some basic definitions.
	A \emph{parameterized problem} is a language $L \subseteq \Sigma^* \times \mathbb{N}$.  For an instance $I=(x,k) \in \Sigma^* \times \mathbb{N}$, $k$ is called the \emph{parameter}. 
	A parameterized problem is \emph{fixed-parameter tractable} ({\sf FPT}) if there exists an algorithm $\Acal$, a computable function $f$, and a constant $c$ such that given an instance $I=(x,k)$,
	$\Acal$ (called an {\sf FPT} \emph{algorithm}) correctly decides whether $I \in L$ in time bounded by $f(k) \cdot |I|^c$.

A fundamental concept in parameterized complexity is that of \emph{kernelization}; see~\cite{book-kernels} for a recent book on the topic. A kernelization
	algorithm, or just \emph{kernel}, for a parameterized problem $\Pi $ takes an
	instance~$(x,k)$ of the problem and, in time polynomial in $|x| + k$, outputs
	an instance~$(x',k')$ such that $|x'|, k' \leqslant g(k)$ for some
	function~$g$, and $(x,k) \in \Pi$ if and only if $(x',k') \in \Pi$. The function~$g$ is called the \emph{size} of the kernel and may
	be viewed as a measure of the ``compressibility'' of a problem using
	polynomial-time preprocessing rules. A kernel is called \emph{polynomial} (resp. \emph{quadratic, linear}) if the function $g(k)$ is a polynomial (resp. quadratic, linear) function in $k$.
	A breakthrough result of Bodlaender et al.~\cite{BodlaenderDFH09} gave the first framework for proving that certain parameterized problems
	do not admit polynomial kernels, by establishing so-called \emph{composition algorithms}. Together with a result of Fortnow and
	Santhanam~\cite{FortnowS11} this allows to exclude polynomial kernels under the assumption that ${\sf NP} \nsubseteq {\sf coNP} / {\sf poly}$, otherwise implying
	a collapse of the polynomial hierarchy to its third level~\cite{Yap83}.

\section{NP-hardness, polynomial cases, and exact exponential algorithm}
\label{sec:np}

In this section we focus on the classical complexity of the \textsc{$d$-Cut} problem, and on exact exponential algorithms. Namely, we provide the \NP-hardness result in Section~\ref{sec:NP-hard}, the polynomial algorithm for graphs of bounded degree in Section~\ref{sec:poly-algo}, and
a simple exact exponential algorithm in Section~\ref{sec:exact-algo}.

\subsection{NP-hardness for regular graphs}
\label{sec:NP-hard}

Before stating our \NP-hardness result, we need some definitions and observations.

\begin{definition}
    A set of vertices $X \subseteq V(G)$ is said to be \emph{monochromatic} if, for any $d$-cut $(A, B)$ of $G$, $X \subseteq A$ or $X \subseteq B$.
\end{definition}

\begin{observation}
    \label{obs:mono_bipartite}
    For fixed $d \geq 1$, the graph $K_{d+1, 2d+1}$ is monochromatic.
    Moreover, any vertex with $d+1$ neighbors in a monochromatic set is monochromatic.
\end{observation}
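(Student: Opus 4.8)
The plan is to prove both statements by elementary counting built directly on the defining constraint of a $d$-cut, namely that every vertex has at most $d$ neighbours on the opposite side. For the first claim, write $G = K_{d+1,2d+1}$ with sides $P$ and $Q$, where $|P| = d+1$ and $|Q| = 2d+1$, and suppose towards a contradiction that $G$ admits a $d$-cut $(A,B)$, that is, that $V(G)$ is not monochromatic. I would first argue that $P$ cannot be split by the cut: if there were vertices $p \in P \cap A$ and $p' \in P \cap B$, then, since each of them is adjacent to all of $Q$, the crossing-edge constraint applied to $p$ forces $|Q \cap B| \le d$ and applied to $p'$ forces $|Q \cap A| \le d$, whence $|Q| = |Q \cap A| + |Q \cap B| \le 2d < 2d+1$, a contradiction.

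Thus $P$ lies entirely on one side, say $P \subseteq A$ (the case $P \subseteq B$ being symmetric). Now every vertex $q \in Q$ is adjacent to all $d+1$ vertices of $P$, so if $q$ were placed in $B$ it would have $|P| = d+1 > d$ neighbours across the cut, violating the $d$-cut condition. Hence $Q \subseteq A$ as well, forcing $B = \emptyset$ and contradicting that $(A,B)$ is a genuine cut. Therefore $K_{d+1,2d+1}$ admits no $d$-cut, which is exactly the assertion that the graph is monochromatic.

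For the second claim, let $X$ be a monochromatic set and let $v$ be a vertex with $\dgr_X(v) \ge d+1$; the goal is to show that $v$ is always forced onto the same side as $X$, i.e.\ that $X \cup \{v\}$ is monochromatic. Fix any $d$-cut $(A,B)$. By monochromaticity of $X$ we may assume $X \subseteq A$. If $v \in B$, then all of its at least $d+1$ neighbours in $X$ would be crossing neighbours, so $v$ would have more than $d$ neighbours across the cut, a contradiction; hence $v \in A$, and the symmetric argument handles $X \subseteq B$.

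I do not expect a genuine obstacle here, as both proofs reduce to the inequality $2d < 2d+1$ together with a direct application of the degree bound. The only points requiring care are: (i) reading a \emph{vertex} being monochromatic as the vertex being forced to join the monochromatic set, so that the statement is non-trivial (a singleton is vacuously monochromatic); and (ii) choosing the correct side on which to run the counting argument in the first claim, since splitting the small side $P$ is what produces the contradiction, after which the large side $Q$ is pinned down by the high degree of its vertices into $P$.
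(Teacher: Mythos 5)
Your proof is correct, and since the paper states this observation without any accompanying proof, your pigeonhole argument (splitting the small side forces $|Q| \le 2d < 2d+1$, after which the large side is pinned by degree $d+1$) is precisely the intended justification behind the choice of part sizes. One remark worth making: because your two counting steps only use edges internal to the copy together with the global crossing-degree bound, the argument applies verbatim when $K_{d+1,2d+1}$ sits inside a larger graph $G$, which is the form in which the observation is actually invoked in the spool construction and in Theorem~\ref{thm:regular_nph}.
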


\begin{definition}[Spool]
    For $n,d \geq 1$, a \emph{$(d, n)$-spool}  is the graph obtained from $n$  copies of $K_{d+1, 2d+2}$ such that, for every $1 \leq i \leq n$, one vertex of degree $d+1$ of the $i$-th copy is identified with one vertex of degree $d+1$ of the $(i+1 \mod n)$-th copy, so that the two chosen vertices in each copy are distinct. 
    The \emph{exterior} vertices of a copy are those of degree $d+1$ that are not used to interface with another copy.
    The \emph{interior} vertices of a copy are those of degree $2d+2$ that do not interface with another copy.
\end{definition}

An illustration of a $(2,3)$-spool is shown in Figure~\ref{fig:spool}.

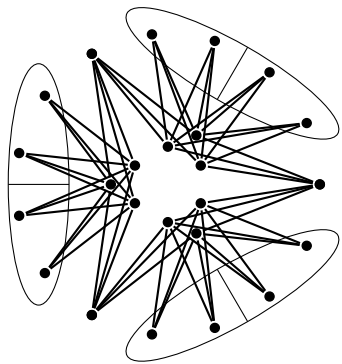
\begin{figure}[!htb]
    \centering
    \begin{tikzpicture}[scale=1]
            \GraphInit[unit=3,vstyle=Simple]
            \SetVertexSimple[Shape=circle, FillColor=black, MinSize=2pt]
            \tikzset{VertexStyle/.append style = {inner sep = \inners, outer sep = \outers}}
            \SetVertexNoLabel
            \foreach \x in {0,1,2} {
                \pgfmathtruncatemacro{\med}{(\x)*120 + 30}
                \begin{scope}[rotate=-\med]
                    \draw (0,1.7) ellipse (1.6cm and 0.4cm);
                    \draw (0,1.3) -- (0,2.1);
                \end{scope}
                \foreach \y in {0,1,2,3,4,5} {
                    \pgfmathtruncatemacro{\ang}{\x * 120 + \y * 24}
                    \Vertex[a=\ang, d=2]{i\x\y}
                }
                \foreach \y in {0,1,2} {
                    \pgfmathtruncatemacro{\ang}{\x * 120 + \y * 30 + 30}
                    \pgfmathsetmacro{\bla}{0.5 + mod(\y,2) *0.25}
                    \Vertex[a=\ang, d=\bla]{o\x\y}
                    \foreach \z in {0,1,2,3,4,5} {
                        \Edge(i\x\z)(o\x\y)
                    }
                }
            }
    \end{tikzpicture}
    \caption{\centering A $(2,3)$-spool. Circled vertices are exterior vertices.\label{fig:spool}}
\end{figure}

\begin{observation}
    For fixed $d\geq 1$, a $(d, n)$-spool is monochromatic.
\end{observation}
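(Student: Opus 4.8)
The plan is to show first that the vertex set of each individual $K_{d+1,2d+2}$ copy is monochromatic, and then to glue these monochromatic sets together along the interface vertices shared by consecutive copies, using the elementary fact that the union of two monochromatic sets with a common vertex is again monochromatic. This last fact is immediate: for any $d$-cut $(A,B)$, each of the two sets lies entirely in one part, and since they share a vertex they lie in the \emph{same} part, so their union does too. This observation is the combinatorial engine of the whole proof.

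To handle a single copy $C_i$, write $P_i$ for its $d+1$ vertices of degree $2d+2$ (the small side of the complete bipartite graph) and $Q_i$ for its $2d+2$ vertices of degree $d+1$ inside the copy (the large side). Deleting any single vertex $q^\ast \in Q_i$ leaves an induced $K_{d+1,2d+1}$, which is monochromatic by Observation~\ref{obs:mono_bipartite}; it is worth checking here that the monochromaticity of $K_{d+1,2d+1}$ relies only on the edges internal to the copy, so it continues to hold when the copy sits inside the larger spool, where each vertex has at most $d$ crossing neighbors in total and hence at most $d$ crossing neighbors within the copy. The deleted vertex $q^\ast$ has exactly $d+1$ neighbors, all of them in $P_i$, which is contained in this monochromatic set; so the \say{moreover} part of Observation~\ref{obs:mono_bipartite} forces $q^\ast$ onto the same side as $P_i$, and therefore $P_i \cup Q_i$ is monochromatic.

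It remains to chain the copies around the cycle. By the definition of the spool, copies $C_i$ and $C_{i+1}$ (indices modulo $n$) are identified along a common vertex; this vertex is a degree-$(d+1)$ vertex of each copy before identification, hence it belongs to the large side $Q_i$ and to the large side $Q_{i+1}$, and therefore to both monochromatic sets $P_i \cup Q_i$ and $P_{i+1} \cup Q_{i+1}$. Thus consecutive copies overlap, and applying the gluing fact repeatedly around the cycle shows that $\bigcup_{i=1}^{n}(P_i \cup Q_i)$, which is exactly the vertex set of the spool, is monochromatic.

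The point that most needs care, rather than a genuine obstacle, is the bookkeeping of the interface vertices: one must confirm that each identified vertex really lies on the large side of both copies it joins (so that it acquires degree $2d+2$ after identification, as required by the definition) and hence lies in both of the corresponding monochromatic sets. This is precisely what makes the successive overlaps nonempty and lets the cyclic chaining close up; everything else reduces to the two statements already packaged in Observation~\ref{obs:mono_bipartite}.
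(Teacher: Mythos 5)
Your proof is correct and takes essentially the same approach as the paper: both derive monochromaticity from Observation~1 applied to $K_{d+1,2d+1}$ sub-copies together with its second clause (a vertex with $d+1$ neighbors in a monochromatic set is itself monochromatic), and then chain consecutive copies through their shared degree-$(d+1)$ interface vertices. The only point you gloss over is the degenerate case $n=1$, which the paper treats separately: there the self-identification collapses the single copy so that the large side has only $2d+1$ vertices, and your deletion step would leave a $K_{d+1,2d}$, but the spool is then monochromatic directly from Observation~1.
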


\begin{proof}
    Let $S$ be a $(d, n)$-spool.
    If $n=1$, the observation follows by combining the two statements of Observation~\ref{obs:mono_bipartite}.
    Now let $X, Y \subsetneq S$ be two copies of $K_{d+1,2d+2}$ that share exactly one vertex $v$.
    By Observation~\ref{obs:mono_bipartite}, $X' = X \setminus \{v\}$ and $Y' = Y \setminus \{v\}$ are monochromatic.
    Since $v$ has $d+1$ neighbors in $X'$ and $d+1$ in $Y'$, it follows that $X \cup Y$ is monochromatic. By repeating the same argument for every  two copies of $K_{d+1,2d+2}$ that share exactly one vertex, the observation follows.
\end{proof}

Chv\'atal~\cite{chvatal_matching_cut} proved that \textsc{Matching Cut} is \NP-hard for graphs of maximum degree at least four. In the next theorem, whose proof is inspired by the reduction of Chv\'atal~\cite{chvatal_matching_cut}, we prove the \NP-hardness of \pname{$d$-cut} for $(2d+2)$-regular graphs. In particular, for $d=1$ it implies the \NP-hardness of \pname{Matching Cut} for $4$-regular graphs, which is stronger than Chv\'atal~\cite{chvatal_matching_cut} hardness for graphs of maximum degree four.

\begin{theorem}
    \label{thm:regular_nph}
    For every integer $d \geq 1$, \pname{$d$-cut} is \NP-hard even when restricted to $(2d+2)$-regular graphs.
\end{theorem}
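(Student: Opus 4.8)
The plan is to give a polynomial-time many-one reduction from a suitable $\NP$-complete variant of satisfiability. Since a cut is symmetric under swapping $A$ and $B$, a symmetric source problem such as monotone \textsc{Not-All-Equal 3-Sat} is the natural choice, with the two sides of the cut playing the roles of the two truth values and ``not all equal'' mirroring the fact that a cut must have both sides nonempty. Membership in $\NP$ is immediate, as one may guess a bipartition $(A,B)$ and verify in polynomial time that every vertex has at most $d$ neighbors on the opposite side; hence the entire effort goes into the hardness direction and, crucially, into enforcing $(2d+2)$-regularity.

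The building blocks are the monochromatic gadgets already established: every spool is monochromatic, and its only vertices of degree less than $2d+2$ are the exterior vertices, each of which has degree exactly $d+1$ and therefore needs precisely $d+1$ additional incident edges to become $(2d+2)$-regular. This is the crucial arithmetic of the construction. If an exterior vertex $u$ lies in, say, $A$, then its $d+1$ spool-internal edges stay within $A$ (by monochromaticity), so all of its crossing edges come from the $d+1$ new edges; since $(A,B)$ is a $d$-cut, at most $d$ of these may cross, and hence \emph{at least one} new edge must stay on $u$'s own side. I would exploit exactly this one-unit slack: represent each variable $x_i$ by a spool $S_i$, whose forced side encodes the truth value of $x_i$, and wire the exterior vertices of the $S_i$ into clause gadgets so that, for a clause $c$ on variables $x_i,x_j,x_k$, the gadget admits a valid local distribution of crossing edges if and only if the three incident variable-spools are not all on the same side, i.e.\ if and only if $c$ is satisfied in the not-all-equal sense. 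The gadget is designed to be symmetric under $A \leftrightarrow B$, matching the symmetry of the source problem, and its size scales with $d$ so that the budget of ``at most $d$ crossing, at least one staying home'' at each exterior vertex encodes the ternary constraint faithfully.

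To obtain a graph that is exactly $(2d+2)$-regular, I would complete the degrees of all remaining deficient vertices---principally the exterior vertices not already saturated by clause wiring---by attaching them to further padding spools. Because every spool is monochromatic and any vertex with $d+1$ neighbors in a monochromatic set is itself monochromatic (Observation~\ref{obs:mono_bipartite}), these padding attachments can be arranged to force the padding gadgets onto prescribed sides \emph{without} creating any new crossing edges, and in particular without introducing spurious $d$-cuts. One then argues the two directions separately: a satisfying not-all-equal assignment yields a side for each $S_i$ under which every exterior vertex keeps at least one edge at home, producing a genuine $d$-cut that is nontrivial because some clause has both a true and a false variable; conversely, any $d$-cut of the constructed graph induces, by monochromaticity of the spools, a consistent truth assignment in which the crossing-edge budget at each clause gadget certifies that no clause is all-equal.

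The main obstacle is the tension between \emph{regularity} and \emph{correctness}: every vertex must have degree exactly $2d+2$, yet the reduction must leave precisely the intended one-bit degree of freedom per variable and nothing more. Concretely, the delicate points are (i) designing the clause gadget so that the ``$d+1$ edges to place, at most $d$ of them crossing'' budget at the exterior vertices encodes the not-all-equal constraint exactly for every fixed $d$, and (ii) verifying that the padding spools, added solely to fix degrees, are provably forced to be monochromatic and contribute no new crossing edge, so that the final graph has a $d$-cut if and only if the original formula is satisfiable. Checking the degree bookkeeping---that the extra edges supplied by variable, clause, and padding gadgets sum to $2d+2$ at every vertex---is routine but must be carried out with care, as must the finitely many small cases.
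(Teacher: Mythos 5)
Your overall strategy is the same as the paper's: the source problem you pick (monotone \textsc{Not-All-Equal 3-Sat}) is just a reformulation of the paper's \textsc{3-Uniform Hypergraph Bicoloring}, you use spools as the monochromatic building blocks, and you isolate the same key arithmetic (each exterior vertex has degree $d+1$, needs exactly $d+1$ new edges, and can afford at most $d$ crossing ones). The problem is that the two components on which the whole reduction stands --- the clause gadget and the degree-completing padding --- are never constructed; they are only asserted to be designable. In a hardness proof of this kind the gadget \emph{is} the proof, and the two difficulties you explicitly defer are precisely where the paper spends its effort, so this must be counted as a genuine gap rather than a complete argument.

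Concretely, two problems hide inside your deferrals. First, the padding scheme is circular: padding spools are themselves spools, so their exterior vertices carry the same deficiency of $d+1$ edges, and attaching new spools to cure old deficiencies creates new ones. You cannot escape by saturating a padding spool internally, since it would then become a separate connected component and $(\text{padding spool}, \text{rest})$ would be a $d$-cut with no crossing edges, making every instance a YES instance. The paper avoids padding altogether: it uses a closed system of three gadget types (vertex spools, two global ``color'' spools, and hyperedge gadgets) whose mutual wiring saturates every degree exactly, with nothing left over. Second, and more importantly, your converse direction must exclude $d$-cuts in which all variable spools lie on one side while some clause or auxiliary vertices supply the other, nonempty side --- otherwise an all-equal assignment could be ``rescued'' by gadget vertices. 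The paper's solution is a global anchor that your sketch lacks: the two color spools, one per truth value, are tied to every vertex gadget by the edges $s(v^*,i)c_i(v,3-i)$, so that (a) in any $d$-cut the two color spools must lie on opposite sides, and (b) a monochromatic hyperedge forces, through the hyperedge gadget, both color spools onto the same side, whence the entire graph is monochromatic, contradicting that a cut has two nonempty sides. Without this anchor, or an equally explicit substitute, your claim that ``the crossing-edge budget at each clause gadget certifies that no clause is all-equal'' is not established, and the reduction does not go through as stated.
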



\begin{proof}
    Our reduction is from the \pname{3-Uniform Hypergraph Bicoloring} problem, which is \NP-hard; see~\cite{lovasz_hypergraph}. 

    \probl{3-Uniform Hypergraph Bicoloring}{A hypergraph $\mathcal{H}$ with exactly three vertices in each hyperedge.}{Can we 2-color $V(\mathcal{H})$ such that no hyperedge is monochromatic?}

    Throughout this proof, $i$ is an index representing a color, $j$ and $k$ are redundancy indices used to increase the degree of some sets of vertices, and $\ell$ and $r$ are indices used to refer to separations of sets of exterior vertices.

    Given an instance $\mathcal{H}$ of \pname{3-Uniform Hypergraph Bicoloring}, we proceed to construct a $(2d+2)$-regular instance $G$ of \pname{$d$-Cut} as follows. For each vertex $v \in V(\mathcal{H})$, add a $(d,4\dgr(v) + 1)$-spool to $G$.
    Each set of exterior vertices receives an (arbitrarily chosen) unique label from the following types: $S(v^*)$ and $S(v, e, i, j)$, such that $i,j \in [2]$ and $e \in E(\mathcal{H})$ with $v \in e$.
    Separate each of the labeled sets into two parts of equal size (see Figure~\ref{fig:spool}).
    For the first type, we denote the sets by $S(v^*, i)$, $i \in [2]$; for the second type, by $S_{\ell}(v, e, i, j)$, $\ell \in [2]$.
    For each set $S(v^*, i)$, we choose an arbitrary vertex and label it with $s(v^*, i)$.
    To conclude the construction of vertex gadgets, add every edge between $S_1(v, e, i, j)$ and $S_2(v, e, i, j)$, and form a perfect matching between $S(v^*, 1) \setminus \{s(v^*, 1)\}$ and $S(v^*, 2) \setminus \{s(v^*, 2)\}$.
    Note that all inner vertices of these spools have degree $2d+2$, every vertex labeled $s(v^*, i)$ has $d+1$ neighbors, every other vertex in $S(v^*, i)$ has $d+2$, and every vertex in $S(v, e, i, j)$ has degree equal to $2d+1$.
    For an example of the edges between exterior vertices of the same vertex gadget, see Figure~\ref{fig:vert_relations}.

    \begin{figure}[!htb]
        \centering
        \hspace*{-0.2cm}\begin{tikzpicture}[scale=1]
            \begin{scope}[shift={(0cm,0)}]
                \GraphInit[unit=3,vstyle=Normal]
                \SetVertexNormal[Shape=circle, FillColor=black, MinSize=2pt]
                \tikzset{VertexStyle/.append style = {inner sep = \inners, outer sep = \outers}}
                \pgfmathtruncatemacro{\zero}{0}
                \pgfmathtruncatemacro{\one}{1}
                \pgfmathtruncatemacro{\six}{6}
                    \foreach \w in {1} {
                        \foreach \x in {0,1} {
                            \foreach \y in {1,2,3,4,5,6} {
                                \pgfmathsetmacro{\bla}{(6.0 + 12.0/7.0) * \x + 5.0/7.0 * \y}
                                \pgfmathsetmacro{\bly}{-3 * \w}
                                \pgfmathtruncatemacro{\wo}{\w + 1}
                                \ifnum\x=\zero
                                    \Vertex[x=\bla, y=\bly, NoLabel]{e\w\x\y}
                                \else
                                    \ifnum\y=\one
                                        \Vertex[x=\bla, y=\bly, Math, LabelOut, Ldist=-4pt,Lpos=135,L={s(v^*, 1)}]{e\w\x\y}
                                    \else
                                        \ifnum\y=\six
                                          \Vertex[x=\bla, y=\bly, Math, LabelOut, Ldist=1pt,Lpos=0,L={s(v^*, 2)}]{e\w\x\y}
                                        \else
                                            \Vertex[x=\bla, y=\bly, NoLabel]{e\w\x\y}
                                        \fi
                                    \fi
                                \fi
                            }
                            \pgfmathsetmacro{\bla}{2.5 + (6.0 + 12.0/7.0) * \x}
                            \pgfmathsetmacro{\blam}{\bla - 1.2}
                            \pgfmathsetmacro{\blap}{\bla + 1.2}
                            \ifnum\w=\zero
                                \draw (\bla,-0.7) -- (\bla,1.2);
                            \else
                                \draw (\bla,-4.2) -- (\bla,-2.3);
                            \fi
                            \ifnum\w=\zero
                                \ifnum\x=\zero
                                    \node at (\blam,1) {$S_1(v, e, i, 1)$};
                                    \node at (\blap,1) {$S_2(v, e, i, 1)$};
                                \else
                                    \node at (\blam,1) {$S_1(v^*, 1)$};
                                    \node at (\blap,1) {$S_2(v^*, 1)$};
                                \fi
                            \else
                                \ifnum\x=\zero
                                    \node at (\blam,-4) {$S_1(v, e, i, 2)$};
                                    \node at (\blap,-4) {$S_2(v, e, i, 2)$};
                                \else
                                    \node at (\blam,-4) {$S(v^*, 1)$};
                                    \node at (\blap,-4) {$S(v^*, 2)$};
                                \fi
                            \fi
                        }
                    \foreach \z in {4,5,6} {
                        \Edge[style = bend left](e\w01)(e\w0\z)
                        \Edge[style = bend right](e\w02)(e\w0\z)
                    }
                    \Edge(e\w03)(e\w04)
                    \Edge[style = bend left](e\w03)(e\w05)
                    \Edge[style = bend left](e\w03)(e\w06)
                    \Edge[style = bend left](e\w12)(e\w15)
                    \Edge(e\w13)(e\w14)
                }
            \end{scope}
        \end{tikzpicture}
        \caption{\centering Relationships between exterior vertices of a vertex gadget ($d=3$).\label{fig:vert_relations}}
    \end{figure}
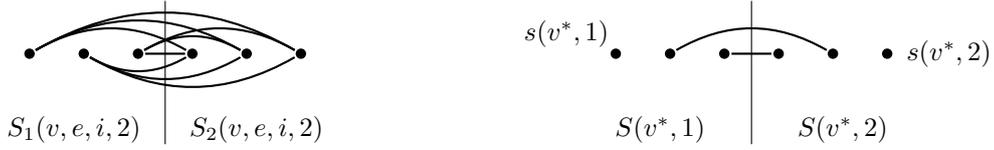

    For each color $i \in [2]$, add a $(d, n + 2m)$-spool to $G$, where $n = |V(\mathcal{H})|$ and $m = |E(\mathcal{H})|$.
    Much like the exterior vertices of the vertex gadgets, we attribute unique labels: $C(v, i)$, for each $v \in V(\mathcal{H})$, and $C(e, i, j)$, for each $e \in E(\mathcal{H})$ and $j \in [2]$.
    Now, split the remaining vertices of each labeled set into two equal-sized parts $C_1(\cdot), C_2(\cdot)$ and label one vertex of each $C_{\ell}(e, i, j)$ with the label $c_{\ell}(e, i, j)$ and one of each $C_{\ell}(v, i)$ with $c_{\ell}(v,i)$.
    To conclude, add all edges from $c_{\ell}(v, i)$ to $C_{\ell}(v, i)$, add the edge $c_{\ell}(v, i)c_{3-\ell}(v, i)$, make each $C_{\ell}(e, i, j) \setminus \{c_{\ell}(e, i, j)\}$ into a clique, and, between $C_{\ell}(e, i, j) \setminus \{c_{\ell}(e, i, j)\}$ and $C_r(e, i, k) \setminus \{c_r(e, i, k)\}$, add edges to form a perfect matching, for $\ell,j,r,k \in [2]$.
    That is, each $C_{\ell}(e, i, j)$ forms a perfect matching with three other sets of exterior vertices.
    So far, each $c_{\ell}(v, i)$ has degree $(d+1) + (d-1) + 1 = 2d+1$, other vertices of $C_{\ell}(v, i)$ have degree $d+2$, each vertex in $C_{\ell}(e, i, j) \setminus \{c_{\ell}(e, i, j)\}$ has degree $(d + 1) + (d - 2) + 3 = 2d + 2$, and each vertex labeled $c_{\ell}(e, i, j)$ has degree $d + 1$.

    We now add edges between vertices of different color gadgets.
    In particular, we add every edge between $C_1(v, 2) \setminus \{c_1(v, 2)\}$ and $C_2(v, 1) \setminus \{c_2(v, 1)\}$.
    This increases the degree of these vertices to $2d+1$.
    An example when $d=3$ is illustrated in Figure~\ref{fig:color_relations}.

    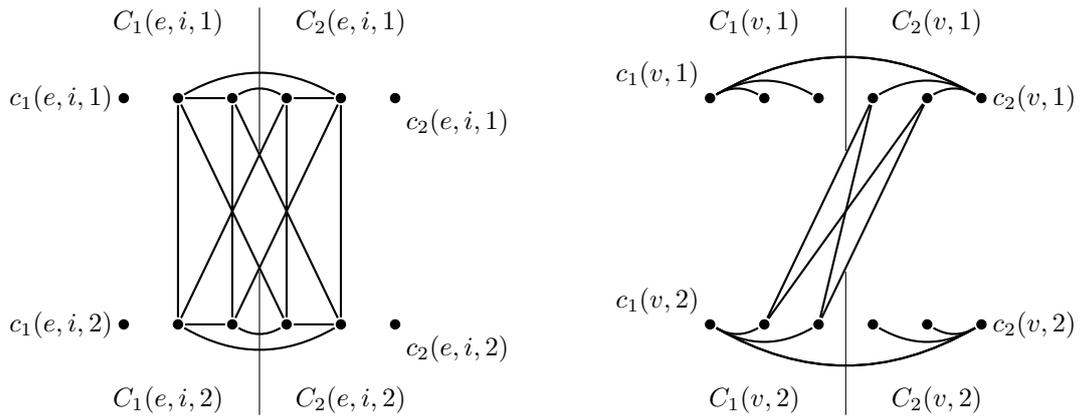
\begin{figure}[!htb]
        \centering
        \hspace*{-1cm}\begin{tikzpicture}[scale=1]
            \begin{scope}[shift={(0cm,0)}]
                \GraphInit[unit=3,vstyle=Normal]
                \SetVertexNormal[Shape=circle, FillColor=black, MinSize=2pt]
                \tikzset{VertexStyle/.append style = {inner sep = \inners, outer sep = \outers}}
                \pgfmathtruncatemacro{\zero}{0}
                \pgfmathtruncatemacro{\one}{1}
                \pgfmathtruncatemacro{\six}{6}
                    \foreach \w in {0, 1} {
                        \foreach \x in {0,1} {
                            \foreach \y in {1,2,3,4,5,6} {
                                \pgfmathsetmacro{\bla}{(6.0 + 12.0/7.0) * \x + 5.0/7.0 * \y}
                                \pgfmathsetmacro{\bly}{-3 * \w}
                                \pgfmathtruncatemacro{\wo}{\w + 1}
                                \ifnum\x=\zero
                                   \ifnum\y=\one
                                    \Vertex[x=\bla, y=\bly, Math, LabelOut, Lpos=180, Ldist=-2pt, L={c_1(e,i,\wo)}]{e\w\x\y}
                                   \else
                                       \ifnum\y=\six
                                           \Vertex[x=\bla, y=\bly, Math, LabelOut, Lpos=-45, Ldist=-2pt, L={c_2(e,i,\wo)}]{e\w\x\y}
                                       \else
                                        \Vertex[x=\bla, y=\bly, NoLabel]{e\w\x\y}
                                       \fi
                                    \fi
                                \else
                                    \ifnum\w=\one
                                       \ifnum\y=\one
                                        \Vertex[x=\bla, y=\bly, Math, LabelOut, Lpos=135, Ldist=-2pt, L={c_1(v,2)}]{e\w\x\y}
                                       \else
                                           \ifnum\y=\six
                                               \Vertex[x=\bla, y=\bly, Math, LabelOut, Lpos=0, Ldist=-2pt, L={c_{2}(v, 2)}]{e\w\x\y}
                                           \else
                                            \Vertex[x=\bla, y=\bly, NoLabel]{e\w\x\y}
                                           \fi
                                        \fi
                                    \else
                                        \ifnum\y=\one
                                        \Vertex[x=\bla, y=\bly, Math, LabelOut, Lpos=135, Ldist=-2pt, L={c_1(v,1)}]{e\w\x\y}
                                       \else
                                           \ifnum\y=\six
                                               \Vertex[x=\bla, y=\bly, Math, LabelOut, Lpos=0, Ldist=-2pt, L={c_{2}(v, 1)}]{e\w\x\y}
                                           \else
                                            \Vertex[x=\bla, y=\bly, NoLabel]{e\w\x\y}
                                           \fi
                                        \fi
                                    \fi
                                \fi
                            }
                            \ifnum\x=\zero
                                \Edge(e\w\x2)(e\w\x3)
                                \Edge(e\w\x4)(e\w\x5)
                                \ifnum\w=\zero
                                    \Edge[style=bend left](e\w\x2)(e\w\x5)
                                    \Edge[style=bend left](e\w\x3)(e\w\x4)
                                \else
                                    \Edge[style=bend right](e\w\x2)(e\w\x5)
                                    \Edge[style=bend right](e\w\x3)(e\w\x4)
                                \fi
                            \else
                                \newcommand{\bla}{right}
                                \ifnum\w=\zero
                                    \renewcommand{\bla}{left}
                                \fi

                                \foreach \y in {2,3} {
                                    \pgfmathtruncatemacro{\z}{7 - \y}
                                    \Edge[style=bend \bla](e\w\x1)(e\w\x\y)
                                    \Edge[style=bend \bla](e\w\x\z)(e\w\x6)
                                    \Edge[style=bend \bla](e\w\x1)(e\w\x6)
                                }
                            \fi
                            \pgfmathsetmacro{\bla}{2.5 + (6.0 + 12.0/7.0) * \x}
                            \pgfmathsetmacro{\blam}{\bla - 1.2}
                            \pgfmathsetmacro{\blap}{\bla + 1.2}
                            \ifnum\w=\zero
                                \draw (\bla,-0.7) -- (\bla,1.2);
                            \else
                                \draw (\bla,-4.2) -- (\bla,-2.3);
                            \fi
                            \ifnum\w=\zero
                                \ifnum\x=\zero
                                    \node at (\blam,1) {$C_1(e, i, 1)$};
                                    \node at (\blap,1) {$C_2(e, i, 1)$};
                                \else
                                    \node at (\blam,1) {$C_1(v, 1)$};
                                    \node at (\blap,1) {$C_2(v, 1)$};
                                \fi
                            \else
                                \ifnum\x=\zero
                                    \node at (\blam,-4) {$C_1(e, i, 2)$};
                                    \node at (\blap,-4) {$C_2(e, i, 2)$};
                                \else
                                    \node at (\blam,-4) {$C_1(v, 2)$};
                                    \node at (\blap,-4) {$C_2(v, 2)$};
                                \fi
                            \fi
                        }
                }
                \foreach \y in {2,3,4,5} {
                    \Edge(e00\y)(e10\y)
                }
                \foreach \y in {2,3} {
                    \pgfmathtruncatemacro{\yp}{\y+2}
                    \pgfmathtruncatemacro{\ym}{7 - \y}
                    \Edge(e00\y)(e10\yp)
                    \Edge(e10\y)(e00\yp)
                    \foreach \z in {4,5} {
                        \Edge(e11\y)(e01\z)
                    }
                }
            \end{scope}
        \end{tikzpicture}
        \caption{\centering Relationships between exterior vertices of color gadgets ($d=3$).\label{fig:color_relations}}
    \end{figure}

    As a first step to connect color gadgets and vertex gadgets, we add every edge between $s(v^*, i)$ and $C_i(v, i)$, every edge between $S(v^*, i) \setminus \{s(v^*, i)\}$ and $C_i(v, i) \setminus \{c_{i}(v, i)\}$, a perfect matching between $S(v^*, i) \setminus \{s(v^*, i)\}$ and $C_{3-i}(v, i) \setminus \{c_{3-i}(v, i)\}$, and the edge $s(v^*, i)c_i(v, 3-i)$.
    Note that this last edge is fundamental, not only because it increases the degrees to the desired value, but also because, if both color gadgets belong to the same side of the cut, every $s(v^*, i)$ will have the same color and, since spools are monochromatic, so would be the {\sl entire} graph, as discussed in more detail below.
    Also note that, aside from $s(v^*, i)$, no other vertex has more than $d$ neighbors outside of its spool.
    The edges described in this paragraph increase the degree of every $s(v^*, i)$ by $d + 1$, yielding a total degree of $2d+2$, of every vertex in $S(v^*, i) \setminus \{s(v^*, i)\}$ to $(d+2) + (d-1) + 1 = 2d+2$, of every vertex in $C_{i}(v, i) \setminus \{c_i(v, i)\}$ to $(d+2) + d = 2d+2$, of every vertex in $C_{i}(v, 3-i) \setminus \{c_i(v, 3-i)\}$ to $(2d+1) + 1 = 2d+2$, and of every $c_{\ell}(v, i)$ to $(2d+1) + 1 = 2d+2$.
    Figure~\ref{fig:color_vertex_relations} gives an example of these connections.

    \begin{figure}[!htb]
        \centering
        \hspace*{-1cm}\begin{tikzpicture}[scale=1]
            \begin{scope}[shift={(0cm,0)}]
                \GraphInit[unit=3,vstyle=Normal]
                \SetVertexNormal[Shape=circle, FillColor=black, MinSize=2pt]
                \tikzset{VertexStyle/.append style = {inner sep = \inners, outer sep = \outers}}
                \pgfmathtruncatemacro{\zero}{0}
                \pgfmathtruncatemacro{\one}{1}
                \pgfmathtruncatemacro{\six}{6}
                    \foreach \w in {0, 1} {
                        \foreach \x in {1} {
                            \foreach \y in {1,2,3,4,5,6} {
                                \pgfmathsetmacro{\bla}{(6.0 + 12.0/7.0) * \x + 5.0/7.0 * \y}
                                \pgfmathsetmacro{\bly}{-3 * \w}
                                \pgfmathtruncatemacro{\wo}{\w + 1}
                                \ifnum\x=\zero
                                \else
                                    \ifnum\w=\one
                                       \ifnum\y=\one
                                        \Vertex[x=\bla, y=\bly, Math, LabelOut, Lpos=135, Ldist=-2pt, L={s(v^*,1)}]{e\w\x\y}
                                       \else
                                           \ifnum\y=\six
                                               \Vertex[x=\bla, y=\bly, Math, LabelOut, Lpos=0, Ldist=-2pt, L={c_{1}(v, 2)}]{e\w\x\y}
                                           \else
                                            \Vertex[x=\bla, y=\bly, NoLabel]{e\w\x\y}
                                           \fi
                                        \fi
                                    \else
                                        \ifnum\y=\one
                                        \Vertex[x=\bla, y=\bly, Math, LabelOut, Lpos=135, Ldist=-2pt, L={c_1(v,1)}]{e\w\x\y}
                                       \else
                                           \ifnum\y=\six
                                               \Vertex[x=\bla, y=\bly, Math, LabelOut, Lpos=0, Ldist=-2pt, L={c_{2}(v, 1)}]{e\w\x\y}
                                           \else
                                            \Vertex[x=\bla, y=\bly, NoLabel]{e\w\x\y}
                                           \fi
                                        \fi
                                    \fi
                                \fi
                            }
                            \ifnum\x=\zero
                                \Edge(e\w\x2)(e\w\x3)
                                \Edge(e\w\x4)(e\w\x5)
                                \ifnum\w=\zero
                                    \Edge[style=bend left](e\w\x2)(e\w\x5)
                                    \Edge[style=bend left](e\w\x3)(e\w\x4)
                                \else
                                    \Edge[style=bend right](e\w\x2)(e\w\x5)
                                    \Edge[style=bend right](e\w\x3)(e\w\x4)
                                \fi
                            \else
                                \newcommand{\bla}{right}
                                \ifnum\w=\zero
                                    \renewcommand{\bla}{left}
                                \fi
                                 \foreach \y in {2,3} {
                                     \pgfmathtruncatemacro{\z}{7 - \y}
                                     \ifnum\w=\zero
                                         \Edge[style=bend \bla](e\w\x1)(e\w\x\y)
                                     \fi
                                    \Edge[style=bend \bla](e\w\x1)(e\w\x6)
                                     \Edge[style=bend \bla](e\w\x\z)(e\w\x6)
                                 }
                            \fi
                            \pgfmathsetmacro{\bla}{2.5 + (6.0 + 12.0/7.0) * \x}
                            \pgfmathsetmacro{\blam}{\bla - 1.2}
                            \pgfmathsetmacro{\blap}{\bla + 1.2}
                            \ifnum\w=\zero
                                \draw (\bla,-0.7) -- (\bla,1.2);
                            \else
                                \draw (\bla,-4.2) -- (\bla,-2.3);
                            \fi
                            \ifnum\w=\zero
                                \ifnum\x=\zero
                                \else
                                    \node at (\blam,1) {$C_1(v, 1)$};
                                    \node at (\blap,1) {$C_2(v, 1)$};
                                \fi
                            \else
                                \ifnum\x=\zero
                                \else
                                    \node at (\blam,-4) {$S(v^*, 1)$};
                                    \node at (\blap,-4) {$C_1(v, 2)$};
                                \fi
                            \fi
                        }
                }
                \Edge(e111)(e011)
                \foreach \y in {1,2,3} {
                    \foreach \z in {2,3} {
                        \Edge(e11\y)(e01\z)
                    }
                }
                \foreach \y in {4,5} {
                    \pgfmathtruncatemacro{\ym}{\y - 2}
                    \Edge(e11\ym)(e01\y)
                    \foreach \z in {4,5} {
                        \Edge(e11\y)(e01\z)
                    }
                }
            \end{scope}
        \end{tikzpicture}
        \caption{\centering Relationships between exterior vertices of color and vertex gadgets ($d=3$).\label{fig:color_vertex_relations}}
    \end{figure}
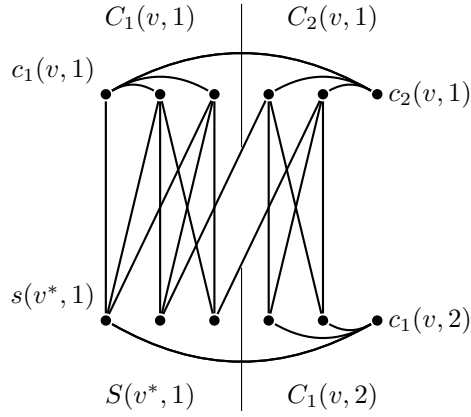

    For the final group of gadgets, namely hyperedge gadgets, for each $\{x, y, z\} \in E(\mathcal{H})$, each color $i$, and each pair $j,\ell \in [2]$, we add one additional vertex $c_{\ell}'(e, i, j)$ adjacent to $c_{\ell}(e, i, j)$, $S_{\ell}(x, e, i, j)$, $S_{\ell}(y, e, i , j)$, and $c_{3-\ell}'(e, i, j)$; finally, we add every edge between $c_{\ell}(e, i, j)$ and $S_{\ell}(z, e, i ,j)$. See Figure~\ref{fig:hyper_relations} for an illustration.
    Note that $c_{\ell}'(e, i, j)$ has degree $2d + 2$; the degree of $c_{\ell}(e, i ,j)$ increased from $d + 1$ to $2d + 2$, and the degree of each vertex of $S_{\ell}(x, e, i, j)$ increased from $2d + 1$ to $2d + 2$.
    This concludes our construction of the $(2d + 2)$-regular graph $G$.

    \begin{figure}[!htb]
        \centering
        \hspace*{-0.25cm}\begin{tikzpicture}[scale=1]
            \begin{scope}[shift={(0cm,0)}]
                \GraphInit[unit=3,vstyle=Normal]
                \SetVertexNormal[Shape=circle, FillColor=black, MinSize=2pt]
                \tikzset{VertexStyle/.append style = {inner sep = \inners, outer sep = \outers}}
                \pgfmathtruncatemacro{\zero}{0}
                \pgfmathtruncatemacro{\one}{1}
                \pgfmathtruncatemacro{\six}{6}
                \foreach \x in {0,1} {
                    \pgfmathtruncatemacro{\sh}{(6.29 + 12.0/7.0) * \x}
                    \begin{scope}[shift={(\sh,0)}]
                        \pgfmathtruncatemacro{\xp}{1 + 1*\x}
                        \pgfmathtruncatemacro{\pe}{180*(\x)}
                        \pgfmathtruncatemacro{\pep}{45 + 90*(1-\x)}
                        \Vertex[x=0, y=1.1, Math, LabelOut, Ldist=-2pt,Lpos=\pe,L={c_\xp(e,i,j)}]{e\x}
                        \Vertex[x=0, y=0.2, Math, LabelOut, Ldist=-2pt,Lpos=\pep,L={c'_\xp(e,i,j)}]{ep\x}
                        \Edge(e\x)(ep\x)
                        \foreach \w in {0,1,2} {
                            \pgfmathsetmacro{\ang}{120*(\w+1)}
                            \begin{scope}[rotate=\ang]
                                \foreach \y in {1,2,3} {
                                    \pgfmathsetmacro{\gap}{5.0/8.0}
                                    \pgfmathsetmacro{\bla}{\gap * \y - 2*\gap}
                                    \pgfmathtruncatemacro{\ym}{\y - 1}
                                    \newcommand{\setname}{z}
                                    \newcommand{\lpos}{270}
                                    \newcommand{\ldist}{13pt}
                                    \ifnum\w>\one
                                        \renewcommand{\lpos}{90}
                                        \renewcommand{\ldist}{0pt}
                                    \fi
                                    \ifnum\w=\zero
                                        \renewcommand{\setname}{x}
                                    \fi
                                    \ifnum\w=\one
                                        \renewcommand{\setname}{y}
                                    \fi
                                    \ifnum\ym=\one
                                        \Vertex[x=\bla, y=1.8, Math, LabelOut, Lpos=\lpos, Ldist=\ldist, L={S_{\xp}(\setname, e, i, j)}]{e\w\x\y}
                                    \else
                                        \Vertex[x=\bla, y=1.8, NoLabel]{e\w\x\y}
                                    \fi
                                    \ifnum\w>\one
                                        \Edge(e\x)(e\w\x\y)
                                    \else
                                        \Edge(ep\x)(e\w\x\y)
                                    \fi
                                }
                            \end{scope}
                        }
                    \end{scope}
                }
                \Edge(ep0)(ep1)
            \end{scope}
        \end{tikzpicture}
        \caption{\centering Hyperedge gadget ($d=3$).\label{fig:hyper_relations}}
    \end{figure}
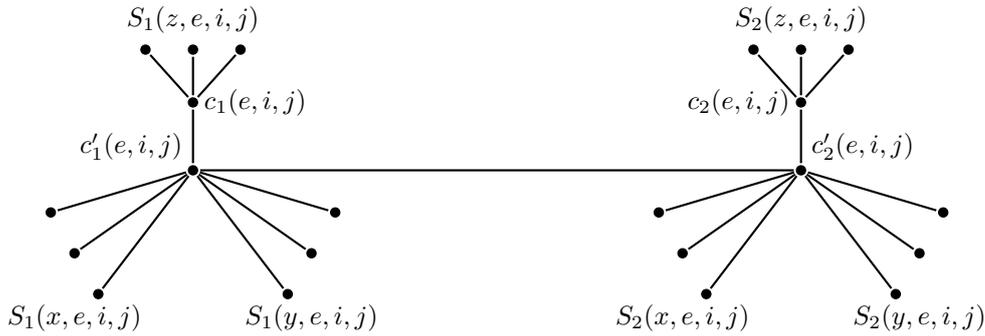

    Now, suppose we are given a valid bicoloring $\varphi$ of $\mathcal{H}$, and our goal is to construct a $d$-cut $(A, B)$ of $G$. Put the gadget of color 1 in $A$ and the other one in $B$.
    For each vertex $v \in V(\mathcal{H})$, if $\varphi(v) = 1$, put the gadget corresponding to $v$ in $A$, otherwise put it in $B$.
    In the interface between these gadgets, no vertex from the color gadgets has more than $d$ neighbors in a single vertex gadget, therefore none violates the $d$-cut property.
    As to the vertices coming from the vertex gadgets, only $s_{\ell}(v^*, i)$ has more than $d$ neighbors outside of its gadget; however, it has $d$ neighbors in the color gadget for color $i$ and only one in color $3-i$.
    Since each color gadget is in a different side of the partition, $s_{\ell}(v^*, i)$ does not violate the degree constraint.
    For each hyperedge $e = \{x, y, z\}$, put $c_{\ell}'(e, i, j)$ in the same set as the majority of its neighbors, this way, it will not violate the property -- note that its other neighbor, $c_{3-\ell}'(e, i , j)$, will be in the same set because it will have the exact same amount of vertices on each side of the partition in its neighborhood.
    So, if $\varphi(x) = \varphi(y) = 1$, $c_{\ell}'(e, i, j) \in A$; however, since $e$ is not monochromatic, $\varphi(z) = 2$, so $c_{\ell}(e, i ,j)$ has at most $d$ neighbors in the other set.
    The case where $\varphi(x) \neq \varphi(y)$ is similar.
    Thus, we conclude that $(A, B)$ is indeed a $d$-cut of $G$.

    Conversely, take a $d$-cut $(A, B)$ of $G$ and construct a bicoloring of $\mathcal{H}$ such that $\varphi(v) = 1$ if and only if the spool corresponding to $v$ is in $A$.
    Suppose that this process results in some hyperedge $e = \{x, y, z\} \in E(\mathcal{H})$ begin monochromatic.
    That is, there is some hyperedge gadget where $S_{\ell}(x, e, i, j)$, $S_{\ell}(y, e, i, j)$, and $S_{\ell}(z, e, i, j)$ are in $A$, which implies that $c_{\ell}'(e, i, j) \in A$ and, consequently, that $c_{\ell}(e, i, j) \in A$ for every $\ell,i,j \in [2]$.
    However, since $c_{\ell}(e, 1, j)$ and $c_{\ell}(e, 2, j)$ are in $A$ and a color gadget is monochromatic, both color gadgets belong to $A$, which in turn implies that every $s(v^*, i)$ has $d+1$ neighbors in $A$ and, therefore, must also be in $A$ by Observation~\ref{obs:mono_bipartite}.
    Moreover, since spools are monochromatic, every vertex gadget is in $A$, implying that the entire graph belongs to $A$, contradicting the hypothesis that $(A, B)$ is a $d$-cut of $G$.
\end{proof}

The graphs constructed by the above reduction are neither planar nor bipartite, but they are regular, a result that we were unable to find in the literature for \pname{Matching Cut}.
Note that every planar graph has a $d$-cut for every $d \geq 5$, so only the cases $d \in \{2,3,4\}$ remain open, as the case $d=1$ is known to be $\NP$-hard~\cite{matching_cut_planar}. 
Concerning graphs of bounded diameter, Le and Le~\cite{matching_cut_diameter} prove the $\NP$-hardness of \textsc{Matching Cut} for graphs of diameter at least three by reducing \textsc{Matching Cut} to itself.  It can be easily seen that the same construction given by Le and Le~\cite{matching_cut_diameter}, but reducing  \pname{$d$-Cut} to itself, also proves the $\NP$-hardness of \pname{$d$-Cut} for every $d \geq 1$. 


\begin{corollary}
    For every integer $d \geq 1$, \pname{$d$-Cut} is $\NP$-hard for graphs of diameter at least three.
\end{corollary}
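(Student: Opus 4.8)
The plan is to prove the corollary by a polynomial reduction from \pname{$d$-Cut} on general graphs to \pname{$d$-Cut} on graphs of diameter at least three, which is exactly the self-reduction strategy of Le and Le~\cite{matching_cut_diameter} recast for arbitrary $d$. Since \pname{$d$-Cut} is \NP-hard by Theorem~\ref{thm:regular_nph}, it suffices to produce, in polynomial time from an arbitrary instance $G$, a graph $G'$ of diameter at least three such that $G$ has a $d$-cut if and only if $G'$ does. I would first observe that we may assume $G$ to be connected with at least two vertices: a disconnected graph trivially admits a $d$-cut (split off one component with no crossing edges), and such degenerate inputs can be sent to fixed yes-/no-instances of diameter three.

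The construction I would use attaches a single monochromatic gadget to $G$. Fix any vertex $v_0 \in V(G)$, take a fresh copy $M$ of $K_{d+1,2d+1}$ with sides $P$ (of size $d+1$) and $Q$ (of size $2d+1$), add $M$ disjointly to $G$, and join $v_0$ to every vertex of $P$; call the result $G'$. By Observation~\ref{obs:mono_bipartite}, $M$ is monochromatic in $G'$ (adding the edges to $v_0$ can only create extra crossing edges and so cannot help split $M$), and since $v_0$ then has exactly $d+1$ neighbors in the monochromatic set $M$, the same observation forces $v_0$ to lie on the side of $M$, so $M \cup \{v_0\}$ is monochromatic.

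The equivalence then splits into two steps. For the forward direction, given a $d$-cut $(A,B)$ of $G$ with, say, $v_0 \in A$, placing all of $M$ in $A$ yields a $d$-cut of $G'$: each vertex of $Q$ sees only $P \subseteq A$, each vertex of $P$ sees only $Q$ and $v_0$, all in $A$, and $v_0$ acquires no new crossing edge. For the converse, take a $d$-cut $(A',B')$ of $G'$ and assume without loss of generality $M \cup \{v_0\} \subseteq A'$. Then $B'$ is disjoint from $M$, hence $B' \subseteq V(G)$, so the restriction $(A' \cap V(G),\, B' \cap V(G))$ has both sides nonempty — it contains $v_0$ on one side and all of $B'$ on the other — and is a $d$-cut of $G$, since crossing edges can only disappear upon restriction.

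Finally I would check the diameter. The only edges leaving $M$ are the $d+1$ edges from $P$ to $v_0$, so every path from $Q$ to $V(G)$ passes through $v_0$. Choosing $x \in Q$ (necessarily a non-neighbor of $v_0$) and a neighbor $w$ of $v_0$ in $G$ (which exists as $G$ is connected with at least two vertices), we get $\mathrm{dist}_{G'}(x,w) = \mathrm{dist}_{G'}(x,v_0) + \mathrm{dist}_{G}(v_0,w) = 2 + 1 = 3$, giving diameter at least three. The only real obstacle is the converse direction: one must ensure that no $d$-cut of $G'$ is \emph{hidden inside the attached gadget}, which would leave all of $V(G)$ monochromatic and break the reduction. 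This is precisely what the monochromaticity of $M \cup \{v_0\}$ rules out, and it is also the reason the gadget is tied to a single vertex $v_0$, so that it enforces diameter at least three without over-constraining $G$ by forcing $G$ itself to be monochromatic.
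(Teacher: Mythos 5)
Your proof is correct, but it is not the route the paper takes: the paper gives no explicit construction at all, instead observing that the self-reduction of Le and Le~\cite{matching_cut_diameter} for \pname{Matching Cut} goes through verbatim when \pname{$d$-Cut} is reduced to itself. Your pendant-gadget construction (attach a fresh $K_{d+1,2d+1}$ to a single vertex $v_0$) is a different, self-contained self-reduction, and every step checks out: Observation~\ref{obs:mono_bipartite} does make $M$ and then $M \cup \{v_0\}$ monochromatic, the two directions of the equivalence are argued correctly (including the non-emptiness of both sides after restriction), and the diameter bound via the cut vertex $v_0$ is right. What the comparison buys each side: your argument needs nothing beyond Observation~\ref{obs:mono_bipartite} and Theorem~\ref{thm:regular_nph}, whereas the paper's one-line proof leans on an external construction. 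On the other hand, your reduction only bounds the diameter \emph{from below} --- the output graph inherits (and enlarges) the diameter of $G$, so its diameter is unbounded --- which proves the corollary exactly as stated (hardness on the class of graphs of diameter at least three) but not the sharper fact that Le--Le's construction delivers for $d=1$, namely hardness already for graphs of diameter \emph{exactly} three. That sharper threshold is the interesting one, since it abuts the polynomial-time algorithm known for diameter at most two, and it is what the paper implicitly imports by citing Le and Le rather than building a pendant gadget like yours.
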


We leave as an open problem to determine whether there exists a  polynomial-time algorithm for \pname{$d$-Cut} for graphs of diameter at most two for every $d \geq 2$, as it is the case for $d=1$~\cite{matching_cut_diameter}.


\subsection{Polynomial algorithm for graphs of bounded degree}
\label{sec:poly-algo}

Our next result is a natural generalization of Chvátal's algorithm~\cite{chvatal_matching_cut} for \pname{Matching Cut} on graphs of maximum degree three.

\begin{theorem}
    \label{thm:small_deg_poly}
    For any graph $G$ and integer $d \geq 1$ such that $\Delta(G) \leq d+2$, it can be decided in polynomial time if $G$ has a $d$-cut. Moreover, for $d=1$ any graph $G$ with $\Delta(G) \leq 3$ and $|V(G)| \geq 8$ has a matching cut, for $d=2$ any graph $G$ with $\Delta(G) \leq 4$ and $|V(G)| \geq 6$ has a $2$-cut, and for $d\geq 3$ any graph $G$ with $\Delta(G) \leq d+2$ has a $d$-cut.
\end{theorem}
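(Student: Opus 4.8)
The plan is to reduce the decision procedure to the existence statements together with a brute-force check on small instances, so the heart of the matter is the ``moreover'' part. I would first dispose of the easy regime by a \emph{singleton cut}: if $\delta(G) \leq d$, pick a vertex $v$ with $\dgr(v) \leq d$ and set $A = \{v\}$, $B = V(G) \setminus \{v\}$. Then $v$ has at most $d$ neighbors across, while every other vertex has at most one neighbor (namely $v$) across, so $(A,B)$ is a $d$-cut whenever $|V(G)| \geq 2$. Hence I may assume $\delta(G) \geq d+1$, i.e. every degree lies in $\{d+1, d+2\}$, which is exactly the regime where the internal-partition machinery applies, since a $d$-cut is precisely an $(a,b)$-internal partition for $a(v) = b(v) = \dgr(v) - d$: indeed $\dgr_A(v) \geq \dgr(v) - d$ is equivalent to $\dgr_B(v) \leq d$.

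For $d \geq 3$ this is immediate. The condition of Stiebitz's theorem, $\dgr(v) \geq a(v) + b(v) + 1 = 2(\dgr(v) - d) + 1$, rearranges to $\dgr(v) \leq 2d - 1$, which holds because $\dgr(v) \leq d+2 \leq 2d-1$ for $d \geq 3$. As $a, b \geq 1$ here, the resulting partition has both parts nonempty and is therefore a genuine $d$-cut, so no lower bound on $|V(G)|$ is needed.

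The case $d = 2$ (degrees in $\{3,4\}$) is where the real work lies, since Stiebitz's bound now fails at degree-$4$ vertices. I would split on the presence of a triangle. If $G$ is triangle-free, Kaneko's strengthening only requires $\dgr(v) \geq a(v) + b(v) = 2(\dgr(v) - 2)$, i.e. $\dgr(v) \leq 4$, which holds, and this yields the $2$-cut. If $G$ contains a triangle $T$, I would try $A = T$: each vertex of $T$ keeps $2$ neighbors inside and hence has at most $\Delta(G) - 2 \leq 2$ neighbors across, so the cut fails only if some outside vertex has $3$ neighbors in $T$, which forces a $K_4$. I would then repeat the idea with $A = K_4$ (each inside vertex now has at most one neighbor across), which fails only if a fifth vertex has three neighbors in the $K_4$, forcing $K_5$ or $K_5 - e$. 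For these last configurations the bound $\Delta(G) \leq 4$ saturates the three mutually-adjacent vertices of degree $4$, so every vertex outside the $5$-set attaches only through the (at most two) non-saturated vertices and thus has at most $2$ neighbors in it; taking $A$ to be this $5$-set is then a $2$-cut provided $B \neq \emptyset$, which is guaranteed by $|V(G)| \geq 6$. For $d = 1$ I would simply invoke Moshi's theorem, which is exactly the stated claim.

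Finally, the polynomial algorithm follows: for $d \geq 3$ answer \YES whenever $|V(G)| \geq 2$; for $d \in \{1,2\}$ answer \YES once $|V(G)|$ meets the threshold, and otherwise brute-force over the $2^{|V(G)|} = \bigO{1}$ bipartitions of the bounded-size instance. I expect the main obstacle to be the $d=2$ triangle case: one must argue that the clique-growing process terminates at a configuration (here $K_5$ or $K_5 - e$) whose only obstruction is having too few vertices, so that the threshold $|V(G)| \geq 6$ precisely rules it out, and one must check throughout that the cited internal-partition theorems deliver two \emph{nonempty} parts, so that the output is a valid cut rather than a trivial bipartition.
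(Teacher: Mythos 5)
Your proof is correct, but it follows a genuinely different route from the paper's. The paper's argument is elementary and self-contained: it branches on a shortest cycle $C$ and shows that $(C, V(G)\setminus C)$ is a $d$-cut except in a few degenerate configurations, handling trees and $d=1$ (via Chv\'atal/Moshi) separately. You instead make explicit the link to internal partitions that the paper only exploits as motivation: after disposing of $\delta(G)\leq d$ with the singleton cut, a $d$-cut is exactly an $(a,b)$-internal partition with $a(v)=b(v)=\dgr(v)-d\geq 1$, so Stiebitz's theorem settles all of $d\geq 3$ in one line, and Kaneko's triangle-free strengthening settles the triangle-free half of $d=2$. Interestingly, the one place where real combinatorial work remains is the same in both proofs: for $d=2$ in the presence of a triangle, both arguments grow $T\to K_4\to R\in\{K_5, K_5-e\}$ and observe that the saturation of the degree-$4$ vertices makes $(R, V(G)\setminus R)$ a $2$-cut once $|V(G)|\geq 6$, with $K_5$ and $K_5-e$ as the genuine exceptions. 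What each approach buys: the paper's proof needs no machinery beyond first principles, while yours is shorter, makes the threshold $\Delta\leq d+2$ conceptually transparent (it is exactly where Stiebitz's hypothesis $\dgr(v)\leq 2d-1$ kicks in for $d\geq 3$), and ties the result to the internal-partition literature already discussed in the introduction --- at the price of invoking two nontrivial external theorems. Two small points to tighten: nonemptiness of both parts does \emph{not} follow from $a,b\geq 1$ (the trivial bipartition $(V(G),\emptyset)$ satisfies the degree constraints whenever $a(v)\leq\dgr(v)$); it is part of what Stiebitz's and Kaneko's theorems guarantee, and fortunately they do guarantee it, since that is precisely their content. Also, the singleton cut and the $d\geq 3$ claim implicitly require $|V(G)|\geq 2$, an edge case the paper's statement glosses over as well.
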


\begin{proof}
    We may assume that $G$ is connected, as otherwise it always admits a $d$-cut. If $G$ is a tree, any edge is a cut edge and, consequently, a $d$-cut is easily found.
    So let $C$ be a shortest cycle of $G$.
    If $d = 1$ we use Chvátal's result~\cite{chvatal_matching_cut} together with the size bound of eight observed by Moshi~\cite{matching_cut_moshi}; hence, we may assume that $d \geq 2$.
    In the case that $V(G) = C$, we may pick any vertex $v$ and note that $(\{v\}, C \setminus \{v\})$ is a $d$-cut.


     Suppose first that $|C| = 3$ and $d = 2$.  If $(C, V(G) \setminus C)$ is a $2$-cut, we are done. Otherwise, there is some vertex $v \notin C$ with three neighbors in $C$ (since by the hypothesis on $\Delta(G)$, every vertex in $C$ has at most two neighbors in $G - C$) and, consequently, $Q := C \cup \{v\}$ induces a $K_4$.
    If $V(G) = Q$, we can arbitrarily partition $Q$ into two sets with two vertices each and get a $2$-cut of $G$.
    Also, if no other $u \notin Q$ has three neighbors in $Q$, $(Q, V(G) \setminus Q)$ is a $2$-cut of $G$.
    If there is such a vertex $u$, let $R := Q \cup \{u\}$. If $V(G) = R$, then clearly $G$ has no $2$-cut. Note that $|Q|=5$, and this will be the only case in the proof where $G$ does not have a $d$-cut. Otherwise, if $V(G) \neq R$, $(R, V(G) \setminus R)$ is a $2$-cut, because no vertex outside of $R$ can be adjacent to more than two vertices in $R$, and we are done.

    If $|C| = 3$ and $d \geq 3$, then clearly $(C, V(G) \setminus C)$ is a $d$-cut, and we are also done.

    Otherwise, that is, if $|C| \geq 4$, we claim that $(C, V(G) \setminus C)$ is always a $d$-cut.
    For $v \in C$, note that $\dgr(v) \leq d + 2$, hence $v$ has at most $d$ neighbors in $G - C$. For $v \in V(G) \setminus C$, if $|C| \geq 5$, necessarily $\dgr_C(v) \leq 1$, as otherwise we would find a cycle in $G$ shorter than $C$, and therefore $(C, V(G) \setminus C)$ is a $d$-cut.
    By a similar argument, if $|C| = 4$, then $\dgr_C(v) \leq 2$, and the theorem follows as we assume that $d \geq 2$.
\end{proof}

Theorems~\ref{thm:regular_nph} and~\ref{thm:small_deg_poly} present a ``quasi-dichotomy'' for $d$-cut on graphs of bounded maximum degree.
Specifically, for $\Delta(G) \in \{d+3, \dots, 2d+1\}$, the complexity of the problem remains unknown.
However, we believe that most, if not all, of these open cases can be solved in polynomial time; see the discussion in Section~\ref{sec:concl}.

\subsection{Exact exponential algorithm}
\label{sec:exact-algo}

To conclude this section, we present a simple exact exponential algorithm which, for every $d \geq 1$, runs in time $\bigOs{c_d^n}$ for some constant $c_d < 2$.
For the case $d=1$, the currently known algorithms~\cite{matching_cut_tcs,matching_cut_ipec} exploit structures that appear to get out of control when $d$ increases, and so has a better running time than the one described below.

When an instance of size $n$ branches into $t$ subproblems of sizes at most $n - s_1, \dots, n - s_t$, respectively, the vector $(s_1, \dots, s_t)$ is called the \tdef{branching vector} of this branching rule, and the unique positive real root of the equation $x^n - \sum_{i \in [t]} x^{n - s_i} = 0$ is called the \tdef{branching factor} of the rule.
The total complexity of a branching algorithm is given by $\bigOs{\alpha^n}$, where $\alpha$ is the largest branching factor among all rules of the algorithm.
For more on branching algorithms, we refer to~\cite{exact_exponential_algorithms}.

\begin{theorem}
    For every fixed integer $d \geq 1$ and $n$-vertex graph $G$, there is an algorithm that solves \pname{$d$-Cut} in time $\bigOs{c_d^n}$, for some constant $1 < c_d < 2$.
\end{theorem}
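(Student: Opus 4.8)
The plan is to design a branching algorithm that builds a $d$-cut $(A,B)$ incrementally, maintaining a partial assignment of vertices to $A$, to $B$, or ``undecided'', and to argue that whenever the algorithm is forced to branch, it can do so on a bounded-size local structure in such a way that the branching vector yields a branching factor strictly below $2$. First I would fix an arbitrary root vertex and place it, without loss of generality, in $A$ (this is safe since the two sides of a cut are symmetric, costing only a factor of $2$ overall, absorbed by the $\bigOs{\cdot}$ notation, or alternatively handled by placing a single vertex in each side). The state of the algorithm is a triple $(A, B, U)$ with $U$ the undecided vertices; the invariant to maintain is that no already-placed vertex violates the $d$-cut constraint, i.e. every $v \in A$ has at most $d$ neighbors currently committed to $B$ and vice versa.

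The core of the argument is the branching rule and its analysis. The key observation is that if a vertex $v \in A$ already has exactly $d$ neighbors in $B$, then every remaining neighbor of $v$ (whether undecided or in $A$) is \emph{forced} into $A$, and such forcing propagates deterministically and in polynomial time; the same holds symmetrically. Hence the only situation requiring a genuine branch is when we pick an undecided vertex $u$ adjacent to some already-placed vertex but no forcing applies. The plan is to branch on the placement of such a boundary vertex $u$: in one branch $u \in A$, in the other $u \in B$. To get a nontrivial branching vector I would exploit the degree of $u$ together with the constraints: after committing $u$ to one side, I examine its neighbors and count how many further vertices become forced or constrained, so that each branch fixes strictly more than one vertex. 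Concretely, since the interesting instances have vertices of degree bounded in terms of $d$ (the truly high-degree or low-degree regimes are handled separately, e.g. via Theorem~\ref{thm:small_deg_poly} when $\Delta(G) \leq d+2$), one can ensure that placing $u$ on a side forces at least one additional neighbor, giving a branching vector of the form $(a, b)$ with $a, b \geq 2$ (for instance $(2,2)$, whose branching factor is $\sqrt{2} < 2$, or an asymmetric improvement). The constant $c_d < 2$ then follows as the largest branching factor over all rules.

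The main obstacle I anticipate is guaranteeing that \emph{every} branch fixes more than one vertex uniformly, independently of $d$: as $d$ grows, the constraints relax, so forcing becomes rarer and a vertex can be placed on either side without immediately constraining many neighbors, which threatens to collapse the branching vector back to $(1,1)$ and hence $c_d \to 2$. The way I would overcome this is to branch not on a single vertex but on a small connected \emph{set} of undecided vertices of bounded size (depending only on $d$), choosing the set so that at least one endpoint already touches the committed part of the cut; enumerating the $\bigO{1}$ ways to split this bounded set, one discards splits that immediately violate the $d$-cut constraint on the boundary, and a counting argument shows at least one split is infeasible, so fewer than $2^{|S|}$ branches survive and the branching factor is bounded away from $2$ by a margin depending on $d$. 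This is precisely why the resulting constant $c_d$ degrades toward $2$ as $d$ increases but stays strictly below it for each fixed $d$, matching the statement; one then verifies that each branching rule runs in polynomial time and that the recursion tree has the claimed size, completing the proof.
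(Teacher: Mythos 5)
Your overall framework --- maintain a tripartition $(A,B,D)$ with $D$ undecided, seed one vertex on each side, and branch on all bipartitions of a bounded-size set $X$ of undecided neighbors of an already-committed vertex, arguing that not all $2^{|X|}$ branches are ``full price'' --- is exactly the paper's approach. The gap is in the step you use to bound the branching factor away from $2$. Your final claim, that one can always choose the set $S$ so that ``a counting argument shows at least one split is infeasible,'' is false in general. Take $d=2$ and $G$ a long path, with $A$ containing one endpoint and everything else undecided: every vertex has degree at most $2 \leq d$, so \emph{every} bipartition of \emph{every} connected set of undecided vertices is feasible, and no placement ever forces any other vertex. In such a configuration neither of your two mechanisms (forcing, infeasibility) yields anything, and branching collapses back to factor $2$ --- precisely the failure mode you worried about, which your proposed fix does not actually avert.

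The missing idea is that in exactly these situations no branching is needed at all: if no committed vertex $v \in A$ satisfies $\dgr_B(v) + \dgr_D(v) \geq d+1$ (and symmetrically for $B$), and no undecided vertex has $d+1$ committed neighbors on one side, then $(A \cup D, B)$ is already a $d$-cut and the algorithm halts. This base-case observation is what lets the paper branch \emph{only} when some committed vertex $v$ is ``saturated,'' i.e.\ $\dgr_B(v) + \dgr_D(v) \geq d+1$, and in that case a dichotomy holds. If $\dgr_B(v) \geq 1$, pick $X \subseteq N_D(v)$ of size $d+1-\dgr_B(v) \leq d$: the split sending all of $X$ to $B$ violates the constraint at $v$, so only $2^{|X|}-1$ branches survive (branching factor $(2^d-1)^{1/d} < 2$). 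If instead $\dgr_B(v) = 0$, so $\dgr_D(v) \geq d+1$, then no split of a size-$d$ set $X \subseteq N_D(v)$ is infeasible --- here your infeasibility claim breaks --- but the all-opposite split saturates $v$ and thereby forces at least one vertex of $N_D(v)\setminus X$ to $v$'s side, giving the branching vector $\{d+1\} \times \{d\}^{2^d-1}$, whose root is also strictly below $2$. So the correct argument needs \emph{both} of your mechanisms, each deployed in its own case, \emph{plus} the structural fact that whenever neither case applies the remaining undecided vertices can all be dumped on one side; without that last fact the branching factor cannot be bounded away from $2$.
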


\begin{proof}
    Our algorithm takes as input $G$ and outputs a $d$-cut $(A, B)$ of $G$, if it exists.
    To do so, we build a branching algorithm that maintains, at every step, a tripartition of $V(G) = A \dot\cup B \dot\cup D$ such that $(A, B)$ is a $d$-cut of $G \setminus D$.
    The central idea of our rules is to branch on small  sets of vertices (namely, of size at most $d+1$) at each step such that either at least one bipartition of the set forces some other vertex to choose a side of the cut, or we can conclude that there is at least one bipartition that violates the $d$-cut property.
    First, we present our reduction rules, which are applied following this order at the beginning of each recursive step.

    \begin{itemize}
        \item[R1] If $(A, B)$ violates the $d$-cut property, output $\NOi$.
        \item[R2] If $D = \emptyset$, we have a $d$-cut of $G$. Output $(A,B)$.
        \item[R3] If there is some $v \in D$ with $\dgr_A(v) \geq d + 1$ and $\dgr_B(v) \geq d+1$, output $\NOi$.
        \item[R4] While there is some $v \in D$ with $\dgr_A(v) \geq d + 1$ (resp. $\dgr_B(v) \geq d+1$), add $v$ to $A$ (resp. $B$).
    \end{itemize}

    Our branching rules, and their respective branching vectors, are listed below.

    \begin{itemize}
        \item[B1] If there is some $v \in A \cup B$ with $\dgr_D(v) \geq d+1$, choose a set $X \subseteq N_D(v)$ of size $d$ and branch on all possible possible bipartitions of $X$.
        Note that, if all vertices of $X$ are in the other side of $v$, at least one vertex of $N_D(v) \setminus X$ must be in the same side as $v$.
        As such, this branching vector is of the form $\{d+1\} \times \{d\}^{2^d-1}$.

        %

        \item[B2] If there is some $v \in A$ (resp. $B$) such that $\dgr_B(v) + \dgr_D(v) \geq d+1$ (resp. $\dgr_A(v) + \dgr_D(v) \geq d+1$), choose a set $X \subseteq N_D(v)$ of size $s = d+1 - \dgr_B(v)$ (resp. $s = d+1 - \dgr_A(v)$) and branch on every possible bipartition of $X$.
        Since rule B1 was not applied, we have that $\dgr_D(v) \leq d$, $\dgr_B(v) \geq 1$ (resp. $\dgr_A(v) \geq 1$), and $s \leq d$.
        If all vertices of $X$ were placed in $B$ (resp. $A$), we would violate the $d$-cut property and, thus, do not need to investigate this branch of the search.
        In the worst case, namely when $s = d$, this yields the branching vector $\{d\}^{2^d-1}$.
    \end{itemize}

    We now claim that, if none of the above rules is applicable, we have that $(A \cup D, B)$ is a $d$-cut of $G$.
    To see that this is the case, suppose that there is some vertex $v \in V(G)$ that violates the $d$-cut property; that is, it has a set $Y$ of $d+1$ neighbors across the cut.

    Suppose that $v \in B$. Then $Y \subseteq A \cup D$, so we have $\dgr_A(v) + \dgr_D(v) \geq d+1$, in which case rule B2 could be applied, a contradiction.
    Thus, we have that $v \notin B$, so $Y \subseteq B$ and either $v \in A$ or $v \in D$;
    in the former case, again by rule R1, $(A, B)$ would not be a $d$-cut.
     In the latter case, we would have that $\dgr_B(v) \geq d+1$, but then rule R4 would still be applicable.
    Consequently, $v \notin A \cup B \cup D = V(G)$, so such a vertex does not exist, and thus we have that $(A \cup D, B)$ is a $d$-cut of $G$.
    Note that a symmetric argument holds for the bipartition $(A, B \cup D)$.
    Before executing the above branching algorithm, we need to ensure that $A \neq \emptyset$ and $B \neq \emptyset$.
    To do that, for each possible pair of vertices $u,v \in V(G)$, we execute the entire algorithm starting with $A := \{u\}$ and $B := \{v\}$.

    As to the running time of the algorithm, for rule B2 we have that the unique positive real root of $x^n - (2^d - 1)x^{n - d} = 0$ is of the closed form $x = \sqrt[\leftroot{2}d]{2^d - 1} < 2$.
    For rule B1, we have that the polynomial associated with the recurrence relation, $p_d(x) = x^n - (2^d - 1)x^{n - d} - x^{n - d - 1}$, verifies $p_d(1) = 1 - 2^d < 0$ and $p_d(2) = 2^{n - d - 1} > 0$.
    Since it is a continuous function and $p_d(x)$ has an unique positive real root $c_d$, it holds that $1 < c_d < 2$.
    The final complexity of our algorithm is $\bigOs{c_d^n}$, with $\sqrt[\leftroot{2}d]{2^d - 1} < c_d < 2$, since $p_d\left(\sqrt[\leftroot{2}d]{2^d - 1}\right) = -(2^d - 1)^{\frac{n-d-1}{d}} < 0$.
    Table~\ref{tab:exact_values} presents the branching factors for some values of $d$ for our two branching rules.\end{proof}

\begin{table}[!htb]
    \centering
    \begin{tabular}{c|ccccccc}
         $d$ & 1 & 2 & 3 & 4 & 5 & 6 & 7\\
         \hline
         B1 & 1.6180 & 1.8793 & 1.9583 & 1.9843 & 1.9937 & 1.9973 & 1.9988 \\
         B2 & 1.0000 & 1.7320 & 1.9129 & 1.9679 & 1.9873 & 1.9947 & 1.9977 \\

    \end{tabular}\medskip
    \caption{\centering Branching factors for some values of $d$.}
    \label{tab:exact_values}
\end{table}

\section{Parameterized algorithms and kernelization}
\label{sec:param}

In this section we focus on the parameterized complexity of \textsc{$d$-Cut}. More precisely, in Section~\ref{sec:crossing-edges} we consider as the parameter the number of edges crossing the cut, in Section~\ref{thm:algo-tw} the treewidth of the input graph, in Section~\ref{sec:kernelization} the distance to cluster (in particular, we provide a quadratic kernel), and in Section~\ref{sec:cocluster} the distance to co-cluster.

\subsection{Crossing edges}
\label{sec:crossing-edges}

In this section we consider as the parameter the maximum number of edges crossing the cut. In a nutshell, our approach is to use as a black box one of the algorithms presented by Marx et al.~\cite{marx_treewidth_reduction} for a class of separation problems. Their fundamental problem is \pname{$\mathcal{G}$-MinCut}, for a fixed class of graphs $\mathcal{G}$, which we state formally, along with their main result, below.

\paraprobl{$\mathcal{G}$-MinCut}{A graph $G$, vertices $s,t $, and an integer $k$.}{Is there an induced subgraph $H$ of $G$ with at most $k$ vertices such that $H \in \mathcal{G}$ and $H$ is an $s-t$ separator?}{The integer $k$.}{13.5}

\begin{theorem}[Theorem 3.1 in~\cite{marx_treewidth_reduction}]
    \label{thm:marx}
    If $\mathcal{G}$ is a decidable and hereditary graph class, \pname{$\mathcal{G}$-MinCut} is $\FPT$.
\end{theorem}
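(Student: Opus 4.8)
The plan is to reduce \pname{$\mathcal{G}$-MinCut} to an equivalent question on a graph whose treewidth is bounded by a function of $k$, and then to answer that question by combining the decidability of $\mathcal{G}$ with an algorithm for bounded treewidth. The first step is to normalize the sought separator using heredity. Suppose $H$ is a solution, so $G[V(H)] \in \mathcal{G}$, $|V(H)| \le k$, and $V(H)$ is an $s$-$t$ separator (in particular $s,t \notin V(H)$). Since every $s$-$t$ separator contains a minimal one, pick a minimal $s$-$t$ separator $S \subseteq V(H)$: then $|S| \le k$, and $G[S]$ is an induced subgraph of $G[V(H)] \in \mathcal{G}$, so $G[S] \in \mathcal{G}$ because $\mathcal{G}$ is hereditary. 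Conversely, any minimal $s$-$t$ separator $S$ with $|S| \le k$ and $G[S] \in \mathcal{G}$ is itself a feasible $H$. Hence $(G,s,t,k)$ is a yes-instance if and only if $G$ admits a minimal $s$-$t$ separator $S$ with $|S| \le k$ and $G[S] \in \mathcal{G}$.

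Next I would invoke the treewidth reduction machinery of Marx et al.~\cite{marx_treewidth_reduction} to compute, in time $f(k)\cdot|E(G)|$, a graph $G^{*}$ together with the set $C$ of all vertices lying on some minimal $s$-$t$ separator of size at most $k$, such that $\{s,t\} \cup C \subseteq V(G^{*})$; a set $S \subseteq C$ is a minimal $s$-$t$ separator of size at most $k$ in $G$ if and only if it is one in $G^{*}$; $G^{*}[S] = G[S]$ for every such $S$; and $\tw(G^{*}) \le g(k)$ for some function $g$. By the previous paragraph and these preservation properties, it now suffices to decide whether $G^{*}$ has a minimal $s$-$t$ separator $S$ with $|S| \le k$ and $G^{*}[S] \in \mathcal{G}$.

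To solve this bounded-treewidth problem, I would enumerate all graphs $F$ on at most $k$ vertices; there are only $f'(k)$ of them, and since $\mathcal{G}$ is decidable we can keep exactly those with $F \in \mathcal{G}$ in time depending only on $k$. This is the single place where decidability is used, and it is needed precisely because an arbitrary decidable class need not be MSO-definable. For each retained $F$, write an MSO formula $\varphi_F$ stating that there is a vertex set $S$ with $G^{*}[S] \cong F$ (a bounded number of existential vertex quantifiers, since $|F| \le k$) that is a minimal $s$-$t$ separator, noting that both ``$S$ separates $s$ from $t$'' and ``$S$ is minimal'' are MSO-expressible. By Courcelle's theorem~\cite{courcelle_theorem} applied to $G^{*}$, whose treewidth is at most $g(k)$, each $\varphi_F$ is decided in time $h(k)\cdot|V(G^{*})|$; the instance is a yes-instance exactly when some $\varphi_F$ holds, which establishes the claimed fixed-parameter tractability.

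I expect the main obstacle to be the treewidth reduction result itself, namely the bound $\tw(G^{*}) \le g(k)$: although $C$ may be arbitrarily large, the torso of $G$ on $C \cup \{s,t\}$ must have treewidth bounded purely in $k$. The route here is the theory of important separators, since there are at most $4^{k}$ important $s$-$t$ separators of size at most $k$, every minimal separator of size at most $k$ is confined to the region delimited by a bounded number of them, and this containment is what yields both a tree decomposition of the torso with bags of size bounded in $k$ and the \FPT-time computation of $C$ and $G^{*}$. Everything after this structural core, namely the heredity normalization, the pattern enumeration, and the Courcelle step, is routine.
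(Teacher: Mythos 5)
This theorem is not proved in the paper at all: it is imported verbatim as a black box from Marx et al.~\cite{marx_treewidth_reduction}, and your reconstruction---normalizing to \emph{minimal} $s$--$t$ separators via heredity, invoking the treewidth reduction theorem to obtain a bounded-treewidth graph $G^{*}$ preserving those separators and their induced subgraphs, and finishing by enumerating the (decidably filtered) patterns on at most $k$ vertices and applying Courcelle's theorem---is precisely the route of the original proof, which the paper itself alludes to when it notes that the technique ``relies on the construction of a monadic second order logic (MSOL) expression and Courcelle's Theorem.'' Your proposal is correct and takes essentially the same approach as the cited source.
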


To be able to apply Theorem~\ref{thm:marx}, we first need to specify a graph class to which, on the line graph, our separators correspond. We must also be careful to guarantee that the removal of a separator in the line graph leaves non-empty components in the input graph. To accomplish that, for each $v \in V(G)$, we add a private clique of size $2d$ adjacent only to it, choose one arbitrary vertex $v'$ in each of them, and our algorithm will ask for the existence of a ``special'' separator of the appropriate size between every pair of chosen vertices of two distinct private cliques. We assume henceforth that these private cliques have been added to the input graph $G$.

For each integer $d \geq 1$, we define the graph class $\mathcal{G}_d$ as follows.

\begin{definition}
    A graph $H$ belongs to $\mathcal{G}_d$ if and only if its maximum clique size is at most $d$.
\end{definition}

Note that $\mathcal{G}_d$ is clearly decidable and hereditary for every integer $d \geq 1$.

\begin{lemma}
    \label{lem:cut_mincut}
    $G$ has a $d$-cut if and only if $L(G)$ has a vertex separator belonging to $\mathcal{G}_d$.
\end{lemma}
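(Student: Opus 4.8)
The plan is to establish a correspondence between $d$-cuts in $G$ and separators in the line graph $L(G)$ whose induced subgraph has bounded clique size. First I would recall the fundamental relationship between edges of $G$ and vertices of $L(G)$: each vertex of $L(G)$ corresponds to an edge of $G$, and two such vertices are adjacent in $L(G)$ exactly when the corresponding edges of $G$ share an endpoint. Under this correspondence, for a fixed vertex $v \in V(G)$, the set of edges incident to $v$ forms a clique in $L(G)$ of size $\dgr(v)$.

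The key observation to exploit is that a $d$-cut $(A,B)$ of $G$ is characterized by the condition that every vertex has at most $d$ neighbors on the other side, equivalently, at most $d$ crossing edges incident to it. I would therefore map a $d$-cut $(A,B)$ to the set $X \subseteq V(L(G))$ of vertices corresponding to the crossing edges of $(A,B)$. The forward direction then amounts to two claims: (i) that $X$ induces a subgraph in $\mathcal{G}_d$, and (ii) that $X$ is a separator in $L(G)$. For claim (i), note that any clique in $L(G)[X]$ must consist of crossing edges all sharing a common endpoint $v$; since $(A,B)$ is a $d$-cut, $v$ has at most $d$ crossing edges, so the clique has size at most $d$, giving $L(G)[X] \in \mathcal{G}_d$. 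For claim (ii), removing $X$ from $L(G)$ leaves the non-crossing edges, and any two non-crossing edges lying on opposite sides $A$ and $B$ cannot be connected by a path in $L(G) \setminus X$, since such a path would correspond to a walk in $G$ that never uses a crossing edge yet goes from $A$ to $B$, which is impossible. The private cliques of size $2d$ attached to each vertex guarantee that both sides of the cut remain non-empty after removing $X$, so we genuinely obtain a separation.

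For the converse, I would take a vertex separator $X \subseteq V(L(G))$ with $L(G)[X] \in \mathcal{G}_d$, interpret $X$ as a set of edges $F \subseteq E(G)$, and show that the connected components of $G - F$ can be grouped into two non-empty sides $(A,B)$ whose crossing edges lie within $F$. Since $X$ separates $L(G)$ into at least two parts, the edges not in $F$ split into groups with no shared endpoints across groups, inducing a partition of the vertices spanned by non-crossing edges; I would then argue that assigning each side to $A$ or $B$ yields a genuine bipartition of $V(G)$. The $d$-cut property follows because the bound on clique size in $L(G)[X]$ translates back to the statement that at most $d$ edges of $F$ share any common endpoint $v$, i.e. $v$ has at most $d$ crossing edges.

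The main obstacle I anticipate is the converse direction, specifically handling the bookkeeping between separations in $L(G)$ and bipartitions of $V(G)$: a vertex separator in the line graph does not immediately package itself as a clean two-sided cut, since $L(G) \setminus X$ may have many components, and I must argue these can be consolidated into exactly two non-empty sides consistent across $G$. The role of the added private cliques is precisely to anchor the two sides and to ensure the separator requested between chosen vertices $u', v'$ forces a non-trivial bipartition; I would need to verify carefully that isolated vertices of $G$ (those incident to no non-crossing edge) can be assigned to either side without violating the $d$-cut property, and that the edge-to-vertex translation does not introduce spurious crossing edges outside $F$.
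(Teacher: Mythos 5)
Your proposal follows the same route as the paper: the crossing edges of a $d$-cut form the separator, and conversely the edges corresponding to a $\mathcal{G}_d$-separator are deleted and the components of what remains are grouped into two sides. Your forward direction is essentially the paper's (you are in fact more explicit than the paper about why both sides of $L(G) - X$ are inhabited, via the private cliques). One asserted step does need justification, though: a clique of $L(G)[X]$ does not automatically consist of edges sharing a common endpoint, because cliques in a line graph can also arise from triangles of $G$. You need the observation that a triangle cannot have all three of its edges crossing, since two of its vertices lie on the same side of the cut; this is only an issue for $d=2$, as cliques of size at least four in a line graph always come from stars, and for $d=1$ the claim is vacuous. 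The paper instead argues that at most one vertex of one side, say $A$, can be covered by the clique, which forces all $d+1$ edges through that single vertex and contradicts the $d$-cut property.

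The genuine gap is in the converse, and you flag it yourself without resolving it: you never actually construct the bipartition. The missing idea is that the ``consolidation'' you identify as the main obstacle is essentially free. Let $F \subseteq E(G)$ be the edges corresponding to the separator $X$. Every edge of $G$ outside $F$ has both endpoints in the same connected component of $G - F$, so if you assign each \emph{whole component} of $G - F$ to $A$ or to $B$ --- in any way whatsoever --- then every crossing edge of $(A,B)$ lies in $F$. Since the edges of $F$ incident to a fixed vertex $v$ form a clique of $L(G)[X]$, there are at most $d$ of them, hence $v$ has at most $d$ neighbors across the cut; so \emph{any} wholesale assignment with both sides non-empty is a $d$-cut. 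Non-emptiness is available because $G - F$ has at least two components: two edges lying in different components of $L(G) - X$ cannot lie in the same component of $G - F$, as a connecting path in $G - F$ would yield a path in $L(G) - X$. The paper makes a concrete choice of assignment: it forms the quotient graph whose vertices are the components of $G - F$, with adjacency given by the edges of $F$, and places a component in $B$ exactly when its distance from a fixed root component is odd. Without some such explicit construction (or the observation above that any grouping works), your converse remains a plan rather than a proof; with it, your argument coincides with the paper's.
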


\begin{proof}
    Let $H = L(G)$, $(A, B)$ be a $d$-cut of $G$, and $F \subseteq V(H)$ be the set of vertices such that $e_{uv} \in F$ if and only if $u \in A$ and $v \in B$, or vice-versa.
    The fact that $F$ is a separator of $H$ follows directly from the hypothesis that $(A, B)$ is a cut of $G$.
    Now, to show that $H[F] \in \mathcal{G}_d$, suppose for contradiction that $H[F]$ contains a  clique $Q$ with more than $d$ vertices.
    That is, there are at least $d+1$ edges of $G$ that are pairwise intersecting and with one endpoint in $A$ and the other in $B$.
    Note, however, that for at least one of the parts, say $A$, there is also {\sl at most} one vertex with an edge in $Q \subseteq E(G)$, as otherwise there would be two non-adjacent vertices in the clique $Q \subseteq V(H)$.
    As such, $A$ has only one vertex and we conclude that every edge in $Q$ has an endpoint in $A$, but this, on the other hand, implies that $A$ has $d+1$ neighbors in $B$, contradicting the hypothesis that $(A, B)$ is a $d$-cut of $G$.

    For the converse, take a vertex separator $S \subseteq V(H)$ such that $H[S] \in \mathcal{G}_d$ and let $E_S$ be the edges of $G$ corresponding to $S$.
    Let $G'$ be the graph where each vertex corresponds to a connected component of $G - E_S$ and two vertices are adjacent if and only if there is an edge in $E_S$ between vertices of the respective components.
    Let $Q_r$ be an arbitrarily chosen connected component of $G - E_S$.
    Now, for each component at an odd distance from $Q_r$ in $G'$, add that component to $B$; all other components are placed in $A$.
    We claim that $(A, B)$ is a $d$-cut of $G$. Let $F \subseteq E_S$ be the set of edges with one endpoint in $A$ and the other in $B$.
    Note that $G - F$ is disconnected due to the construction of $A$ and $B$.
    If there is some $v \in A$ with more than $d$ neighbors in $B$, we obtain that there is some clique of equal size in $H[S]$, contradicting the hypothesis that this subgraph belongs to $\mathcal{G}_d$.
\end{proof}

\begin{theorem}\label{thm:FPT-crossing}
    For every $d \geq 1$, there is an $\FPT$ algorithm for \pname{$d$-Cut} parameterized by $k$, the maximum number of edges crossing the cut.
\end{theorem}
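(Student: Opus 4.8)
The plan is to reduce the parameterized problem to \pname{$\mathcal{G}_d$-MinCut} on the line graph and then invoke Theorem~\ref{thm:marx} as a black box. By Lemma~\ref{lem:cut_mincut}, a $d$-cut of $G$ with at most $k$ crossing edges corresponds exactly to a vertex separator $S$ of $L(G)$ with $|S| \leq k$ and $L(G)[S] \in \mathcal{G}_d$, so the parameter $k$ translates directly into the bound on the number of vertices of the separator. Since $\mathcal{G}_d$ is decidable and hereditary, Theorem~\ref{thm:marx} guarantees that, for fixed terminals $s,t$, deciding whether such a separator exists is $\FPT$ in $k$. The only two issues to settle are (i) how to choose the terminals $s,t$, and (ii) how to ensure that the separator produced corresponds to a genuine cut, i.e., one in which both sides $A$ and $B$ are non-empty.

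Both issues are handled by the private cliques of size $2d$ that we appended to every vertex of $G$. First I would observe that, for each $v \in V(G)$, the set $\{v\} \cup K_v$, where $K_v$ is the private clique of $v$, induces a clique of size $2d+1$ in $G$, and hence is monochromatic under every $d$-cut: any bipartition of a clique on $2d+1$ vertices leaves one part of size at least $d+1$, so some vertex on the smaller side would have at least $d+1$ neighbors across the cut, violating the $d$-cut property. In particular, the arbitrarily chosen vertex $v'$ of each private clique always lies on the same side of the cut as $v$, and every edge inside $K_v$ is non-crossing. This lets me select, for each private clique, a terminal vertex of $L(G)$ given by an edge inside $K_v$ incident to $v'$; such a terminal is guaranteed never to belong to the separator $S$.

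The algorithm then enumerates all pairs $(v',u')$ of chosen vertices coming from distinct private cliques, forms the corresponding terminals $s,t \in V(L(G))$, and runs the $\FPT$ algorithm for \pname{$\mathcal{G}_d$-MinCut} on $(L(G), s, t, k)$. For the forward direction, if $G$ admits a $d$-cut $(A,B)$ with at most $k$ crossing edges, then monochromaticity lets me pick augmented vertices on opposite sides; their interior private-clique edges lie in different components of $L(G)$ minus the crossing-edge set $F$, so $F$ is an $s$-$t$ separator of size at most $k$ with $L(G)[F] \in \mathcal{G}_d$, and the corresponding run answers yes. Conversely, if some run returns a separator $S$, I reconstruct a bipartition exactly as in the converse part of Lemma~\ref{lem:cut_mincut}; because $S$ separates $s$ from $t$, the graph $G - E_S$ has at least two components containing edges, so the two-coloring yields non-empty $A$ and $B$, and the number of crossing edges is at most $|S| \leq k$.

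Finally, the running time is the product of the $\bigO{n^2}$ choices of terminal pairs and the $\FPT$ running time of \pname{$\mathcal{G}_d$-MinCut}, which is $\FPT$ in $k$ overall. I expect the main obstacle to be not the black-box invocation itself but the bookkeeping that makes the reduction faithful: namely, guaranteeing that the returned separator always yields a \emph{non-trivial} cut and that the enumerated terminals are never swallowed by the separator. Both are precisely what the size-$2d$ private cliques and the monochromaticity argument are designed to secure, so once that observation is in place the remaining verification is routine.
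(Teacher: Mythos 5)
Your proposal is correct and follows essentially the same route as the paper: reduce to \pname{$\mathcal{G}_d$-MinCut} on the line graph via Lemma~\ref{lem:cut_mincut}, invoke Theorem~\ref{thm:marx} as a black box, and use the size-$2d$ private cliques together with a quadratic enumeration of terminal pairs to guarantee non-trivial cuts. In fact, your write-up is more careful than the paper's own proof on two points it leaves implicit --- that the terminals must be vertices of $L(G)$ (edges inside the monochromatic private cliques) and that separating $s$ from $t$ forces both sides of the reconstructed bipartition to be non-empty.
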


\begin{proof}
    For each pair of vertices $s,t \in V(G)$ that do not belong to the private cliques, our goal is to find a subset of vertices $S \subseteq V(L(G))$ of size at most $k$ that separates $s$ and $t$ such that $L(G)[S] \in \mathcal{G}_d$.
    This is precisely what is provided by Theorem~\ref{thm:marx}, and the correctness of this approach is guaranteed by Lemma~\ref{lem:cut_mincut}.
    Since we perform a quadratic number of calls to the  algorithm given by Theorem~\ref{thm:marx}, our algorithm still runs in $\FPT$ time.
\end{proof}

As to the running time of the $\FPT$ algorithm given by Theorem~\ref{thm:FPT-crossing}, the treewidth reduction technique of~\cite{marx_treewidth_reduction} relies on the construction of a monadic second order logic (MSOL) expression and Courcelle's Theorem~\cite{courcelle_theorem} to guarantee fixed-parameter tractability, and therefore it is hard to provide an explicit running time in terms of $k$.

\subsection{Treewidth}
\label{thm:algo-tw}

We proceed to present an algorithm for \pname{$d$-Cut} parameterized by the treewidth of the input graph that, in particular, improves the running time of the best known algorithm for \pname{Matching Cut}~\cite{matching_cut_structural}.
For the definitions of treewidth we refer to~\cite{treewidth,CyganFKLMPPS15}.
We state here an adapted definition of nice tree decomposition which shall be useful in our algorithm.


\begin{definition}{(Nice tree decomposition)}
    A tree decomposition $(T, \mathcal{B})$ of a graph $G$ is said to be \emph{nice} if it T is a tree rooted at an empty bag $r(T)$ and each of its bags is from one of the following four types:
    \begin{enumerate}
        \item \textit{Leaf node}: a leaf $x$ of $T$ with $|B_x| = 2$ and no children.
        \item \textit{Introduce node}: an inner node $x$ of $T$ with one child $y$ such that $B_x \setminus B_y = \{u\}$, for some $u \in V(G)$.
        \item \textit{Forget node}: an inner node $x$ of $T$ with one child $y$ such that $B_y \setminus B_x = \{u\}$, for some $u \in V(G)$.
        \item \textit{Join node}: an inner node $x$ of $T$ with two children $y,z$ such that $B_x = B_y = B_z$.
    \end{enumerate}
\end{definition}

In the next theorem, note that the assumption that the given tree decomposition is {\sl nice} is not restrictive, as any tree decomposition can be transformed into a nice one of the same width in polynomial time~\cite{Klo94}.

\begin{theorem}\label{thm:treewidth}
    For every integer $d \geq 1$, given a nice tree decomposition of $G$ of width $\tw(G)$, \pname{$d$-Cut} can be solved in time $\bigOs{2^{\tw(G)+1}(d+1)^{2\tw(G) + 2}}$.
\end{theorem}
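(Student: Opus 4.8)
The plan is to design a dynamic programming algorithm over the given nice tree decomposition, where each partial solution assigns every vertex currently in a bag to one of the two sides of the cut, and we track just enough information to verify the $d$-cut constraint as vertices are forgotten. The key insight is that for a vertex $v$ in a bag $B_x$, the constraint ``$v$ has at most $d$ neighbors on the opposite side'' is local in the sense that all of $v$'s neighbors either appear together with $v$ in some bag or have already been processed; so for each vertex in the current bag I would record (i) its assigned side, in $\{A,B\}$, giving a factor of $2$ per bag vertex, and (ii) a counter in $\{0,1,\dots,d\}$ recording how many neighbors on the \emph{opposite} side it has already accumulated among the already-forgotten vertices, giving a factor of $d+1$ per bag vertex. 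A state at node $x$ is thus a pair $(\sigma, c)$ where $\sigma : B_x \to \{A,B\}$ and $c : B_x \to \{0,\dots,d\}$, and the table entry is a Boolean indicating whether there is a partial assignment of the vertices below $x$, consistent with $\sigma$ on $B_x$, such that every already-forgotten vertex satisfies the $d$-cut constraint and every $v \in B_x$ has exactly $c(v)$ opposite-side neighbors among the forgotten vertices (with $c(v) \le d$ forced). Since $|B_x| \le \tw(G)+1$, the number of states per bag is at most $2^{\tw(G)+1}(d+1)^{\tw(G)+1}$.

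First I would handle the leaf nodes: a leaf bag has exactly two vertices, so I initialize its table by enumerating all $2^2$ sign assignments with both counters set to $0$, marking as valid only those assignments (recording, for an edge present between the two leaf vertices, an opposite-side contribution once one of them is forgotten — though at a leaf nothing is forgotten yet, so all four assignments with zero counters are valid, and edge effects are deferred to the forget step). For an introduce node adding vertex $u$, I would extend each child state by choosing $\sigma(u) \in \{A,B\}$ and setting $c(u) = 0$, since $u$'s forgotten neighbors are exactly those of its neighbors already forgotten below, and by the properties of tree decompositions $u$ has no such neighbors yet (all neighbors of $u$ processed so far must lie in $B_x$). For a forget node removing $u$, I would, for each child state in which $u$ has counter $c(u) \le d$, iterate over the neighbors $w$ of $u$ that lie in the current bag $B_x$ and, whenever $\sigma(w)$ is opposite to $\sigma(u)$, increment the stored counter of $w$; I only keep the resulting state if every such incremented counter stays $\le d$, and I also verify that $u$ itself, now fully processed, had $c(u) \le d$. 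The forget step is where the $d$-cut constraint for $u$ is finally checked and where the counters of $u$'s still-present neighbors absorb the contribution of the now-forgotten $u$. For a join node with children $y,z$ sharing the bag, I would combine states with the same $\sigma$ by adding counters: a state $(\sigma, c)$ is valid at $x$ if there exist valid child states $(\sigma, c_y)$ and $(\sigma, c_z)$ with $c(v) = c_y(v) + c_z(v) \le d$ for all $v \in B_x$, taking care that neighbors inside the bag are not double-counted (this is automatic because counters only record \emph{forgotten} neighbors, and the forgotten subtrees below $y$ and $z$ are disjoint).

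The main obstacle I anticipate is ensuring that the $d$-cut constraint is counted exactly once for each vertex and each crossing edge — in particular, avoiding double-counting at join nodes and correctly deferring the accounting of an intra-bag crossing edge until exactly the moment one of its endpoints is forgotten. Keeping counters restricted to \emph{forgotten} opposite-side neighbors is the clean invariant that resolves this: edges between two bag vertices contribute nothing while both are present and are charged precisely at the forget node of the first endpoint to leave, so each crossing edge is counted once for each of its two endpoints exactly when that endpoint is forgotten. For the running time, the root bag is empty, and the answer is \pname{YES} iff the root has a valid state $(\emptyset,\emptyset)$ arising from a partition with both sides nonempty; the nonemptiness of $A$ and $B$ can be tracked by augmenting each state with two Boolean flags recording whether each side has been used, which multiplies the state count only by a constant factor. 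Processing each node costs $\bigOs{2^{\tw(G)+1}(d+1)^{\tw(G)+1}}$ for introduce and forget nodes and $\bigOs{2^{\tw(G)+1}(d+1)^{2\tw(G)+2}}$ for join nodes (matching pairs of counter vectors), so the overall running time is $\bigOs{2^{\tw(G)+1}(d+1)^{2\tw(G)+2}}$ as claimed, and specializing to $d=1$ gives $\bigOs{2^{\tw(G)+1} \cdot 2^{2\tw(G)+2}} = \bigOs{8^{\tw(G)}}$, improving the $\bigOs{12^{\tw(G)}}$ bound of Aravind et al.~\cite{matching_cut_structural}.
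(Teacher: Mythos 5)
Your overall scheme---states consisting of a side assignment $\sigma$ on the bag together with, for each bag vertex, a counter in $\{0,\dots,d\}$ of opposite-side neighbors among \emph{already-forgotten} vertices, plus nonemptiness flags---is essentially the paper's algorithm (the paper writes the state as $(A,\balpha,t)$), and your per-node cost accounting, including the join-node bottleneck of $(d+1)^{2\tw(G)+2}$, matches the paper's. However, your forget step as written is incorrect. When $u$ is forgotten you verify only $c(u)\le d$, where $c(u)$ counts $u$'s opposite-side neighbors among previously forgotten vertices. This never charges against $u$ the crossing edges from $u$ to vertices \emph{still in the bag}: those edges are added only to the other endpoints' counters, and since $u$ never reappears, they are never counted toward $u$'s own constraint at all. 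Your closing claim that ``each crossing edge is counted once for each of its two endpoints exactly when that endpoint is forgotten'' fails for the endpoint that is forgotten first.

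This is not a cosmetic slip: run your algorithm on $K_4$ with $d=1$ (which has no matching cut) under the assignment placing one vertex $a$ on side $A$ and $b,c,d$ on side $B$. When $a$ is forgotten, $c(a)=0\le 1$ passes; the counters of $b,c,d$ each rise to $1\le 1$; every later forget sees a counter of exactly $1$; both flags are set, so the algorithm wrongly reports a matching cut. (The same happens with $K_{2d+2}$ for general $d$.) The fix is local and is exactly what the paper's notion of a \emph{valid} entry accomplishes: when forgetting $u$ into bag $B_x$, the test must be $c(u)+|\{w\in N(u)\cap B_x : \sigma(w)\ne\sigma(u)\}|\le d$, i.e., the full constraint on $u$, which is checkable at that moment because, by the properties of tree decompositions, every neighbor of $u$ is then either already forgotten or in $B_x$. (The paper enforces this at every node by zeroing out any entry in which some bag vertex's intra-bag opposite-side neighbors plus its counter exceed $d$.) With this correction, and with the analogous validity filter applied consistently, your argument goes through and yields the claimed running time.
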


\begin{proof}
    As expected, we will perform dynamic programming on a nice tree decomposition.
    For this proof, we denote a $d$-cut of $G$ by $(L, R)$ and suppose that we are given a total ordering of the vertices of $G$.
    Let $(T, \mathcal{B})$ be a nice tree decomposition of $G$ rooted at a node $r \in V(T)$.
    For a given node $x \in T$, an entry of our table is indexed by a triple $(A, \balpha, t)$, where $A \subseteq B_x$, $\balpha \in \left(\{0\} \cup [d]\right)^{\tw(G)+1}$, and $t$ is a binary value. Each coordinate $a_i$ of $\alpha$ indicates how many vertices {\sl outside} of $B_x$ the $i$-th vertex of $B_x$ has in the other side of the partition. More precisely, we denote by $f_x(A, \balpha, t)$ the binary value indicating whether or not $V(G_x)$ has a bipartition $(L_x,R_x)$ such that $L_x \cap B_x = A$, every vertex $v_i \in B_x$ has exactly $a_i$ neighbors in the other side of the partition $(L_x,R_x)$  outside of $B_x$, and both $L_x$ and $R_x$ are non-empty if and only if $t = 1$. Note that $G$ admits a $d$-cut if and only if $f_r(\emptyset, \boldsymbol{0}, 1)=1$.
    Figure~\ref{fig:treewidth} gives an example of an entry in the dynamic programming table and the corresponding solution on the subtree.



    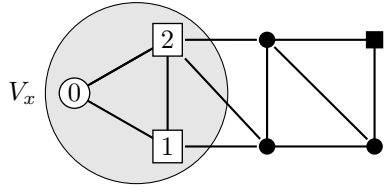
\begin{figure}[!htb]
        \centering
        \begin{tikzpicture}[rotate = 90]
                \GraphInit[unit=3,vstyle=Normal]
                \SetVertexNormal[Shape=circle, MinSize=3pt]
                \tikzset{VertexStyle/.append style = {inner sep = \inners, outer sep = \outers}}
                \SetVertexNoLabel
                \begin{scope}[rotate=90]
                    \draw[fill=gray!20] (0,0) circle (1.2);
                    \node at (1.5, 0) {$V_x$};
                    \grComplete[RA=0.816747, prefix=t]{3}
                    \SetVertexLabel
                    \Vertex[Node, L = {0}, Math]{t0}
                \end{scope}
                \begin{scope}[rotate=90]
                    \tikzset{VertexStyle/.append style = {inner sep = 3pt, shape = rectangle}}
                    \SetVertexLabel
                    \Vertex[Node, L = {2}, Math]{t2}
                    \Vertex[Node, L = {1}, Math]{t1}
                \end{scope}
                \begin{scope}[rotate=45, shift={(-1.71424, -1.71424)}]
                    \grCycle[RA=1, prefix=c]{4}
                    \Edge(t0)(t2)
                \end{scope}
                \begin{scope}[rotate=45, shift={(-1.71424, -1.71424)}]
                    \SetVertexNormal[Shape=circle, FillColor = black, MinSize=3pt]
                    \tikzset{VertexStyle/.append style = {shape = rectangle, inner sep = 3pt, outer sep = \outers}}
                    \Vertex[Node]{c3}
                \end{scope}
                \begin{scope}[rotate=45, shift={(-1.71424, -1.71424)}]
                    \SetVertexNormal[Shape=circle, FillColor = black, MinSize=3pt]
                    \tikzset{VertexStyle/.append style = {inner sep = 2pt, outer sep = \outers}}
                    \Vertex[Node]{c0}
                    \Vertex[Node]{c1}
                    \Vertex[Node]{c2}
                \end{scope}
                \Edge(c0)(t2)
                \Edge(c1)(t2)
                \Edge(c1)(t1)
                \Edge(c0)(c2)

        \end{tikzpicture}
        \caption{\centering Example for $d=3$ of dynamic programming state and corresponding solution on the subtree. Squared (circled) vertices belong to $A$ ($B$). Numbers indicate the respective value of $\alpha_i$.\label{fig:treewidth}}
    \end{figure}

    We say that an entry $(A, \balpha, t)$ for a node $x$ is \tdef{valid} if for every $v_i \in A$, $|N(v_i) \cap (B_x \setminus A)| + a_i \leq d$, for every $v_j \in B_x \setminus A$, $|N(v_i) \cap A| + a_j \leq d$, and if $B_x \setminus A \neq \emptyset$ then $t = 1$; otherwise the entry is \tdef{invalid}. Moreover, note that if $f_x(A, \balpha, t)=1$, the corresponding bipartition $(L_x,R_x)$ of $V(G_x)$ is a $d$-cut if and only if $(A, \balpha, t)$ is valid and $t = 1$.

    We now explain how the entries for a node $x$ can be computed, assuming recursively that the entries for their children have been already computed. We distinguish the four possible types of nodes. Whenever $(A, \balpha, t)$ is invalid or absurd (with, for example, $a_i < 0$) we define $f_x(A, \alpha, t)$ to be $0$, and for simplicity we will not specify this in the equations stated below.

    \begin{itemize}
        \item Leaf node: Since $|B_x| = 2$, for every $A \subseteq B_x$, we can set $f_x(A, \boldsymbol{0}, t) = 1$ with $t = 1$ if and only if $B_x \setminus A \neq \emptyset$.
        These are all the possible partitions of $B_x$, taking $\bigO{1}$ time to be computed.

        \item Introduce node: Let $y$ be the child of $x$ and $B_x \setminus B_y = \{v_i\}$.
        The transition is given by the following equation, where $\balpha^*$ has entries equal to $\balpha$ but without the coordinate corresponding to $v_i$.
        If $a_i > 0$, $f_x(A, \balpha, t)$ is invalid since $v_i$ has no neighbors in $G_x - B_x$.
        \[
     f_x(A, \balpha, t)=\left\{
                \begin{array}{ll}
                  f_y(A \setminus \{v\}, \balpha^*, t), & \text{if $A = B_x$ or $A = \emptyset$.}\\
                  \max_{t' \in \{0,1\}} f_y(A \setminus \{v\}, \balpha^*, t'), & \text{otherwise.}
                \end{array}
              \right.
       \]

        For the first case, $G_x$ has a bipartition (which will also be a $d$-cut if $t=1$)  represented by $(A, \balpha, t)$ only if $G_y$ has a bipartition ($d$-cut), precisely because, in both $G_x$ and $G_y$, the entire bag is in one side of the cut.
        For the latter case, if $G_y$ has a bipartition, regardless if it is a $d$-cut or not, $G_x$ has a $d$-cut 
        because $B_x$ is not contained in a single part of the cut, unless the entry is invalid.
        The computation for each of these nodes takes $\bigO{1}$ time per entry.

        \item Forget node: Let $y$ be the child of $x$ and $B_y \setminus B_x = \{v_i\}$.
        In the next equation, $\balpha'$ has the same entries as $\balpha$ with the addition of entry $a_i$ corresponding to $v_i$ and, for each $v_j \in A \cap N(v_i)$, $a_j' = a_j - 1$.
        Similarly, for $\balpha''$, for each $v_j \in (B_x \setminus A) \cap N(v_i)$, $a_j'' = a_j - 1$.
        \begin{equation*}
            f_x(A, \balpha, t) = \max_{a_i \in \{0\} \cup [d]}\ \max \{ f_y(A, \balpha', t),\  f_y(A \cup \{v_i\}, \balpha'', t)\}.
        \end{equation*}

        Note that $\balpha'$ and $\balpha''$ take into account the forgetting of $v_i$; its neighbors get an additional neighbor outside of $B_x$ that is in the other side of the bipartition.
        Moreover, since we inspect the entries of $y$ for every possible value of $a_i$, if at least one of them represented a feasible bipartition of $G_y$, the corresponding entry on $f_y(\cdot)$ would be non-zero and, consequently, $f_x(A, \balpha, t)$ would also be non-zero.
        Computing an entry for a forget node takes $\bigO{d}$ time.

        \item Join node: Finally, for a join node $x$ with children $y$ and $z$, a \tdef{splitting} of $\balpha$ is a pair $\balpha_y, \balpha_z$ such that for every coordinate $a_j$ of $\balpha$, it holds that the sum of $j$-th coordinates of $\balpha_y$ and $\balpha_z$ is equal to $a_j$. The set of all splittings is denoted by $S(\balpha)$ and has size $\bigO{(d+1)^{\tw(G)+1}}$.
        As such, we define our transition function as follows.
        \begin{equation*}
            f_x(A, \balpha, t) = \max_{t \leq t_y + t_z \leq 2t}\ \max_{S(\balpha)} f_y(A, \balpha_y, t_y) \cdot f_z(A, \balpha_z, t_z).
        \end{equation*}

        The condition $t \leq t_y + t_z \leq 2t$ enforces that, if $t = 1$, at least one of the graphs $G_y, G_z$ must have a $d$-cut; otherwise, if $t = 0$, neither of them can.
        When iterating over all splittings of $\balpha$, we are essentially testing all possible counts of neighbors outside of $B_y$ such that there exists some entry for node $z$ such that $\balpha_y + \balpha_z = \balpha$.
        Finally, $f_x(A, \balpha, t)$ is feasible if there is at least one splitting and $t_y, t_z$ such that both $G_y$ and $G_z$ admit a bipartition.
        This node type, which is the bottleneck of our dynamic programming approach, takes $\bigO{(d+1)^{\tw(G)+1}}$ time per entry.
    \end{itemize}

    Consequently, since we have $\bigO{\tw(G)} \cdot n$ nodes in a nice tree decomposition, spend $\bigO{\tw(G)^2}$ to detect an invalid entry, have $\bigO{2^{\tw(G) +1}(d+1)^{\tw(G)+1}}$ entries per node, each taking at most $\bigO{(d+1)^{\tw(G)+1}}$ time to be computed, our algorithm runs in time $\bigO{\tw(G)^32^{\tw(G)+1}(d+1)^{2\tw(G)+2}\cdot n}$, as claimed.
\end{proof}

From Theorem~\ref{thm:treewidth} we immediately get the following corollary, which improves over the algorithm given by Aravind et al.~\cite{matching_cut_structural}.

\begin{corollary}
   Given a nice tree decomposition of $G$ of width $\tw(G)$, \pname{Matching Cut} can be solved in time  $\bigOs{8^{\tw(G)}}$.
\end{corollary}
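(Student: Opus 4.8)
The plan is to observe that \pname{Matching Cut} is precisely the \pname{$d$-Cut} problem in the special case $d=1$, and then to specialize the running-time bound of Theorem~\ref{thm:treewidth} accordingly. First I would recall from the introduction that a matching cut is exactly a $1$-cut: a cut $(A,B)$ in which every vertex has at most one neighbor across the partition. Hence every instance of \pname{Matching Cut} is an instance of \pname{$1$-Cut} and vice versa, and the dynamic programming algorithm of Theorem~\ref{thm:treewidth} applies verbatim with $d=1$, given the same nice tree decomposition of width $\tw(G)$.

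Next I would substitute $d=1$ into the claimed running time $\bigOs{2^{\tw(G)+1}(d+1)^{2\tw(G)+2}}$. Setting $d+1 = 2$ turns the second factor into $2^{2\tw(G)+2}$, so the product of the two factors becomes $2^{\tw(G)+1} \cdot 2^{2\tw(G)+2} = 2^{3\tw(G)+3} = 8^{\tw(G)+1}$. Since $8^{\tw(G)+1} = 8 \cdot 8^{\tw(G)}$ and the $\bigOs{\cdot}$ notation absorbs the constant factor $8$ (indeed, it absorbs any factor bounded by a polynomial in the input size), the bound simplifies to $\bigOs{8^{\tw(G)}}$, exactly as claimed.

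I do not anticipate any genuine obstacle here: the statement is an immediate specialization of the already-established Theorem~\ref{thm:treewidth}, and the only content is the elementary collecting of powers of $2$. The single point worth double-checking is that the per-entry costs and table sizes in the proof of Theorem~\ref{thm:treewidth} do not conceal extra factors in $d$ that would need separate accounting when $d=1$; but since $d$ is treated as a fixed constant throughout and every per-entry cost there ($\bigO{1}$ for leaf and introduce nodes, $\bigO{d}$ for forget nodes, and $\bigO{(d+1)^{\tw(G)+1}}$ for the bottleneck join nodes) is already folded into the stated bound, no hidden dependence survives. Thus the corollary follows directly.
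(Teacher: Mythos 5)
Your proposal is correct and matches the paper exactly: the paper derives this corollary as an immediate consequence of Theorem~\ref{thm:treewidth} by setting $d=1$, so that $2^{\tw(G)+1}(d+1)^{2\tw(G)+2} = 2^{3\tw(G)+3} = 8\cdot 8^{\tw(G)}$, absorbed into $\bigOs{8^{\tw(G)}}$. Your extra check that no hidden $d$-dependence survives in the per-entry costs is a sensible sanity check but not needed beyond the substitution.
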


\subsection{Kernelization and distance to cluster}
\label{sec:kernelization}

The proof of the following theorem consists of a simple generalization to every $d \geq 1$ of the construction given by Komusiewicz et al.~\cite{matching_cut_ipec}  for $d=1$.

\begin{theorem}\label{thm:no-kernel}
    For any fixed $d \geq 1$, \pname{$d$-Cut} does not admit a polynomial kernel when simultaneously parameterized by $k$, $\Delta$, and $\tw(G)$, unless $\NP \subseteq \coNP/\poly$.
\end{theorem}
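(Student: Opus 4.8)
The plan is to rule out polynomial kernels by an \emph{OR-cross-composition} in the sense of Bodlaender et al.~\cite{cross_composition}: I would compose a large number $t$ of instances of an \NP-hard problem into a single instance of \pname{$d$-Cut} whose combined parameter $k+\Delta+\tw$ is bounded by a polynomial in $s+\log t$, where $s$ is the maximum size of an input instance. As source problem I take \pname{$d$-Cut} restricted to connected $(2d+2)$-regular graphs, which is \NP-hard by Theorem~\ref{thm:regular_nph} (the output of that reduction may be assumed connected); the polynomial equivalence relation simply groups inputs by their number of vertices, and hence, by regularity, also by their number of edges, sending malformed inputs to a fixed trivial no-instance. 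Since the three parameters will end up depending only on $s$ and not on $t$, any polynomial kernel in $k+\Delta+\tw$ would yield an OR-distillation of the source, collapsing the polynomial hierarchy by~\cite{BodlaenderDFH09,FortnowS11} and contradicting $\NP\subseteq\coNP/\poly$.

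Given equivalent instances $G_1,\dots,G_t$, each with $n$ vertices, I would build $G$ around a single monochromatic \emph{frame}: take a $(d,t)$-spool $F$, which is monochromatic by Observation~\ref{obs:mono_bipartite} together with the fact, shown earlier, that spools are monochromatic, and then attach the $i$-th copy of $F$ to $G_i$. Concretely, choose an arbitrary vertex $r_i\in V(G_i)$ and add $d+1$ edges from $r_i$ to $d+1$ distinct interior vertices of the $i$-th copy, taken disjoint across different $i$. By the second part of Observation~\ref{obs:mono_bipartite}, $r_i$ then has $d+1$ neighbors in a monochromatic set and is therefore forced to the side of the whole frame. I set the crossing-edge budget to $k:=|E(G_1)|=(d+1)n$. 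Spreading the attachments over distinct spool vertices keeps every vertex of $F$ at bounded degree, giving $\Delta(G)\le 3d+3=O(d)$, a constant depending only on $d$.

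For correctness I would prove ``$G$ has a $d$-cut with at most $k$ crossing edges if and only if some $G_i$ has a $d$-cut.'' For the forward direction, since $F$ is monochromatic I may assume $F\subseteq A$; then every $r_i$ lies in $A$, and as $B\neq\emptyset$ some vertex of $B$ belongs to an instance $G_i$, the frame being entirely in $A$. The restriction $(A\cap V(G_i),\,B\cap V(G_i))$ has both sides nonempty (it contains $r_i$ and that vertex) and, since $E(G_i)\subseteq E(G)$, every vertex keeps at most $d$ neighbors across it, so it is a $d$-cut of $G_i$. Conversely, given a $d$-cut $(A_i,B_i)$ of $G_i$, I relabel so that $r_i\in A_i$ (using that $d$-cuts are symmetric), place $A_i$ together with $F$ and all other instances on one side and $B_i$ on the other. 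Vertices of $B_i$ have no frame neighbors and keep at most $d$ internal crossing neighbors; $r_i$ keeps at most $d$ crossing neighbors inside $G_i$, its $d+1$ frame edges being non-crossing; and the total number of crossing edges equals that of the $d$-cut of $G_i$, hence at most $|E(G_i)|=k$. This yields a valid $d$-cut of $G$ within budget, and I would also note that the construction never creates a spurious cut, precisely because any $d$-cut of $G$ forces the monochromatic frame to one side.

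It remains to bound the treewidth, which is where the main difficulty lies. Attaching all instances to a \emph{single common} dense structure would create a separator of size $\Theta(t)$; the whole point of using a spool frame is that its cyclic ``chain of $K_{d+1,2d+2}$'' shape has bounded width, so that the instances can be threaded along it through bounded interfaces. I would exhibit a tree decomposition (path-like up to the cyclic wrap-around) whose $i$-th bag contains the $i$-th copy of $F$, the two identification vertices shared with the neighboring copies, the vertex $r_i$ with its $d+1$ frame neighbors, and all of $V(G_i)$; each bag has size $O(n+d)$, so $\tw(G)=O(n)$. Since $n\le s$, the three parameters $k=(d+1)n$, $\Delta(G)=O(d)$, and $\tw(G)=O(n)$ are all bounded by a polynomial in $s$ and are independent of $t$, which completes the cross-composition. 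The genuinely delicate points are exactly these: verifying the $O(n)$ interface in the decomposition, and checking that forcing each $r_i$ to the frame side neither destroys an existing internal cut (handled by symmetry of $d$-cuts) nor admits one when no $G_i$ has a $d$-cut (handled by the monochromaticity of the frame).
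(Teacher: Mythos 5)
Your proof is correct, and at its core it is the same argument as the paper's: an OR-cross-composition of many \pname{$d$-Cut} instances into one, in which a monochromatic connecting structure forces every $d$-cut of the composed graph to restrict to a $d$-cut of a single input instance, while all three parameters stay polynomial in the size of one instance and independent of $t$. The implementations differ in the connector. The paper composes arbitrary \pname{$d$-Cut} instances into themselves and links them along a path: between consecutive instances it inserts a copy $K(i)$ of $K_{2d}$ joined completely to designated vertices $v_i$ and $v_{i+1}$, so that $\{v_i\}\cup K(i)$ and $K(i)\cup\{v_{i+1}\}$ are cliques on $2d+1$ vertices; monochromaticity of the whole chain is then immediate, no spool machinery is needed, and the parameter bounds are the crude but sufficient $\bigO{n}$ (in particular the paper's maximum degree is bounded only by roughly $n+4d$, not by a function of $d$ alone). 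You instead start from $(2d+2)$-regular connected instances --- legitimate, since Theorem~\ref{thm:regular_nph} gives hardness for this class --- and hang each instance off a separate monochromatic $(d,t)$-spool via $d+1$ edges into interior vertices, invoking Observation~\ref{obs:mono_bipartite} to force each $r_i$ to the frame's side. Your extra machinery buys a maximum degree of $3d+3$ (a constant for fixed $d$) and, via regularity, a clean uniform budget $k=(d+1)n$; its costs are the reliance on the spool lemmas and one loose end you correctly flag but do not execute: your bags are arranged in a cycle, which is not a tree decomposition, so you must still break the cycle (for instance, add the contents of one fixed bag to every bag, which keeps the width at $\bigO{n+d}$). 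With that standard fix spelled out, both proofs are complete; the paper's is shorter, while yours yields quantitatively stronger parameter bounds that the theorem statement does not actually require.
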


\begin{proof}
    We show that the problem cross-composes into itself.
    Start with $t$ instances $G_1, \dots, G_t$ of \pname{$d$-Cut}.
    First, pick an arbitrary vertex $v_i \in V(G_i)$, for each $i \in [t]$.
    Second, for $i \in [t-1]$,  add a copy of $K_{2d}$, call it $K(i)$, every edge between $v_i$ and $K(i)$, and every edge between $K(i)$ and $v_{i+1}$.
    This concludes the construction of $G$, which  for $d=1$ coincides with that presented by Komusiewicz et al.~\cite{matching_cut_ipec}.

    Suppose that $(A, B)$ is a $d$-cut of some $G_i$ and that $v_i \in A$.
    Note that $(G \setminus B, B)$ is a $d$-cut of $G$ since the only edges in the cut are those between $A$ and $B$.
    For the converse, take some $d$-cut $(A, B)$ of $G$ and note that every vertex in the set $\{v_t\} \bigcup_{i \in [t-1]}\{v_i\} \cup K(i)$ is contained in the same side of the partition, say $A$.
    Since $B \neq \emptyset$, for any edge $uv$ crossing the cut, there is some $i$ such that $\{u,v\} \in V(G_i)$, which implies that there is some $i$ (possibly more than one) such that $(A \cap V(G_i), B \cap V(G_i))$ must also be a $d$-cut of $G_i$.

    That the treewidth, maximum degree, and number of edges crossing the partition are bounded by $n$, the maximum number of vertices of the graphs $G_i$, is a trivial observation.
\end{proof}

We now proceed to show that \pname{$d$-Cut} admits a polynomial kernel when parameterizing by the \tdef{distance to cluster} parameter, denoted by $\dc$.
A \tdef{cluster graph} is a graph such that every connected component is a clique; the \emph{distance to cluster} of a graph $G$ is the minimum number of vertices we must remove from $G$ to obtain a cluster graph.
Our results are heavily inspired by the work of Komusiewicz et al.~\cite{matching_cut_ipec}.
Indeed, most of our reduction rules are natural generalizations of theirs. However, we need some extra observations and rules that only apply for $d \geq 2$, such as Rule~\ref{rule:pattern_removal}.


We denote by $U = \{U_1, \dots, U_t\}$ a set of vertices such that $G - U$ is a cluster graph, and each $U_i$ is called a \tdef{monochromatic part} or \tdef{monochromatic set} of $U$, and we will maintain the invariant that these sets are indeed monochromatic. Initially, we set each $U_i$ as a singleton.
In order to simplify the analysis of our instance, for each $U_i$ of size at least two, we will have a private clique of size $2d$ adjacent to every vertex of $U_i$, which we call $X_i$.
The \tdef{merge} operation between $U_i$ and $U_j$ is the following modification: delete $X_i \cup X_j$, set $U_i$ as $U_i \cup U_j$, $U_j$ as empty, and add a new clique of size $2d$, $X_{i,j}$, which is adjacent to every element of the new $U_i$.
We say that an operation is \textit{safe} if the resulting instance  is a $\YES$ instance if and only if the original instance was.

\begin{observation}
    If $U_i \cup U_j$ is monochromatic, merging $U_i$ and $U_j$ is safe.
\end{observation}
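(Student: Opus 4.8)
The plan is to show that the merge operation preserves the answer under the hypothesis that $U_i \cup U_j$ is monochromatic. The key structural facts are that (i) the sets $U_i$ are maintained as monochromatic, meaning in any $d$-cut every $U_i$ lies entirely on one side, and (ii) the auxiliary clique $X_i$ of size $2d$ adjacent to every vertex of $U_i$ forces $X_i$ to share the side of $U_i$, since $X_i \cup U_i$ contains a large monochromatic-inducing structure. First I would record precisely why adding $X_{i,j}$ keeps the merged set monochromatic: the new $U_i = U_i^{\text{old}} \cup U_j^{\text{old}}$ is adjacent to the size-$2d$ clique $X_{i,j}$, and by the same reasoning used for $K_{d+1,2d+1}$ in Observation~\ref{obs:mono_bipartite}, a set together with a sufficiently large adjacent clique cannot be split by a $d$-cut. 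This justifies the invariant after the operation.

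Next I would argue both directions of the equivalence between the original instance $G$ and the merged instance $G'$. For the forward direction, suppose $G$ is a \YES\ instance and $(A,B)$ is a $d$-cut. By the hypothesis, $U_i \cup U_j$ is monochromatic, so $U_i^{\text{old}}, U_j^{\text{old}}$ already lie on the same side in every $d$-cut of $G$; say both in $A$. I would then construct a $d$-cut of $G'$ by placing the new clique $X_{i,j}$ on that same side and verifying that no degree constraint is violated: the vertices of $U_i$ previously saw their neighbors in $X_i \cup X_j$, and now see $X_{i,j}$ instead, so their count of crossing neighbors does not increase, and the crossing edges elsewhere are unchanged. For the converse, given a $d$-cut of $G'$, the merged $U_i$ is monochromatic (by the invariant we just established), so $X_{i,j}$ lies on its side, and I would reverse the construction: reinsert $X_i, X_j$ on the side of $U_i$, again checking that each vertex keeps at most $d$ crossing neighbors. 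The bookkeeping is symmetric in $A$ and $B$.

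The step I expect to be the main obstacle is the careful degree accounting during the swap of the auxiliary cliques $X_i \cup X_j$ for $X_{i,j}$, specifically confirming that the vertices of the merged $U_i$ and of the new clique do not exceed $d$ crossing neighbors. Since $U_i \cup U_j$ is monochromatic and $X_{i,j}$ is placed on the same side, none of the new clique edges cross the cut, and the only edges incident to $U_i$ that could cross are those to other parts of the graph, which are identical before and after the operation; so the crucial observation is that replacing one monochromatic attachment by another monochromatic attachment of the same type changes no crossing edge. Making this precise, and isolating exactly the role played by the hypothesis that $U_i \cup U_j$ is monochromatic (as opposed to $U_i$ and $U_j$ being monochromatic only separately), is the crux: without the joint monochromaticity the two parts could end up on opposite sides in some $d$-cut of $G$, and then no $d$-cut of $G'$ would correspond to it, breaking the equivalence.
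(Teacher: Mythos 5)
Your proof is correct. The paper actually states this observation \emph{without} any proof, so there is no line-by-line comparison to make; your argument supplies exactly the verification the paper leaves implicit. You correctly isolate where the hypothesis is used: only the forward direction needs joint monochromaticity of $U_i \cup U_j$ (to place both parts, and hence their private cliques, on one side, after which swapping $X_i \cup X_j$ for $X_{i,j}$ changes no crossing edge), while the converse direction relies only on the new clique $X_{i,j}$ forcing the merged part to be monochromatic in the new instance.

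One small imprecision worth flagging: $U_i \cup X_i$ need not contain a $K_{d+1,2d+1}$ (take $|U_i|=2$ and $d$ large), so ``the same reasoning used for $K_{d+1,2d+1}$ in Observation~\ref{obs:mono_bipartite}'' should be read as an analogous counting argument rather than a subgraph containment. The count does go through: if $U_i$ lies in $A$ and $b \geq 1$ vertices of $X_i$ lie in $B$, then every vertex of $U_i$ forces $b \leq d$, while a vertex of $X_i \cap B$ has $(2d-b)+|U_i| \geq d+1$ crossing neighbors, a contradiction; and if $U_i$ itself were split, both sides of $X_i$ would have exactly $d$ vertices, giving some vertex of $X_i$ at least $d+1$ crossing neighbors. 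With that detail filled in, your degree bookkeeping and the non-emptiness of both sides (the cliques are removed only from the side containing $U_i \cup U_j$, which stays non-empty) are all sound.
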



It is worth mentioning that the second case of the following rule is not needed in the corresponding rule in~\cite{matching_cut_ipec}; we need it here to prove the safeness of Rules~\ref{rule:super_small} and~\ref{rule:pattern_removal}.

\begin{rrule}
    \label{rule:trivial}
    Suppose that $G - U$ has some cluster $C$ such that
    \begin{enumerate}
        \item $(C, V(G) \setminus C)$ is a $d$-cut, or
        \item $|C| \leq 2d$ and there is $C' \subseteq C$ such that $(C', G \setminus C')$ is a $d$-cut.
    \end{enumerate}
    Then output $\YES$.
\end{rrule}

After applying Rule~\ref{rule:trivial}, for every cluster $C$, $C$  has some vertex with at least $d+1$ neighbors in $U$, or there is some vertex of $U$ with $d+1$ neighbors in $C$.
Moreover, note that no cluster $C$ with at least $2d+1$ vertices can be partitioned in such a way that one side of the cut is composed only by a proper subset of vertices of $C$.

The following definition is a natural generalization of the definition of the set $N^2$ given by Komusiewicz et al.~\cite{matching_cut_ipec}.
Essentially, it enumerates some of the cases where a vertex, or set of vertices, is monochromatic, based on its relationship with $U$.
However, there is a crucial difference that keeps us from achieving equivalent bounds both in terms of running time and size of the kernel, and which makes the analysis and some of the rules more complicated than in~\cite{matching_cut_ipec}.
Namely, for a vertex to be forced into a particular side of the cut, it must have at least $d+1$ neighbors in that side; moreover, a vertex of $U$ being adjacent to $2d$ vertices of a cluster $C$ implies that $C$ is monochromatic.
Only if $d=1$, i.e., when we are dealing with matching cuts, the equality $d+1 = 2d$ holds.
This gap between $d+1$ and $2d$ is the main difference between our kernelization algorithm for general $d$ and the one shown in~\cite{matching_cut_ipec} for \pname{Matching Cut}, and the main source of the differing complexities we obtain. In particular, for $d=1$ the fourth case of the following definition is a particular case of the third one, but this is not true anymore for $d \geq 2$.
Figure~\ref{fig:n2d} illustrates the set of vertices introduced in Definition~\ref{def:n2d}.

\begin{definition}
    \label{def:n2d}
    For a monochromatic part $U_i \subseteq U$, let $N^{2d}(U_i)$ be the set of vertices $v \in V(G) \setminus U$ for which at least one of the following holds:

    \begin{enumerate}
        \item $v$ has at least $d+1$ neighbors in $U_i$.
        \item $v$ is in a cluster $C$ of size at least $2d+1$ in $G - U$ such that there is some vertex of $C$ with at least $d+1$ neighbors in $U_i$.
        \item $v$ is in a cluster $C$ of $G - U$ and some vertex in $U_i$ has $2d$ neighbors in $C$.
        \item $v$ is in a cluster $C$ of $G - U$ of size at least $2d+1$ and some vertex in $U_i$ has $d+1$ neighbors in $C$.
    \end{enumerate}
\end{definition}

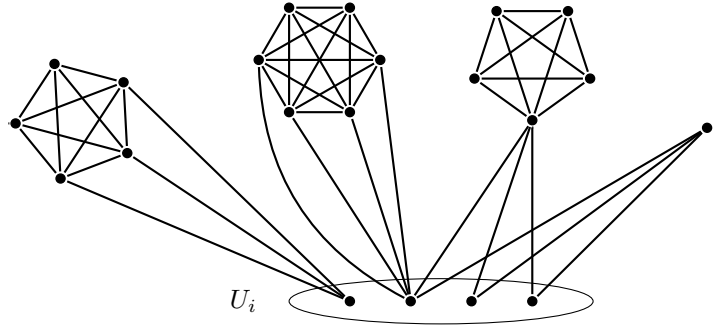
\begin{figure}[!htb]
        \centering
        \begin{tikzpicture}[scale=1, rotate = 180]
            \GraphInit[unit=3,vstyle=Normal]
            \SetVertexNormal[Shape=circle, FillColor=black, MinSize=1pt]
            \tikzset{VertexStyle/.append style = {inner sep = 1.2, outer sep = \outers}}
            \SetVertexNoLabel
            \node at (2.6, 0) {$U_i$};
            \draw (0,0) ellipse (2cm and 0.3cm);
            \Vertex[x=-1.2, y=0]{x0}
            \Vertex[x=-0.4, y=0]{x1}
            \Vertex[x=0.4, y=0]{x2}
            \Vertex[x=1.2, y=0]{x3}

            \Vertex[x=-3.5, y=-2.3]{c1}
            \Edge(c1)(x0)
            \Edge(c1)(x1)
            \Edge(c1)(x2)

            \begin{scope}[scale=0.8, shift={(-1.5,-4)}]
                \tikzset{VertexStyle/.append style = {inner sep = 1.2pt, outer sep = \outers}}
                \begin{scope}[rotate=90]
                    \grComplete[RA=1, prefix=p]{5}
                \end{scope}
                \Edge(p0)(x0)
                \Edge(p0)(x1)
                \Edge(p0)(x2)
            \end{scope}

            \begin{scope}[scale=0.8, shift={(2,-4)}]
                \tikzset{VertexStyle/.append style = {inner sep = 1.2pt, outer sep = \outers}}
                \begin{scope}[rotate=60]
                    \grComplete[RA=1, prefix=t]{6}
                \end{scope}
                \Edge(x2)(t0)
                \Edge(x2)(t1)
                \Edge[style = bend left](x2)(t5)
                \Edge(x2)(t2)
            \end{scope}

            \begin{scope}[scale=0.8, shift={(6,-3)}]
                \tikzset{VertexStyle/.append style = {inner sep = 1.2pt, outer sep = \outers}}
                \begin{scope}[rotate=75]
                    \grComplete[RA=1, prefix=q]{5}
                \end{scope}
                \Edge(x3)(q0)
                \Edge(x3)(q1)
                \Edge(x3)(q2)
            \end{scope}
        \end{tikzpicture}
        \caption{The four cases that define membership in $N^{2d}(U_i)$ for $d = 2$. \label{fig:n2d}}
    \end{figure}

\begin{observation}
    For every monochromatic part $U_i$, $U_i \cup N^{2d}(U_i)$ is monochromatic.
\end{observation}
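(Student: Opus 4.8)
The plan is to fix an arbitrary $d$-cut $(A,B)$ of $G$ and show that $U_i\cup N^{2d}(U_i)$ lies entirely on one side. Since $U_i$ is monochromatic by the maintained invariant, we may assume without loss of generality that $U_i\subseteq A$; it then suffices to prove that every $v\in N^{2d}(U_i)$ also belongs to $A$. Before treating the four cases of Definition~\ref{def:n2d}, I would isolate one auxiliary fact that will be used repeatedly: \emph{every clique on at least $2d+1$ vertices is monochromatic}. This follows from a short counting argument—if such a clique $C$ had $a$ vertices in $A$ and $b$ in $B$ with $a,b\ge 1$, then each vertex of $A\cap C$ would have its $b$ clique-neighbors in $B$, forcing $b\le d$, and symmetrically $a\le d$, whence $|C|=a+b\le 2d$, a contradiction.

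With this in hand, cases~1,~2 and~4 are comparatively direct. For case~1, $v$ has at least $d+1$ neighbors in the monochromatic set $U_i\subseteq A$, so the second part of Observation~\ref{obs:mono_bipartite} forces $v\in A$. For case~2, the cluster $C\ni v$ has size at least $2d+1$ and hence is monochromatic by the auxiliary fact; a vertex $w\in C$ with $d+1$ neighbors in $U_i$ is forced into $A$ exactly as in case~1, so the whole clique $C$, and in particular $v$, lies in $A$. For case~4, the cluster $C\ni v$ again has size at least $2d+1$ and is therefore monochromatic; were $C\subseteq B$, the vertex $u\in U_i\subseteq A$ with $d+1$ neighbors in $C$ would have $d+1$ neighbors across the cut, contradicting that $(A,B)$ is a $d$-cut, so $C\subseteq A$ and $v\in A$.

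The main obstacle is case~3, precisely because there the cluster $C$ need not be large, so the clique lemma does not apply directly. Here some $u\in U_i\subseteq A$ has $2d$ neighbors in $C$; let $S$ be a set of $2d$ such neighbors. The key observation is that $S\cup\{u\}$ is a clique on $2d+1$ vertices: $S\subseteq C$ is a clique because $C$ is a cluster, and $u$ is adjacent to every vertex of $S$ by choice. By the auxiliary fact this clique is monochromatic, and since $u\in A$ we get $S\subseteq A$. Now any remaining vertex $w\in C\setminus S$ is adjacent to all $2d$ vertices of $S$ (again because $C$ is a clique), and $2d\ge d+1$ for every $d\ge 1$; hence $w$ cannot lie in $B$ without violating the $d$-cut property, so $w\in A$. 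Thus $C\subseteq A$ and in particular $v\in A$.

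Combining the four cases, every vertex of $N^{2d}(U_i)$ lies in $A$, so $U_i\cup N^{2d}(U_i)\subseteq A$, which proves the claim. The only delicate point is the clique-promotion trick of case~3, turning a high-degree attachment from $U_i$ into a $(2d+1)$-clique; everything else reduces to Observation~\ref{obs:mono_bipartite} together with the elementary monochromaticity of large cliques.
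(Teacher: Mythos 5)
Your proof is correct. Note that the paper states this observation \emph{without} proof, treating it as immediate from Definition~\ref{def:n2d} and the $d$-cut property; your four-case analysis is exactly the routine argument being left implicit, with all cases reducing to the two facts you isolate (a vertex with $d+1$ neighbors in a monochromatic set must join that set, and any clique on $2d+1$ vertices is monochromatic), including the one genuinely non-obvious step --- the case~3 trick of completing the $2d$ neighbors of $u\in U_i$ into the monochromatic $(2d+1)$-clique $S\cup\{u\}$ and then absorbing the rest of the cluster via its $2d\ge d+1$ neighbors in $S$.
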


The next rules aim to increase the size of monochromatic sets.
In particular, Rule~\ref{rule:transitivity} translates the transitivity of the monochromatic property, while Rule~\ref{rule:attraction} identifies a case where merging the monochromatic sets is inevitable.

\begin{rrule}
    \label{rule:transitivity}
    If $N^{2d}(U_i) \cap N^{2d}(U_j) \neq \emptyset$, merge $U_i$ and $U_j$.
\end{rrule}

\begin{rrule}
    \label{rule:attraction}
    If there there is a set of $2d+1$ vertices $L \subseteq V(G)$ with two common neighbors $u,u'$ such that $u \in U_i$ and $u' \in U_j$, merge $U_i$ and $U_j$.
\end{rrule}

\begin{sproof}{\ref{rule:attraction}}
    Suppose that in some $d$-cut $(A, B)$, $u \in A$ and $u' \in B$, this implies that at most $d$ elements of $L$ are in $A$ and at most $d$ are in $B$, which is impossible since $|L| = 2d+1$.
\end{sproof}

We say that a cluster is \textit{small} if it has at most $2d$ vertices, and \textit{big} otherwise.
Moreover, a vertex in a cluster is \textit{ambiguous} if it has neighbors in more than one $U_i$.
A cluster is \tdef{ambiguous} if it has an ambiguous vertex, and \tdef{fixed} if it is contained in some $N^{2d}(U_i)$.

\begin{observation}
    \label{obs:fix_amb}
    If $G$ is reduced by Rule~\ref{rule:trivial}, every big cluster is ambiguous or fixed.
\end{observation}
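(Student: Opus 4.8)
The plan is to start from the structural consequence of Rule~\ref{rule:trivial} that is already recorded in the text: once $G$ is reduced by this rule, every cluster $C$ of $G - U$ either contains a vertex with at least $d+1$ neighbors in $U$, or is such that some vertex of $U$ has at least $d+1$ neighbors in $C$. I would fix an arbitrary big cluster $C$, so that $|C| \geq 2d+1$, and split into these two cases, showing in each that $C$ is fixed or ambiguous by matching the hypothesis to the appropriate clause of Definition~\ref{def:n2d}.

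First I would treat the case where some $u \in U$ has at least $d+1$ neighbors in $C$. Let $U_i$ be the monochromatic part containing $u$. Since $|C| \geq 2d+1$, the fourth clause of Definition~\ref{def:n2d} applies to every vertex of $C$ with respect to $U_i$, so $C \subseteq N^{2d}(U_i)$ and $C$ is fixed.

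In the remaining case, some vertex $v \in C$ has at least $d+1$ neighbors in $U$, and I would distinguish according to how these neighbors are distributed among the parts $U_1, \dots, U_t$. If they meet at least two distinct parts, then $v$ is ambiguous by definition and hence $C$ is ambiguous, as desired. Otherwise all of the at least $d+1$ neighbors of $v$ lie in a single part $U_i$; then $v$ has at least $d+1$ neighbors in $U_i$, and since $|C| \geq 2d+1$, the second clause of Definition~\ref{def:n2d} applies to every vertex of $C$, giving $C \subseteq N^{2d}(U_i)$, so $C$ is again fixed. This exhausts all cases.

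The argument is a direct case analysis, so there is no serious obstacle; the only point requiring care is to invoke the clause of Definition~\ref{def:n2d} that matches each hypothesis. In particular, the second and fourth clauses rely precisely on the assumption $|C| \geq 2d+1$, which is exactly what being \emph{big} provides, and the dichotomy ``neighbors concentrated in one part versus spread over several'' is what separates the fixed outcome from the ambiguous one.
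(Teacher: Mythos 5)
Your proof is correct and follows essentially the same route as the paper's: the same initial dichotomy from Rule~\ref{rule:trivial} (a vertex of $U$ with $d+1$ neighbors in $C$, versus a vertex of $C$ with $d+1$ neighbors in $U$), the same use of the fourth clause of Definition~\ref{def:n2d} in the first case, and the same subdivision of the second case into ``neighbors concentrated in one $U_i$'' (fixed, via the second clause) versus ``spread over several parts'' (ambiguous). The only difference is that you cite the clauses of Definition~\ref{def:n2d} slightly more explicitly than the paper does, which is a matter of presentation rather than substance.
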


\begin{proof}
    Since Rule~\ref{rule:trivial} cannot be applied, every cluster $C$ has either one vertex $v$ with at least $d + 1$ neighbors in $U$ or there is some vertex of a set $U_i$ with $d + 1$ neighbors in $C$.
    In the latter case, by applying the fourth case in the definition of $N^{2d}(U_i)$, we conclude that $C$ is fixed.
    In the former case, either $v$ has $d+1$ neighbors in the same $U_i$, in which case $C$ is fixed, or its neighborhood is spread across multiple monochromatic sets, and so $v$ and, consequently, $C$ are ambiguous.
\end{proof}

Our next goal is to bound the number of vertices outside of $U$.

\begin{rrule}
    \label{rule:fusion}
    If there are two clusters $C_1, C_2$ contained in some $N^{2d}(U_i)$, then add every edge between $C_1$ and $C_2$.
\end{rrule}

\begin{sproof}{\ref{rule:fusion}}
    It follows directly from the fact that $C_1 \cup C_2$ is a larger cluster, $C_1 \cup C_2 \subseteq N^{2d}(U_i)$, and that adding edges between vertices of a monochromatic set preserves the existence of a $d$-cut.
\end{sproof}

The next lemma follows from the pigeonhole principle and exhaustive application of Rule~\ref{rule:fusion}.

\begin{lemma}
    \label{lem:fixed_clusters}
    If $G$ has been reduced by Rules~\ref{rule:trivial} through~\ref{rule:fusion}, then $G$ has $\bigO{|U|}$ fixed clusters.
\end{lemma}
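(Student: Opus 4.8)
The plan is to bound the number of fixed clusters by associating each one with a monochromatic part $U_i$ that it is ``charged'' to, and then arguing that no part can be charged too many times. Recall that by definition a fixed cluster $C$ satisfies $C \subseteq N^{2d}(U_i)$ for some $i$; I would first fix, for each fixed cluster, one such witnessing index $i$. The key structural fact I want to exploit is Rule~\ref{rule:fusion}: after exhaustive application of this rule, any two clusters that are simultaneously contained in the same $N^{2d}(U_i)$ get all edges added between them, and hence are merged into a single cluster. Consequently, \emph{after reduction by Rule~\ref{rule:fusion}}, for each monochromatic part $U_i$ there can be at most one cluster entirely contained in $N^{2d}(U_i)$. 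This is the crux of the counting argument.

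The main step is therefore to show that the charging map from fixed clusters to indices in $\{1, \dots, |U|\}$ is (nearly) injective. Concretely, I would argue as follows. Suppose two distinct fixed clusters $C$ and $C'$ were both charged to the same part $U_i$, meaning $C \subseteq N^{2d}(U_i)$ and $C' \subseteq N^{2d}(U_i)$. Since Rule~\ref{rule:fusion} is no longer applicable, the rule would have added every edge between $C$ and $C'$, making them a single cluster of $G - U$ --- contradicting that they are distinct clusters. Hence the charging map is injective, and the number of fixed clusters is at most $|U| = \bigO{|U|}$. The pigeonhole principle referenced in the statement is precisely this injectivity argument: each of the $|U|$ ``slots'' $N^{2d}(U_i)$ can absorb at most one cluster once Rule~\ref{rule:fusion} has been applied exhaustively, so having more than $|U|$ fixed clusters would force two into the same slot.

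I would double-check one subtlety: a fixed cluster might lie in $N^{2d}(U_i)$ for several indices $i$ simultaneously, so the charging requires choosing a single witness per cluster, which is fine. The potential obstacle is ensuring that ``two clusters both contained in $N^{2d}(U_i)$'' really does trigger Rule~\ref{rule:fusion} in the form stated, i.e.\ that the rule's hypothesis is met exactly by this configuration. Reading Rule~\ref{rule:fusion}, its hypothesis is literally ``two clusters $C_1, C_2$ contained in some $N^{2d}(U_i)$,'' so the match is direct and no additional case analysis about the reasons for membership in $N^{2d}(U_i)$ is needed. Thus the argument reduces to: pick witnesses, invoke exhaustive application of Rule~\ref{rule:fusion} to guarantee at most one fixed cluster per part, and conclude $\bigO{|U|}$ fixed clusters by counting the parts.

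The only remaining care is that the statement presupposes reduction by Rules~\ref{rule:trivial} through~\ref{rule:fusion}; I would note that Rule~\ref{rule:trivial} and the earlier observations (in particular Observation~\ref{obs:fix_amb}) are what guarantee the clusters under consideration are well-behaved, but they play no essential role beyond ensuring the reduced instance is the correct object --- the bound itself follows purely from the injectivity forced by Rule~\ref{rule:fusion}. Hence the proof is short and the expected main obstacle is merely articulating the pigeonhole/injectivity step cleanly rather than any genuine combinatorial difficulty.
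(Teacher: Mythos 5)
Your proposal is correct and matches the paper's argument exactly: the paper justifies Lemma~\ref{lem:fixed_clusters} in one line as following ``from the pigeonhole principle and exhaustive application of Rule~\ref{rule:fusion},'' which is precisely the charging/injectivity argument you spell out. Your write-up is simply a fleshed-out version of that one-liner, with no substantive difference in approach.
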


\begin{rrule}
    \label{rule:shrink}
    If there is some cluster $C$ with at least $2d+2$ vertices such that there is some $v \in C$ with no neighbors in $U$, remove $v$ from $G$.
\end{rrule}

\begin{sproof}{\ref{rule:shrink}}
    That $G$ has a $d$-cut if and only if $G - v$ has a $d$-cut follows directly from the hypothesis that $C$ is monochromatic in $G$ and the fact that $|C \setminus \{v\}| \geq 2d + 1$ implies that $C \setminus \{v\}$ is monochromatic in $G - v$.
\end{sproof}

By Rule~\ref{rule:shrink}, we now have the additional property that, if $C$ has more than $2d+1$ vertices, all of them have at least one neighbor in $U$. The next rule provides a uniform structure between a big cluster $C$ and the sets $U_i$ such that $C \subseteq N^{2d}(U_i)$.

\begin{rrule}
    \label{rule:normalization1}
    If a cluster $C$ has at least $2d+1$ elements and there is some $U_i$ such that $C \subseteq N^{2d}(U_i)$, remove all edges between $C$ and $U_i$, choose $u \in U_i$, $\{v_1, \dots, v_{d+1}\} \subseteq C$ and add the edges $\{uv_i\}_{i \in [d+1]}$ to $G$.
\end{rrule}

\begin{sproof}{\ref{rule:normalization1}}
    Let $G'$ be the graph obtained after the operation is applied.
    If $G$ has some $d$-cut $(A,B)$, since $U_i \cup N^{2d}(U_i)$ is monochromatic, no edge between $U_i$ and $C$ crosses the cut, so $(A,B)$ is also a $d$-cut of $G'$.
    For the converse, take a $d$-cut $(A', B')$ of $G'$.
    Since $C$ has at least $2d+1$ vertices and there is some $u \in U_i$ such that $|N(u) \cap C| = d+1$, $C \in N^{2d}(U_i)$ in $G'$.
    Therefore, no edge between $C$ and $U_i$ crosses the cut and $(A', B')$ is also a $d$-cut of $G$.
\end{sproof}

We have now effectively bounded the number of vertices in big clusters by a polynomial in $U$, as shown below.

\begin{lemma}
    \label{lem:bound1}
    If $G$ has been reduced by Rules~\ref{rule:trivial} through~\ref{rule:normalization1}, then $G$ has $\bigO{d|U|^2}$ ambiguous vertices and $\bigO{d|U|^2}$ big clusters, each with $\bigO{d|U|}$ vertices.
\end{lemma}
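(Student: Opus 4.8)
The plan is to prove the three bounds separately, charging each counted object to a low-dimensional combinatorial structure that is controlled by one of the reduction rules applied exhaustively.

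First I would bound the number of \emph{ambiguous} vertices using Rule~\ref{rule:attraction}. By definition, every ambiguous vertex $v$ has neighbors in two distinct monochromatic parts, so there exist $u \in U_i$ and $u' \in U_j$ with $i \neq j$ and $u,u' \in N(v)$; in other words, $v$ is a common neighbor of the cross-part pair $\{u,u'\}$. Since the instance is reduced by Rule~\ref{rule:attraction}, no two vertices lying in distinct parts of $U$ can have $2d+1$ common neighbors, hence each such pair has at most $2d$ common neighbors. Charging every ambiguous vertex to one pair $\{u,u'\}$ witnessing its ambiguity, and using that there are at most $\binom{|U|}{2}$ cross-part pairs, yields at most $2d\binom{|U|}{2} = \bigO{d|U|^2}$ ambiguous vertices. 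The number of big clusters then follows quickly: by Observation~\ref{obs:fix_amb} every big cluster is fixed or ambiguous; Lemma~\ref{lem:fixed_clusters} gives $\bigO{|U|}$ fixed clusters, and since distinct clusters are vertex-disjoint, the number of ambiguous clusters is at most the number of ambiguous vertices, i.e.\ $\bigO{d|U|^2}$, so there are $\bigO{d|U|^2}$ big clusters in total.

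The per-cluster size bound is the most delicate part, and the one I expect to be the main obstacle. Fix a big cluster $C$; the case $|C| = 2d+1$ is trivially $\bigO{d} \subseteq \bigO{d|U|}$, so I would assume $|C| \geq 2d+2$. Because the instance is reduced by Rule~\ref{rule:shrink}, every vertex of $C$ has at least one neighbor in $U$, so $|C|$ is at most the number of edges between $C$ and $U$, namely $\sum_{u \in U}|N(u)\cap C|$. The key observation is that, by the third case of Definition~\ref{def:n2d}, any $u \in U_i$ with $|N(u)\cap C| \geq 2d$ forces $C \subseteq N^{2d}(U_i)$, i.e.\ makes $C$ fixed with respect to $U_i$. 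Thus, if $C$ is not fixed, then $|N(u)\cap C| \leq 2d-1$ for every $u \in U$, and the bound $|C| \leq (2d-1)|U| = \bigO{d|U|}$ follows immediately.

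The remaining fixed case is where the interplay between the rules must be handled carefully, and this is the step I consider most error-prone. Since the instance is reduced by Rule~\ref{rule:transitivity}, a fixed cluster lies in $N^{2d}(U_i)$ for exactly one part $U_i$ (two such parts would share $C$ in their $N^{2d}$-sets and be merged). After Rule~\ref{rule:normalization1} there are exactly $d+1$ edges between $C$ and $U_i$, while for every other part $U_j$ with $j \neq i$, no $u' \in U_j$ can have $2d$ neighbors in $C$, as this would place $C$ in $N^{2d}(U_j)$ as well and trigger a merge through Rule~\ref{rule:transitivity}; hence each such $u'$ contributes at most $2d-1$ edges. Every vertex of $C$ other than the $d+1$ endpoints of the $U_i$-edges must therefore draw its guaranteed $U$-neighbor from $\bigcup_{j \neq i} U_j$, giving $|C| \leq (d+1) + (2d-1)|U| = \bigO{d|U|}$. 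The care required here is to reason about the \emph{fully} reduced instance, in which Rule~\ref{rule:shrink} and Rule~\ref{rule:normalization1} have simultaneously reached a fixed point, so that the ``$d+1$ edges to $U_i$'' structure coexists with the ``every vertex of a large cluster has a $U$-neighbor'' property; checking that these two rules do not indefinitely undo one another (and hence that such a reduced instance exists) is the crux of making the fixed-cluster argument rigorous.
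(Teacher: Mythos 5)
Your proof is correct and follows essentially the same route as the paper's: ambiguous vertices are charged to cross-part pairs controlled by Rule~\ref{rule:attraction}, the cluster count combines Observation~\ref{obs:fix_amb} with Lemma~\ref{lem:fixed_clusters}, and the size bound uses Rule~\ref{rule:shrink} together with a fixed/non-fixed case split resting on Rules~\ref{rule:transitivity} and~\ref{rule:normalization1}. The only deviations are cosmetic: you invoke the third case of Definition~\ref{def:n2d} (threshold $2d$) where the paper uses the fourth case (threshold $d+1$, available precisely because the cluster is big), giving a slightly worse constant but the same $\bigO{d|U|}$ bound, and the fixed-point worry you flag at the end is not this lemma's burden, since the lemma merely assumes a reduced instance while termination of the rules is handled in the paper's kernel theorem.
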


\begin{proof}
    To show the bound on the number of ambiguous vertices, take any two vertices $u \in U_i$, $u' \in U_j$.
    Since we have $\binom{|U|}{2}$ such pairs, if we had at least $(2d + 1)\binom{|U|}{2}$ ambiguous vertices, by the pigeonhole principle, there would certainly be $2d+1$ vertices in $V \setminus U$ that are adjacent to one pair, say $u$ and $u'$.
    This, however, contradicts the hypothesis that Rule~\ref{rule:attraction} has been applied, and so we have $\bigO{d|U|^2}$ ambiguous vertices.

    The above discussion, along with Lemma~\ref{lem:fixed_clusters} and Observation~\ref{obs:fix_amb}, imply that the number of big clusters is $\bigO{d|U|^2}$.
    For the bound on their sizes, take some cluster $C$ with at least $2d + 2$ vertices.
    Due to the application of Rule~\ref{rule:shrink}, every vertex of $C$ has at least one neighbor in $U$.
    Moreover, there is at most one $U_i$ such that $C \subseteq N^{2d}(U_i)$, otherwise we would be able to apply Rule~\ref{rule:transitivity}.

    Suppose first that there is such a set $U_i$.
    By Rule~\ref{rule:normalization1}, there is only one $u \in U_i$ that has neighbors in $C$; in particular, it has $d+1$ neighbors.
    Now, every $v \in U_j$, for every $j\neq i$, has at most $d$ neighbors in $C$, otherwise $C \subseteq N^{2d}(U_j)$ and Rule~\ref{rule:transitivity} would have been applied.
    Therefore, we conclude that $C$ has at most $(d+1) +  \sum_{v \in U \setminus U_i} |N(u) \cap C| \leq (d+1) + d|U| \in \bigO{d|U|}$ vertices.

    Finally, suppose that there is no $U_i$ such that $C \subseteq N^{2d}(U_i)$.
    A similar analysis from the previous case can be performed: every $u \in U_i$ has at most $d$ neighbors in $C$, otherwise $C \subseteq N^{2d}(U_i)$ and we conclude that $C$ has at most $\sum_{v \in U} |N(u) \cap C| \leq d|U| \in \bigO{d|U|}$ vertices.
\end{proof}

We are now left only with an unbounded number of small clusters.
A cluster $C$ is \tdef{simple} if it is not ambiguous, that is, if for each $v \in C$, $v$ has neighbors in a single $U_i$.
Otherwise, $C$ is ambiguous and, because of Lemma~\ref{lem:bound1}, there are at most $\bigO{d|U|^2}$ such clusters.
As such, for a simple cluster $C$ and a vertex $v \in C$, we denote by $U(v)$ the monochromatic set of $U$ to which $v$ is adjacent.

\begin{rrule}
    \label{rule:super_small}
    If $C$ is a simple cluster with at most $d+1$ vertices, remove $C$ from $G$.
\end{rrule}

\begin{sproof}{\ref{rule:super_small}}
    Let $G' = G - C$.
    Suppose $G$ has a $d$-cut $(A,B)$ and note that $A \nsubseteq C$ and $B \nsubseteq C$ since Rule~\ref{rule:trivial} does not apply.
    This implies that $(A \setminus C, B \setminus C)$ is a valid $d$-cut of $G'$.
    For the converse, take a $d$-cut $(A', B')$ of $G'$, define $C_A = \{v \in C \mid U(v) \subseteq A\}$, and define $C_B$ similarly; we claim that $(A' \cup C_A, B' \cup C_B)$ is a $d$-cut of $G$.
    To see that this is the case, note that each vertex of $C_A$ (resp. $C_B$) has at most $d$ edges to $C_B$ (resp. $C_A$) and, since $C$ is simple, $C_A$ (resp. $C_B$) has no other edges to $B'$ (resp. $A'$).
\end{sproof}

After applying the previous rule, every cluster $C$ not yet analyzed has size $d+2 \leq |C| \leq 2d$ which, in the case of the \pname{Matching Cut} problem, where $d=1$, is empty.
To deal with these clusters, given a $d$-cut $(A, B)$, we say that a vertex $v$ is in its \tdef{natural assignment} if $v \cup U(v)$ is in the same side of the cut; otherwise the vertex is in its \tdef{unnatural assignment}.
Similarly, a cluster is \tdef{unnaturally assigned} if it has an unnaturally assigned vertex, otherwise it is \tdef{naturally assigned}.

\begin{observation}
    \label{obs:constrained_unnatural}
    Let $\mathcal{C}$ be the set of all simple clusters with at least $d+2$ and no more than $2d$ vertices, and $(A,B)$ a partition of $V(G)$.
    If there are $d|U|+1$ edges $uv$, $v \in C \in \mathcal{C}$ and $u \in U$, such that $uv$ is crossing the partition, then $(A,B)$ is not a $d$-cut.
\end{observation}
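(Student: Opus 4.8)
The plan is to argue by contraposition: I assume that $(A,B)$ is a $d$-cut and then bound the number of crossing edges of the prescribed form by $d|U|$, contradicting the hypothesis that there are $d|U|+1$ of them. The whole argument rests solely on the defining degree constraint of a $d$-cut applied to the vertices of $U$; the structural hypotheses on $\mathcal{C}$ (simple clusters of size between $d+2$ and $2d$) are not actually needed for the observation itself, and only matter for the reduction rule that will later invoke it.

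First I would recall that, by definition, if $(A,B)$ is a $d$-cut then every vertex of $G$ has at most $d$ neighbors on the opposite side of the partition; in particular, each $u \in U$ is incident to at most $d$ crossing edges. Next I would observe that every edge $uv$ counted in the hypothesis has $u \in U$ and $v \in C$ for some cluster $C$ of $G - U$, so that $v \notin U$ and the edge has \emph{exactly one} endpoint in $U$. Consequently, each such crossing edge can be charged to its unique endpoint $u \in U$, and each $u$ receives at most $d$ charges (since the edges going to clusters of $\mathcal{C}$ form a subset of the at most $d$ crossing edges incident to $u$). Summing over the $|U|$ vertices of $U$ yields at most $d|U|$ crossing edges of the prescribed form. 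Since the hypothesis provides $d|U|+1$ of them, we reach a contradiction, so $(A,B)$ cannot be a $d$-cut.

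The argument involves no genuine obstacle; the only point requiring a little care is the bookkeeping. One must note that the relevant edges have precisely one endpoint in $U$, because the other endpoint lies in a cluster of $G - U$, so that no edge is counted twice and the per-vertex bound of $d$ aggregates cleanly to the exact total $d|U|$. I would emphasize this to make clear that the count is tight rather than a double-counted overestimate, which is what allows the sharp threshold $d|U|+1$ in the statement.
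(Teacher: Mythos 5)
Your proof is correct and is essentially the paper's own argument in contrapositive form: the paper applies pigeonhole to the $d|U|+1$ crossing edges to find a vertex $u \in U$ with $d+1$ neighbors across the cut, while you sum the per-vertex bound of $d$ over $U$ to cap the count at $d|U|$; these are the same counting argument. Your added remark that each such edge has exactly one endpoint in $U$, and that the size bounds on the clusters in $\mathcal{C}$ are not actually used, is accurate and consistent with the paper.
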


\begin{proof}
    Since there are $d|U| + 1$ edges crossing the partition between $\mathcal{C}$ and $U$, there must be at least one $u \in U$ with $d+1$ neighbors in the other set of the partition.
\end{proof}

\begin{corollary}
    \label{cor:constrained_unnatural}
    In any $d$-cut of $G$, there are at most $d|U|$ unnaturally assigned vertices.
\end{corollary}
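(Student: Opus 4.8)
The plan is to obtain the corollary directly from Observation~\ref{obs:constrained_unnatural} by a counting argument, reasoning by contradiction. The notion of (un)natural assignment is only defined for vertices lying in simple clusters with $d+2 \leq |C| \leq 2d$, i.e.\ for vertices of the clusters in $\mathcal{C}$, so the unnaturally assigned vertices are exactly those I need to match against crossing $\mathcal{C}$--$U$ edges.

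First I would fix an arbitrary $d$-cut $(A,B)$ and argue that every unnaturally assigned vertex contributes at least one edge between $\mathcal{C}$ and $U$ that crosses the partition. Indeed, if $v$ is unnaturally assigned, then $v$ lies in some simple cluster $C \in \mathcal{C}$, the set $U(v)$ is well-defined (because $C$ is simple), and $v$ sits on the side of the cut opposite to $U(v)$. By the very definition of $U(v)$, the vertex $v$ is adjacent to $U(v)$, so there is some $u \in U(v) \subseteq U$ for which the edge $vu$ crosses the cut. Since this crossing edge has its $\mathcal{C}$-endpoint equal to $v$, distinct unnaturally assigned vertices give rise to distinct crossing edges, and therefore the number of $\mathcal{C}$--$U$ edges crossing the partition is at least the number of unnaturally assigned vertices.

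I would then conclude by contradiction. Suppose some $d$-cut had at least $d|U|+1$ unnaturally assigned vertices; by the previous step this would produce at least $d|U|+1$ edges between $\mathcal{C}$ and $U$ crossing the cut, and Observation~\ref{obs:constrained_unnatural} would then force $(A,B)$ not to be a $d$-cut, a contradiction. Hence every $d$-cut has at most $d|U|$ unnaturally assigned vertices.

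The only delicate point, and the mild obstacle I anticipate, is the bookkeeping in the second step: I must confirm that $U(v)$ is defined for every unnaturally assigned vertex (guaranteed by simplicity of its cluster, so that the witnessing crossing edge $vu$ exists and lands in $U$) and that the map $v \mapsto vu$ is injective (clear from the distinctness of the vertices $v$). No genuine computation is required beyond this; the quantitative content is entirely carried by Observation~\ref{obs:constrained_unnatural}.
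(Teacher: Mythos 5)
Your proof is correct and matches the paper's intent: the paper states this corollary without an explicit proof, precisely because it follows from Observation~\ref{obs:constrained_unnatural} by the counting argument you give (each unnaturally assigned vertex is separated from $U(v)$ and hence contributes a distinct crossing $\mathcal{C}$--$U$ edge, so more than $d|U|$ such vertices would contradict the observation). Your bookkeeping about $U(v)$ being well-defined via simplicity and the injectivity of $v \mapsto vu$ is exactly the implicit content of the paper's deduction.
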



Our next lemma limits how many clusters in $\mathcal{C}$ relate in a similar way to $U$; we say that two simple clusters $C_1, C_2$ have the same \tdef{pattern} if they have the same size $s$ and there is a total ordering of $C_1$ and another of $C_2$ such that, for every $i \in [s]$, $v_i^1 \in C_1$ and $v_i^2 \in C_2$ satisfy $U(v_i^1) = U(v_i^2)$.
Essentially, clusters that have the same pattern have neighbors in exactly the same monochromatic sets of $U$ and the same multiplicity in terms of how many of their vertices are adjacent to a same monochromatic set $U_i$. Note that the actual neighborhoods in the sets $U_i$'s do not matter in order for two clusters to have the same pattern.
Figure~\ref{fig:assignment} gives an example of a maximal set of unnaturally assigned clusters; that is, any other cluster with the same pattern as the one presented must be naturally assigned, otherwise some vertex of $U$ will violate the $d$-cut property.

\begin{figure}[!htb]
        \centering
        \begin{tikzpicture}[rotate = 90]
                \GraphInit[unit=3,vstyle=Normal]
                \SetVertexNormal[Shape=circle, FillColor = black, MinSize=3pt]
                \tikzset{VertexStyle/.append style = {inner sep = \inners, outer sep = \outers}}
                \SetVertexNoLabel
                \begin{scope}[shift={(-3, 0)}]
                    \tikzset{VertexStyle/.append style = {shape = rectangle, inner sep = 2pt}}
                    \node at (0, 2.6) {$U_2$};
                    \draw (-0.3, 2.3) rectangle (0.3, -2.3);
                    \Vertex[x = 0, y = 2]{u11}
                    \Vertex[x = 0, y = -2]{u12}
                \end{scope}
                \begin{scope}[shift={(3, 0)}]
                    \node at (0, 2.6) {$U_1$};
                    \draw (-0.3, 2.3) rectangle (0.3, -2.3);
                    \Vertex[x = 0, y = 2]{u21}
                    \Vertex[x = 0, y = -2]{u22}
                \end{scope}

                \begin{scope}[shift={(0, 3)}]
                    \grComplete[RA=0.6, prefix=q1]{6}
                    \begin{scope}
                        \tikzset{VertexStyle/.append style = {shape = rectangle, inner sep = 2pt}}
                        \Vertex[Node]{q10}
                        \Vertex[Node]{q11}
                        \Vertex[Node]{q15}
                    \end{scope}
                    \Edge(u11)(q13)
                    \Edge(u12)(q14)
                    \Edge(u21)(q10)
                    \Edge(u22)(q15)
                \end{scope}

                \begin{scope}[shift={(0, 1)}]
                    \grComplete[RA=0.6, prefix=q2]{6}
                    \begin{scope}
                        \tikzset{VertexStyle/.append style = {shape = rectangle, inner sep = 2pt}}
                        \Vertex[Node]{q20}
                        \Vertex[Node]{q21}
                        \Vertex[Node]{q25}
                    \end{scope}
                    \Edge(u11)(q22)
                    \Edge(u12)(q23)
                    \Edge(u22)(q20)
                    \Edge(u21)(q21)
                \end{scope}

                \begin{scope}[shift={(0, -1)}]
                    \grComplete[RA=0.6, prefix=q3]{6}
                    \begin{scope}
                        \tikzset{VertexStyle/.append style = {shape = rectangle, inner sep = 2pt}}
                        \Vertex[Node]{q30}
                        \Vertex[Node]{q31}
                        \Vertex[Node]{q35}
                    \end{scope}
                    \Edge(u12)(q34)
                    \Edge(u11)(q33)
                    \Edge(u21)(q30)
                    \Edge(u22)(q35)
                \end{scope}

                \begin{scope}[shift={(0, -3)}]
                    \grComplete[RA=0.6, prefix=q4]{6}
                    \begin{scope}
                        \tikzset{VertexStyle/.append style = {shape = rectangle, inner sep = 2pt}}
                        \Vertex[Node]{q40}
                        \Vertex[Node]{q41}
                        \Vertex[Node]{q45}
                    \end{scope}
                    \Edge(u11)(q42)
                    \Edge(u12)(q43)
                    \Edge(u22)(q40)
                    \Edge(u21)(q41)
                \end{scope}

        \end{tikzpicture}
        \caption{\centering Example for $d=4$ of a maximal set of unassigned clusters. Squared (resp. circled) vertices would be assigned to $A$ (resp. $B$).\label{fig:assignment}}
    \end{figure}
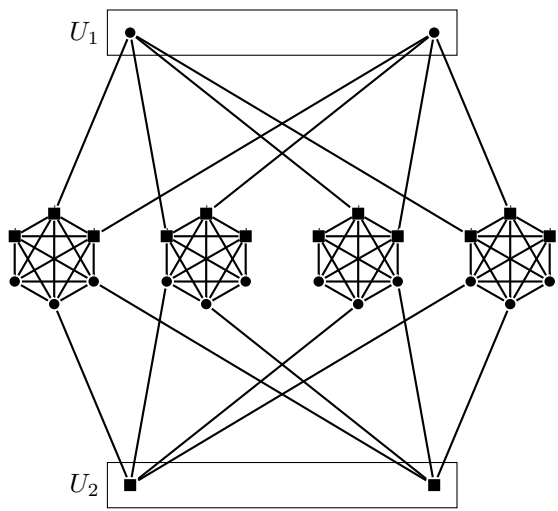


\begin{lemma}
    \label{lem:patterns}
    Let $\mathcal{C}^* \subseteq \mathcal{C}$ be a subfamily of simple clusters, all with the same pattern, with $|\mathcal{C}^*| > d|U| + 1$. Let $C$ be some cluster of $\mathcal{C}^*$, and $G' = G - C$. Then
    $G$ has a $d$-cut if and only if $G'$ has a $d$-cut.
\end{lemma}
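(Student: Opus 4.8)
The plan is to prove the two implications separately, treating the forward direction as routine and concentrating on the backward one, which is where the hypothesis $|\mathcal{C}^*| > d|U|+1$ is actually used, via a pigeonhole argument on \emph{naturally assigned} clusters.

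For the forward direction, I would take a $d$-cut $(A,B)$ of $G$ and simply restrict it to $G' = G - C$, obtaining $(A \setminus C, B \setminus C)$. The $d$-cut property is inherited, since deleting vertices cannot increase anyone's number of crossing neighbors; the only point requiring care is non-emptiness of both sides. This follows because $G$ is reduced under Rule~\ref{rule:trivial}: if one side, say $A$, were contained in $C$, then since $|C| \le 2d$ the pair $(A, V(G) \setminus A)$ would be a $d$-cut with $A \subseteq C$, triggering the second case of Rule~\ref{rule:trivial} (or its first case when $A = C$), contradicting that $G$ is reduced. Hence $A \setminus C \neq \emptyset$ and $B \setminus C \neq \emptyset$.

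For the backward direction, I would start from a $d$-cut $(A', B')$ of $G'$ and first observe that Corollary~\ref{cor:constrained_unnatural} (through Observation~\ref{obs:constrained_unnatural}) applies verbatim to $G'$: removing a single cluster does not affect the edge-counting argument between $\mathcal{C}$ and $U$, so $(A', B')$ has at most $d|U|$ unnaturally assigned clusters. Since the clusters of $\mathcal{C}^* \setminus \{C\}$ number $|\mathcal{C}^*| - 1 \ge d|U| + 1 > d|U|$ and all share the pattern of $C$, at least one of them, say $C^\circ$, must be naturally assigned; that is, every vertex $w \in C^\circ$ lies on the same side of $(A',B')$ as its monochromatic set $U(w)$.

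I would then use the pattern bijection $\phi : C \to C^\circ$ with $U(v) = U(\phi(v))$ to transport this assignment to $C$: place each $v \in C$ on the side of $(A',B')$ containing $U(v)$, which is well defined because each $U_i$ is monochromatic. This yields a partition $(A,B) \supseteq (A',B')$ of $V(G)$ that is non-empty on both sides. To verify it is a $d$-cut I would check three groups of vertices. For $u \in U$: because $C$ is placed naturally, every edge between $C$ and $U$ joins $C$ to $U(v)$ on the same side, so none of them crosses and $u$'s crossing degree is unchanged from $G'$. For $v \in C$: its neighbors are exactly $C \setminus \{v\}$ together with $N(v) \cap U(v)$, the latter non-crossing by natural assignment; moreover $\phi$ carries each side of $C$ onto the corresponding side of $C^\circ$, so the number of neighbors of $v$ across the cut inside $C$ equals that of $\phi(v)$ inside $C^\circ$, which is at most $d$ since $(A',B')$ is a $d$-cut and all crossing neighbors of $\phi(v)$ lie in $C^\circ$. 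All remaining vertices keep their neighborhoods and sides untouched. Hence $(A,B)$ is a $d$-cut of $G$.

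The main obstacle I anticipate is making the backward-direction bookkeeping airtight: confirming that Corollary~\ref{cor:constrained_unnatural} transfers to $G'$ without loss, and that the pattern equivalence genuinely furnishes a side-preserving bijection, so that copying the assignment of $C^\circ$ onto $C$ keeps every vertex's crossing degree at most $d$ while creating no new crossing edge incident to $U$.
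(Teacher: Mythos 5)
Your proposal is correct and takes essentially the same approach as the paper: the forward direction uses Rule~\ref{rule:trivial} to guarantee both sides survive the deletion of $C$, and the backward direction pigeonholes via Corollary~\ref{cor:constrained_unnatural} to find a naturally assigned cluster of the same pattern in $\mathcal{C}^* \setminus \{C\}$, then naturally assigns $C$ and checks that no vertex of $U$ gains crossing edges while vertices of $C$ cross only inside $C$. Your explicit side-preserving bijection $\phi$ is simply a more detailed rendering of the paper's observation that equal size and equal pattern force the two clusters to split identically, so the same crossing-degree bound of $d$ transfers.
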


\begin{proof}
    Since by Rule~\ref{rule:trivial} no subset of a small cluster is alone in a side of a partition and, consequently, $U$ intersects both sides of the partition, if $G$ has a $d$-cut, so does $G'$.

    For the converse, let $(A', B')$ be a $d$-cut of $G'$.
    First, by Corollary~\ref{cor:constrained_unnatural}, we know that at least one of the clusters of $\mathcal{C}^* \setminus \{C\}$, say $C_{{\sf n}}$, is naturally assigned.
    Since all the clusters in $\mathcal{C^*}$ have the same pattern, this guarantees that {\sl any} of the vertices of a naturally assigned cluster cannot have more than $d$ neighbors in the other side of the partition.

    Let $(A,B)$ be the bipartition of $V(G)$ obtained from $(A',B')$ such that $u \in C$ is in $A$ (resp. $B$) if and only if $U(u) \subseteq A$ (resp. $U(u) \subseteq B$); that is, $C$ is naturally assigned.
    Define $C_A = C \cap A$ and $C_B = C \cap B$.
    Because $|C| = |C_{{\sf n}}|$ and both belong to $\mathcal{C}^*$, we know that for every $u \in C_A$, it holds that $|N(u) \cap C_B| \leq d$; moreover, note that $N(u) \cap (B \setminus C) = \emptyset$. A symmetric analysis applies to every $u \in C_B$.
    This implies that no vertex of $C$ has additional neighbors in the other side of the partition outside of its own cluster and, therefore, $(A, B)$ is a $d$-cut of $G$.
\end{proof}

The safeness of our last rule follows directly from Lemma~\ref{lem:patterns}.

\begin{rrule}
    \label{rule:pattern_removal}
    If there is some pattern such that the number of simple clusters with that pattern is at least $d|U|+2$, delete all but $d|U|+1$ of them.
\end{rrule}

\begin{lemma}
    \label{lem:bound2}
    After exhaustive application of Rules~\ref{rule:trivial} through~\ref{rule:pattern_removal}, $G$ has $\bigO{d|U|^{2d}}$ small clusters and $\bigO{d^2|U|^{2d+1}}$ vertices in these clusters.
\end{lemma}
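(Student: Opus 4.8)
The goal is to bound the number of small clusters — those of size between $d+2$ and $2d$ that survive Rule~\ref{rule:super_small} — and the total number of vertices they contain. I would organize the argument around two quantities: the number of distinct \emph{patterns} that such clusters can exhibit, and the number of clusters sharing a common pattern, which is controlled by Rule~\ref{rule:pattern_removal}. Multiplying these two bounds gives the count of small clusters, and since each such cluster has at most $2d = \bigO{d}$ vertices, the vertex bound follows by one further multiplication.

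\textbf{Key steps.} First I would count the number of possible patterns. Recall a simple cluster $C$ has $d+2 \leq |C| \leq 2d$ vertices, and each vertex $v \in C$ is assigned a single monochromatic set $U(v) \in U$ (simplicity guarantees a unique such set). A pattern records, up to reordering, the multiset $\{U(v) : v \in C\}$. Since there are $|U|$ choices for each of the at most $2d$ vertices, the number of ordered assignments is at most $|U|^{2d}$, and hence the number of distinct patterns is $\bigO{|U|^{2d}}$ (the size $s$ of the cluster contributes only a constant factor, as $s$ ranges over the $d-1$ values in $\{d+2,\dots,2d\}$, absorbed into the $\bigO{\cdot}$ together with a factor of $d$). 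Second, after exhaustive application of Rule~\ref{rule:pattern_removal}, each pattern is realized by at most $d|U|+1 = \bigO{d|U|}$ simple clusters. Multiplying, the number of small clusters is $\bigO{|U|^{2d}} \cdot \bigO{d|U|} = \bigO{d|U|^{2d+1}}$; however, the claimed bound is $\bigO{d|U|^{2d}}$, so I would need to sharpen the pattern count to $\bigO{|U|^{2d-1}}$ — this is achievable because a single monochromatic set $U_i$ can account for at most $d$ of the cluster's vertices before the cluster would instead fall into $N^{2d}(U_i)$ (being then fixed rather than simple), so not every one of the $2d$ slots independently ranges over all $|U|$ sets. Finally, the vertex bound is immediate: each of the $\bigO{d|U|^{2d}}$ small clusters has $\bigO{d}$ vertices, yielding $\bigO{d^2|U|^{2d+1}}$ vertices, matching the statement.

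\textbf{Main obstacle.} The delicate point is reconciling the naive pattern count of $|U|^{2d}$ with the sharper bound $\bigO{d|U|^{2d}}$ small clusters (rather than $\bigO{d|U|^{2d+1}}$) claimed in the lemma. The resolution lies in observing that the $d|U|+1$ factor from Rule~\ref{rule:pattern_removal} and the per-pattern structure interact: since a cluster has at most $2d$ vertices and each monochromatic set contributes at most $d$ of them, the effective number of free coordinates in the pattern is reduced, trimming one factor of $|U|$. I expect verifying this interplay — ensuring the combinatorial bookkeeping between ``number of patterns'' and ``clusters per pattern'' multiplies out to exactly the stated exponent — to be the step requiring the most care; everything else is a direct application of the pigeonhole principle together with Rules~\ref{rule:super_small} and~\ref{rule:pattern_removal} and the size bound $|C| \leq 2d$.
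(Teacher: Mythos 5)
Your overall scheme---count the possible patterns, multiply by the per-pattern bound $d|U|+1$ coming from Rule~\ref{rule:pattern_removal}, then multiply by the cluster size $\bigO{d}$---is exactly the paper's argument, and carried out naively it gives $\bigO{|U|^{2d}}$ patterns, hence $\bigO{d|U|^{2d+1}}$ small clusters and $\bigO{d^2|U|^{2d+1}}$ vertices. The genuine gap is your attempted sharpening of the pattern count to $\bigO{|U|^{2d-1}}$. The claim it rests on is false: for a cluster $C$ with $|C| \leq 2d$, membership of $C$ in $N^{2d}(U_i)$ (Definition~\ref{def:n2d}) requires either a \emph{single} vertex of $C$ with at least $d+1$ neighbors in $U_i$ (first case) or a \emph{single} vertex of $U_i$ with $2d$ neighbors in $C$ (third case); the second and fourth cases need $|C| \geq 2d+1$. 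So a simple small cluster can perfectly well have all of its vertices assigned to the same $U_i$, each with exactly one neighbor there, without becoming fixed: nothing limits a monochromatic set to ``accounting for'' at most $d$ of the cluster's vertices. Moreover, even if such a multiplicity bound did hold, it would not trim a factor of $|U|$: multisets of size at most $2d$ over $|U|$ parts in which every part has multiplicity at most $d$ (or even at most $1$) still number $\binom{|U|}{2d} = \Theta\!\left(|U|^{2d}\right)$ for fixed $d$, so the exponent cannot drop this way.

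What you stumbled on is in fact a discrepancy in the paper itself: its proof of Lemma~\ref{lem:bound2} derives $\bigO{|U|^{2d}}$ patterns and $d|U|+1$ clusters per pattern, which yields $\bigO{d|U|^{2d+1}}$ small clusters, not the $\bigO{d|U|^{2d}}$ stated; the proof then passes directly to the vertex bound $\bigO{d^2|U|^{2d+1}}$, which is the only quantity used later (in the kernel-size theorem), so nothing downstream is affected. Also note an arithmetic slip on your end: had your sharpened count of $\bigO{d|U|^{2d}}$ clusters been correct, multiplying by $\bigO{d}$ vertices per cluster would give $\bigO{d^2|U|^{2d}}$ vertices, not the $\bigO{d^2|U|^{2d+1}}$ you wrote (still within the stated bound, but it shows the bookkeeping is inconsistent). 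The correct and sufficient route is simply: $\bigO{|U|^{2d}}$ patterns (summing $|U|^s$ over $d+2 \leq s \leq 2d$), at most $d|U|+1$ clusters per pattern by Rule~\ref{rule:pattern_removal}, and at most $2d$ vertices per cluster, giving $\bigO{d|U|^{2d+1}}$ small clusters and $\bigO{d^2|U|^{2d+1}}$ vertices in them.
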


\begin{proof}
    By Rule~\ref{rule:super_small}, no small cluster with less than $d+2$ vertices remains in $G$.
    Now, for the remaining sizes, for each $d+2 \leq s \leq 2d$, and each pattern of size $s$, by Rule~\ref{rule:pattern_removal} we know that the number of clusters with $s$ vertices that have the same pattern is at most $d|U| + 1$.
    Since we have at most $|U|$ possibilities for each of the $s$ vertices of a cluster, we end up with $\bigO{|U|^{s}}$ possible patterns for clusters of size $s$.
    Summing all of them up, we get that we have $\bigO{|U|^{2d}}$ patterns in total, and since each one has at most $d|U| + 1$ clusters of size at most $2d$, we get that we have at most $\bigO{d^2|U|^{2d+1}}$ vertices in those clusters.
\end{proof}

The exhaustive application of all the above rules and their accompanying lemmas are enough to show that indeed, there is a polynomial kernel for \pname{$d$-Cut} when parameterized by distance to cluster.

\begin{theorem}
    When parameterized by distance to cluster $\dc(G)$, \pname{$d$-Cut} admits a polynomial kernel with $\bigO{d^2\dc(G)^{2d+1}}$ vertices that can be computed in $\bigO{d^4\dc(G)^{2d+1}(n+m)}$ time.
\end{theorem}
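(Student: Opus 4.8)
The plan is to assemble the final kernel from the reduction rules and size bounds already established, so the proof is primarily a bookkeeping argument that (i) all rules are safe, (ii) the total vertex count is polynomial in $\dc(G)$, and (iii) the whole preprocessing runs in the claimed time. First I would fix a minimum-size set $U$ witnessing $\dc(G)$, which can be computed in $\bigO{1.9102^{\dc(G)}(n+m)}$ time (or any standard FPT bound for distance to cluster); since we only need $|U| = \dc(G)$ for the size analysis and the running time is dominated by the rule applications, the exact constant is immaterial. I would then initialize each $U_i$ as a singleton and note that, by the accompanying observations, the invariant ``every $U_i$ is monochromatic'' is maintained by every merge operation, so the sets $U_i$ remain monochromatic throughout.

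Next I would argue correctness by invoking safeness of each rule in turn: Rule~\ref{rule:trivial} correctly reports $\YES$; the merge rules (Rules~\ref{rule:transitivity} and~\ref{rule:attraction}) are safe because they only merge sets whose union is monochromatic; and Rules~\ref{rule:fusion} through~\ref{rule:pattern_removal} are safe by the sproof blocks and Lemma~\ref{lem:patterns} already proven. Applying the rules exhaustively in the stated order yields a reduced instance in which, by Observation~\ref{obs:fix_amb} and Lemmas~\ref{lem:fixed_clusters},~\ref{lem:bound1}, and~\ref{lem:bound2}, every category of vertex outside $U$ is bounded. Concretely, I would tally the contributions: $|U| = \dc(G)$ vertices in $U$; the $\bigO{\dc(G)}$ fixed clusters and $\bigO{d\dc(G)^2}$ big clusters each contribute $\bigO{d\dc(G)}$ vertices by Lemma~\ref{lem:bound1}, giving $\bigO{d^2\dc(G)^3}$ vertices; and the small clusters contribute $\bigO{d^2\dc(G)^{2d+1}}$ vertices by Lemma~\ref{lem:bound2}. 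Since $2d+1 \geq 3$, the small-cluster term dominates, and together with the private cliques $X_i$ of size $2d$ (one per nonempty $U_i$, contributing $\bigO{d\dc(G)}$ vertices) we obtain the claimed total of $\bigO{d^2\dc(G)^{2d+1}}$ vertices.

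For the running time, I would observe that each individual rule can be tested and applied in time $\bigO{d^2\dc(G)^{2d+1}(n+m)}$: computing the sets $N^{2d}(U_i)$ and detecting applicability of Rules~\ref{rule:trivial}--\ref{rule:normalization1} requires scanning the adjacencies between $U$ and the clusters, while Rule~\ref{rule:pattern_removal} requires grouping simple clusters by pattern, which can be done by sorting their neighborhood signatures. The key point is that each successful application of a rule either strictly decreases the number of vertices, strictly decreases the number of monochromatic parts (a merge), or adds edges inside a set that is about to become a single larger cluster; hence the number of rule applications is polynomially bounded, and I would argue it is $\bigO{d^2\dc(G)^{2d+1}}$ in the worst case, giving the stated overall bound after multiplying by the per-application cost.

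The main obstacle I anticipate is not any single rule but the careful accounting that the exhaustive interleaving of all nine rules terminates and reaches a genuinely reduced state where \emph{all} the structural lemmas simultaneously apply. In particular, a merge performed by Rule~\ref{rule:transitivity} or~\ref{rule:attraction} can create new clusters (via the new clique $X_{i,j}$) and can cause previously inapplicable rules to become applicable again, so I would need to argue that the fixpoint is reached after polynomially many rounds and that the preconditions assumed in Lemmas~\ref{lem:fixed_clusters}--\ref{lem:bound2} (``reduced by Rules~\ref{rule:trivial} through~$k$'') all hold simultaneously at termination. I would handle this by defining a potential function, for instance the pair $(\lvert U\rvert\text{-count of nonempty parts}, |V(G)|)$ ordered lexicographically in reverse (merges decrease the number of parts, deletions decrease the vertex count, and edge-additions are bounded since each pair of clusters is fused at most once), and showing it strictly decreases with each application, so the process terminates within the claimed polynomial number of steps.
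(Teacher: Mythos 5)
Your size accounting matches the paper's and is essentially correct, but there are two genuine gaps on the running-time side, the first of which is fatal to the claim as stated. You begin by computing a \emph{minimum} distance-to-cluster set $U$ with an FPT algorithm running in time exponential in $\dc(G)$, and dismiss this as immaterial. It is not: a kernelization must run in time polynomial in $|x|+k$ (this is part of the definition recalled in Section~\ref{sec:prelim}), and the theorem explicitly claims the bound $\bigO{d^4\dc(G)^{2d+1}(n+m)}$, which is polynomial in $\dc(G)$ for fixed $d$. An exponential-in-$\dc(G)$ first step cannot be ``dominated by the rule applications''; it already exceeds the claimed time and disqualifies the procedure as a kernel. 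The paper avoids this by \emph{not} computing an optimal set: it uses the fact that a graph is a cluster graph if and only if it has no induced $P_3$, and greedily deletes all three vertices of every induced $P_3$ found. This yields a set $U$ with $|U| \leq 3\dc(G)$ in $\bigO{\dc(G)(n+m)}$ time, and the factor $3$ is harmless in every subsequent big-O bound.

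Second, your final tally --- $\bigO{d^2\dc(G)^{2d+1}}$ rule applications, each costing up to $\bigO{d^2\dc(G)^{2d+1}(n+m)}$ --- multiplies out to $\bigO{d^4\dc(G)^{4d+2}(n+m)}$, not the claimed $\bigO{d^4\dc(G)^{2d+1}(n+m)}$; so even granting each individual estimate, your bookkeeping does not prove the stated time bound. The paper instead charges each rule separately for its \emph{total} work across all applications: Rules~\ref{rule:shrink} and~\ref{rule:super_small} are executed in a single pass over the graph; Rule~\ref{rule:fusion} is amortized by maintaining cluster membership rather than explicitly adding edges; Rule~\ref{rule:normalization1} is applied only $\bigO{\dc(G)}$ times because the number of fixed clusters is linear in $\dc(G)$; Rules~\ref{rule:transitivity} and~\ref{rule:attraction} are applied $\bigO{\dc(G)^2}$ times (once per pair of monochromatic parts); and the bottleneck, Rule~\ref{rule:pattern_removal}, inspects each of the $\bigO{\dc(G)^{2d}}$ patterns only once, spending $\bigO{n}$ per pattern, since once the excess clusters of a pattern are deleted that pattern can never again trigger the rule. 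This per-rule amortization, rather than a global product of (number of applications) times (worst-case per-application cost), is what delivers the claimed bound; your potential-function argument for termination is fine as far as it goes, but it does not substitute for this finer accounting.
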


\begin{proof}
    The algorithm begins by finding a set $U$ such that $G - U$ is a cluster graph.
    Note that $|U| \leq 3\dc(G)$ since a graph is a cluster graph if and only if it has no induced path on three vertices: while there is some $P_3$ in $G$, we know that at least one its vertices must be removed, but since we don't know which one, we remove all three; thus, $U$ can be found in $\bigO{\dc(G)(n + m)}$ time.
    After the exhaustive application of Rules~\ref{rule:trivial} through~\ref{rule:pattern_removal}, by Lemma~\ref{lem:bound1}, $V(G) \setminus U$ has at most $\bigO{d^2\dc(G)^3}$ vertices in clusters of size at least $2d+1$.
    By Rule~\ref{rule:super_small}, $G$ has no simple cluster of size at most $d+1$.
    Ambiguous clusters of size at most $2d$, again by Lemma~\ref{lem:bound1}, also comprise only $\bigO{d^2\dc(G)^2}$ vertices of $G$.
    Finally, for simple clusters of size between $d+2$ and $2d$, Lemmas~\ref{lem:patterns} and~\ref{lem:bound2} guarantee that there are $\bigO{d^2\dc(G)^{2d+1}}$ vertices in small clusters and, consequently, this many vertices in $G$.

    As to the running time, first, computing and maintaining $N^{2d}(U_i)$ takes $\bigO{d\dc(G)n}$ time.
    Rule~\ref{rule:trivial} is applied only at the beginning of the kernelization, and runs in $\bigO{2^{2d}d(n + m)}$ time.
    Rules~\ref{rule:transitivity} and~\ref{rule:attraction} can both be verified in $\bigO{d\dc(G)^2(n + m)}$ time, since we are just updating $N^{2d}(U_i)$ and performing merge operations.
    Both are performed only $\bigO{\dc(G)^2}$ times, because we only have this many pairs of monochromatic parts.
    The straightforward application of Rule~\ref{rule:fusion} would yield a running time of $\bigO{n^2}$. However, we can ignore edges that are interior to clusters and only maintain which vertices belong together; this effectively allows us to perform this rule in $\bigO{n}$ time, which, along with its $\bigO{n}$ possible applications, yields a total running time of $\bigO{n^2}$ for this rule.
    Rule~\ref{rule:shrink} is directly applied in $\bigO{n}$ time; indeed, all of its applications can be performed in a single pass.
    Rule~\ref{rule:normalization1} is also easily applied in $\bigO{n + m}$ time. Moreover, it is only applied $\bigO{\dc(G)}$ times, since, by Lemma~\ref{lem:bound1}, the number of fixed clusters is linear in $\dc(G)$; furthermore, we may be able to reapply Rule~\ref{rule:normalization1} directly to the resulting cluster, at no additional complexity cost.
    The analysis for Rule~\ref{rule:super_small} follows the same argument as for Rule~\ref{rule:shrink}.
    Finally, Rule~\ref{rule:pattern_removal} is the bottleneck of our kernel, since it must check each of the possible $\bigO{\dc(G)^{2d}}$ patterns, spending $\bigO{n}$ time for each of them.
    Each pattern is only inspected once because the number of clusters in a pattern can no longer achieve the necessary bound for the rule to be applied once the excessive clusters are removed.
\end{proof}

In the next theorem we provide an $\FPT$ algorithm for \textsc{$d$-Cut} parameterized by the distance to cluster, running in time $\bigO{4^d(d+1)^{\dc(G)}2^{\dc(G)}\dc(G)n^2}$. Our algorithm is based on dynamic programming, and is considerably simpler than the one given by Komusiewicz  et al.~\cite{matching_cut_ipec} for $d=1$, which applies four reduction rules and an equivalent formulation as a 2-\textsc{SAT} formula. However, for $d=1$ our algorithm is slower, namely $\bigOs{4^{\dc( G )}}$ compared to  $\bigOs{2^{\dc( G )}}$.


Observe that minimum distance to cluster
sets and minimum distance to co-cluster sets can be computed in $1.92^{\dc( G )} \cdot \bigO{ n^2 }$
time and $1.92^{\dcc( G )} \cdot \bigO{n^2 }$ time, respectively~\cite{branching-cluster}. Thus, in the proofs of Theorems~\ref{thm:fpt_cluster} and~\ref{thm:fpt_cocluster} we can safely assume that we have these sets at hand.

\begin{theorem}
    \label{thm:fpt_cluster}
    For every integer $d \geq 1$, there is an algorithm that solves \pname{$d$-Cut} in time $\bigO{4^d(d+1)^{\dc(G)}2^{\dc(G)}\dc(G)n^2}$.
\end{theorem}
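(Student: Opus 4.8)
The plan is to combine exhaustive branching on the modulator $U$ with a dynamic programming over the clusters of $G-U$, exploiting that, once the two sides of $U$ are fixed, the clusters interact \emph{only} through the degree budgets of the vertices of $U$. First I would compute, using the cited $1.92^{\dc(G)}\cdot\bigO{n^2}$ algorithm, a minimum set $U$ with $|U|=\dc(G)$ such that $G-U$ is a disjoint union of cliques $C_1,\dots,C_p$ with $p=\bigO{n}$; note that each $C_j$ is adjacent only to $U$, and distinct clusters are pairwise non-adjacent. The degenerate case $\dc(G)=0$, where $G$ is itself a cluster graph, is handled directly. Next I would branch over all $2^{|U|}=2^{\dc(G)}$ bipartitions $(U_A,U_B)$ of $U$. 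For a fixed bipartition, an assignment of $C_j$ is a partition $(C_j^A,C_j^B)$; since $C_j$ is a clique, it is locally legal only if $|C_j^A|\le d$ and $|C_j^B|\le d$, so every $C_j$ with $|C_j|>2d$ is forced to be monochromatic (two choices), and in addition every $v\in C_j^A$ must satisfy $|C_j^B|+|N(v)\cap U_B|\le d$ and symmetrically for $C_j^B$. For a \emph{small} cluster, with $|C_j|\le 2d$, there are at most $2^{2d}=4^d$ candidate assignments to enumerate, which is exactly the source of the $4^d$ factor.

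The key idea is that, because distinct clusters are non-adjacent, the only global constraint coupling their assignments is that each $u\in U$ has at most $d$ neighbours on the opposite side. Hence a prefix of assignments for $C_1,\dots,C_i$ is faithfully summarized by a residual vector $\mathbf{r}\in\{0,\dots,d\}^{U}$, where, for $u\in U_A$, the entry $\mathbf{r}(u)$ counts the neighbours of $u$ placed in $B$ among the clusters processed so far, and symmetrically for $u\in U_B$. I would initialize $\mathbf{r}(u)=|N(u)\cap U_B|$ for $u\in U_A$ (and $|N(u)\cap U_A|$ for $u\in U_B$), rejecting the branch immediately if this already exceeds $d$. Processing cluster $C_{i+1}$ from a state $\mathbf{r}$, for each legal assignment $(C_{i+1}^A,C_{i+1}^B)$ I would add $|N(u)\cap C_{i+1}^B|$ to $\mathbf{r}(u)$ for $u\in U_A$ (and $|N(u)\cap C_{i+1}^A|$ for $u\in U_B$), keeping only the resulting states all of whose coordinates remain at most $d$, and maintaining a boolean flag recording whether both sides of the partition have become non-empty. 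After all clusters are processed, the branch produces a $d$-cut if and only if some reachable state has both sides non-empty; the instance is a $\YES$ instance precisely when some branch succeeds.

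For the running time, there are $2^{\dc(G)}$ branches; within each, the dynamic programming visits $\bigO{n}$ clusters, and at each cluster it combines the $(d+1)^{\dc(G)}$ residual states with the at most $4^d$ legal assignments, each transition touching the $\bigO{\dc(G)}$ coordinates of $\mathbf{r}$. Together with the $\bigO{n}$ overhead needed to look up and aggregate the cluster-to-$U$ adjacencies, this yields the claimed bound $\bigO{4^d(d+1)^{\dc(G)}2^{\dc(G)}\dc(G)n^2}$. The main point requiring care is the correctness of the summarization: one must argue that tracking only the aggregate vector $\mathbf{r}$, rather than the individual cluster assignments, loses no information, which rests precisely on the non-adjacency of distinct clusters, and that the local clique constraint together with the budget constraint at $U$ captures exactly the $d$-cut property. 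The subtle corner cases are large clusters, which are forced monochromatic and hence contribute only two assignments, and the non-emptiness requirement in the branches where all of $U$ is placed on one side, where the other side of the cut must be supplied by cluster vertices.
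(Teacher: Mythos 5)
Your proposal is correct and takes essentially the same approach as the paper's proof: enumerate all $2^{\dc(G)}$ bipartitions of the modulator $U$, then run a dynamic programming over the clusters of $G-U$ whose state is a vector in $\{0,\dots,d\}^{|U|}$ tracking the degree budgets of the vertices of $U$, with at most $4^d$ legal assignments per cluster since clusters larger than $2d$ are forced monochromatic. The differences are cosmetic---you accumulate consumed degrees forward while the paper propagates residual budgets over the remaining clusters, and your explicit non-emptiness flag is a small refinement of a point the paper leaves implicit.
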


\begin{proof}
    Let $U$ be a set such that $G - U$ is a cluster graph, $\mathcal{Q} = \{Q_1, \dots, Q_p\}$ be the family of clusters of $G - U$ and $\mathcal{Q}_i = \bigcup_{i \leq j \leq p} Q_j$.
    Essentially, the following dynamic programming algorithm attempts to extend a given partition of $U$ in all possible ways by partitioning clusters, one at a time, while only keeping track of the degrees of vertices that belong to $U$.
    Recall that we do not need to keep track of the degrees of the cluster vertices precisely because $G - U$ has no edge between clusters.

    Formally, given a partition $U = A  \dcup B$, our table is a mapping $f : [p] \times \mathbb{Z}^{|A|} \times \mathbb{Z}^{|B|} \rightarrow \{0, 1\}$.
    Each entry is indexed by $(i, \bmd_A, \bmd_B)$, where $i \in [p]$, $\bmd_A$ is a $|A|$-dimensional vector with the $j$-th coordinate begin denoted by $\bmd_A[j]$; $\bmd_B$ is defined analogously.
    Our goal is to have $f(i, \bmd_A, \bmd_B) = 1$ if and only if there is a partition $(X, Y)$ of $U \cup \mathcal{Q}_i$ where $A \subseteq X$, $B \subseteq Y$ and $v_j \in A$ ($u_{\ell} \in B$) has at most $\bmd_A[j]$ ($\bmd_B[\ell]$) neighbors in $\mathcal{Q}_i$.

    We denote by $P_d(i, \bmd_A, \bmd_B)$ the set of all partitions $L  \dcup R = Q_i$ such that every vertex $v \in L$ has $d_{B \cup R}(v) \leq d$, every $u \in R$ has $d_{A \cup L}(u) \leq d$, every $v_j \in A$, $d_R(v_j) \leq \bmd_A[j]$ and every $u_{\ell} \in B$, $d_L(u_{\ell}) \leq \bmd_B[\ell]$; note that, due to this definition, $(L, R) \neq (R, L)$.
    In the following equations, which give the computations required to build our table, $\bmd_A(R)$ and $\bmd_B(L)$ are the updated values of the vertices of $A$ and $B$ after $R$ is added to $Y$ and $L$ to $X$, respectively.
    \begin{align}
        f(i, \bmd_A, \bmd_B) &= 0 \bigvee_{(L, R) \in P_d(i, \bmd_A, \bmd_B)} f(i+1,\bmd_A(R), \bmd_B(L))\label{eq:trans_dc}\\
        f(p, \bmd_A, \bmd_B) &= 1, \text{if and only if $P_d(i, \bmd_A, \bmd_B) \neq \emptyset$.}\label{eq:stop_dc}
    \end{align}

    We proceed to show the correctness of the above by induction.
    For the base case, i.e., when $|\mathcal{Q}| = p = 1$, we have that for $v_j \in A$ ($u_l \in B$),  $\bmd_A[j] = d - d_B(v_j)$ ($\bmd_B[j] = d - d_A(u_l)$) and a partition of $V(G)$ exists if and only if there is some partition $(L, R) \in P_d(1, \bmd_A, \bmd_B)$, where.
    This case is covered by equation~(\ref{eq:stop_dc}).

    So let $p > 1$ and $(i, \bmd_A, \bmd_B)$ be an entry of our table.
    First, if $|Q_i| \geq 2d+1$, $Q_i$ is monochromatic, which implies that $|P_d(i, \bmd_A, \bmd_B)| \leq 2$.
    Therefore, we may assume that, $|P_d(i, \bmd_A, \bmd_B)| \leq 2^{2d}$.
    $P_d(i, \bmd_A, \bmd_B) = \emptyset$ implies that any partition $(L, R)$ of $Q_i$ causes a vertex in $L$ ($R$) to have more than $d$ neighbors in $B \cup R$ ($A \cup L$), which is easily checked for in $\bigO{n|U|}$-time, or some vertex $v_j \in A$ ($u_l \in B$) has $d_{Y \cup R}(v_j) > \bmd_A[j]$ ($d_{X \cup L}(u_l) > \bmd_B[l]$).
    Either way, we have that no matter how we partition $\mathcal{Q}_i$, the available degree of some vertex is not enough, equation~(\ref{eq:trans_dc}) yields the correct answer.

    However, if $P_d(i, \bmd_A, \bmd_B) \neq \emptyset$, the subgraph induced by $U \cup \mathcal{Q}_i$ has a $d$-cut separating $A$ and $B$ and respecting the limits of $\bmd_A$ and $\bmd_B$ if and only if there is some $(L, R) \in P_d(i, \bmd_A, \bmd_B)$ such that $U \cup \mathcal{Q}_{i+1}$ has a $d$-cut and each vertex of $A$ ($B$) has the size of its neighborhood in $\mathcal{Q}_{i+1}$ bounded by the respective coordinate of $\bmd_A(R)$ ($\bmd_B(L)$).
    By the inductive hypothesis, there is such a partition of $\mathcal{Q}_{i+1}$ if and only if $f(i+1, \bmd_A(R), \bmd_B(L)) = 1$, concluding the proof of correctness.
    Clearly, there is a $d$-cut separating $A$ and $B$ if $f(1, \bmd_A, \bmd_B) = 1$ where for every $v_j \in A$ ($u_l \in B$),  $\bmd_A[j] = d - d_B(v_j)$ ($\bmd_B[j] = d - \dgr_A(u_{\ell})$).

    The complexity analysis is straightforward.
    Recalling that $|P_d(i, \bmd_A, \bmd_B)| \leq 2^{2d}$, we have that each $f(i, \bmd_A, \bmd_B)$ can be computed in time $\bigO{4^d|U|n}$ and, since we have $\bigO{(d+1)^{|A| + |B|}p} \in \bigO{(d+1)^{|U|}p}$, given a partition $(A, B)$ of $U$, we can decide if there is $d$-cut separating $A$ and $B$ in $\bigO{4^d(d+1)^{|U|}|U|n^2}$-time.
    To solve \pname{$d$-Cut} itself, we guess all $2^{|U|}$ partitions of $U$ and, since $|U| \in \bigO{\dc(G)}$, we obtain a total running time of $\bigO{4^d(d+1)^{\dc(G)}2^{\dc(G)}\dc(G)n^2}$.
\end{proof}

\subsection{Distance to co-cluster}
\label{sec:cocluster}

A graph is a \emph{co-cluster} graph if only if its the complement of a cluster graph; that is, if it is a complete multipartite graph.
Our next theorem complements the results of our previous section and shall help establish the membership  in $\FPT$ of \pname{$d$-Cut} parameterized by the vertex cover number.

\begin{theorem}
    \label{thm:fpt_cocluster}
    For every integer $d \geq 1$, there is an algorithm solving \pname{$d$-Cut} in time $\bigO{32^d2^{\dcc(G)}(d+1)^{\dcc(G)+d}(\dcc(G)+d)n^2}$.
\end{theorem}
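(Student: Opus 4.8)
The plan is to mirror the distance-to-cluster algorithm of Theorem~\ref{thm:fpt_cluster}. First I would compute, using the cited branching algorithm, a minimum set $U$ with $|U| = \dcc(G)$ such that $H := G - U$ is a co-cluster graph, i.e. a complete multipartite graph with parts $I_1, \dots, I_q$. Then I would guess the bipartition $U = U_A \dcup U_B$ of $U$ induced by the sought $d$-cut, which accounts for the factor $2^{\dcc(G)}$, and for each guess decide whether it extends to a $d$-cut $(A,B)$ of $G$. All the difficulty lies in this extension step: unlike clusters, the parts $I_1,\dots,I_q$ are pairwise completely joined, so the number of crossing edges at a vertex $v \in I_k$ depends on the \emph{global} distribution of $A$ and $B$ across all the other parts, and one cannot process the parts independently as in the cluster DP.

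The heart of the proof is a structural lemma that tames this global interaction. Writing $A' = A \cap H$ and $B' = B \cap H$, suppose both are non-empty and $q \ge 2$. Since a vertex of $I_k$ on one side is adjacent to every vertex of the opposite side lying outside $I_k$, the $d$-cut inequalities force $|B'| - |B' \cap I_k| \le d$ for every part $I_k$ meeting $A'$, and symmetrically with the roles of $A'$ and $B'$ exchanged. A short counting argument then gives $\min(|A'|,|B'|) \le 2d$: if both sides had more than $d$ vertices, every non-empty part would meet both sides (because $|A'\cap I_k| \ge |A'| - d > 0$ whenever $I_k$ meets $B'$, and conversely), and summing $|A' \cap I_k| \ge |A'| - d$ over any two parts yields $|A'| \le 2d$. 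Pushing this further, I would show that when $H$ is large (say $|H| > 4d$) the minority side, say $B'$, sits inside a \emph{single} part $I_j$, has size at most $d$, and forces $|H \setminus I_j| \le d$ as well; the opposite regime $|H| \le 4d$ collapses the whole of $H$ into a bounded-size instance, handled by brute force over its $\bigO{2^{4d}}$ bipartitions.

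With the lemma in hand, the extension step becomes a bounded search. In the large regime the $d$-cut is essentially determined by a tiny minority set: at most $d$ vertices of one part together with the at most $d$ vertices outside the dominant part, all of which I would promote into $U$ to obtain $U'$ with $|U'| \le \dcc(G) + d$, explaining the passage from $\dcc(G)$ to $\dcc(G)+d$ in the bound. The point that makes this enumeration cheap is that two vertices of the same part with the same neighbourhood in $U$ are interchangeable, so the minority set can be enumerated by its $U$-adjacency type rather than by vertex identity, which I expect to absorb into the $32^d(d+1)^d$ factor. Once $U'$ and its bipartition are fixed, every remaining vertex of $H$ lies on the majority side, and verifying the $d$-cut constraints reduces to the same degree-budget bookkeeping as in Theorem~\ref{thm:fpt_cluster}, namely a vector in $(\{0\}\cup[d])^{|U'|}$ recording how many further crossing neighbours each vertex of $U'$ may still tolerate, yielding the $(d+1)^{\dcc(G)+d}$ factor; the degenerate cases $q=1$ (where $H$ is an independent set and $U$ is a vertex cover) and $B' = \emptyset$ (all of $H$ on one side) are uniform sub-cases of this check. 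Combining the $2^{\dcc(G)}$ guesses for $U$, the type-based enumeration of the minority set, and the budget verification gives the claimed running time $\bigO{32^d 2^{\dcc(G)} (d+1)^{\dcc(G)+d}(\dcc(G)+d)n^2}$.

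The main obstacle, and the reason the co-cluster bound is worse than the cluster one, is precisely the structural lemma and its algorithmic exploitation. Because the parts are mutually completely joined, a direct DP would have to remember the global split sizes, which is not bounded by the parameter; proving that the minority side is not only of size $\bigO{d}$ but is essentially confined to a single part is what collapses this global dependence to a local one. I expect the delicate points to be getting the quantitative bounds right (the $\le 2d$ estimate and its refinement to $\le d$ with a unique dominant part), and ensuring that the enumeration of the minority set by $U$-adjacency types is simultaneously exhaustive and cheap; these are the source of the extra $32^d(d+1)^d$ and $(\dcc(G)+d)$ factors.
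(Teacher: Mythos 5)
Your overall architecture coincides with the paper's: guess the bipartition of $U$ (the $2^{\dcc(G)}$ factor), show that $H = G - U$ is either monochromatic, small, or dominated by a single part whose complement in $H$ has at most $d$ vertices, promote those at most $d$ vertices into the modulator, and finish by invoking Theorem~\ref{thm:fpt_cluster} on what remains (an independent set, so $\dc(G) \leq \dcc(G)+d$). Your structural lemma is derived differently --- by direct counting on a hypothetical $d$-cut rather than the paper's monochromatic-set arguments (a $K_{2d+1}$ built from one vertex per class when there are at least $2d+1$ classes, and a $K_{d+1,2d+1}$ between two groups of classes) --- and it is correct, provided the case $q=1$ is handled separately as you note. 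It is in fact slightly sharper than the paper's case analysis: the paper needs an extra brute-force case over $2^{\Theta(d^2)}$ bipartitions when every class has at most $2d$ vertices (its Case 3), which your lemma makes unnecessary, since for $|H| > 4d$ any non-monochromatic $d$-cut forces a dominant part.

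The one genuine flaw is the claim that the minority set can be enumerated ``by its $U$-adjacency type'' at a cost absorbed into $32^d(d+1)^d$. The number of $U$-adjacency types inside the dominant part $I_j$ can be as large as $\min\left(2^{|U|}, |I_j|\right)$, so enumerating candidate minority sets of size up to $d$ by type costs on the order of $2^{d\cdot\dcc(G)}$ (or $n^d$), which is not within the claimed bound for any $d \geq 2$; moreover, promoting the (unknown) minority vertices into $U$ together with $H \setminus I_j$ would give $|U'| \leq \dcc(G) + 2d$, not $\dcc(G)+d$ as you state. Fortunately this step is redundant in your own plan: once $H \setminus I_j$ (at most $d$ vertices, identified deterministically) is promoted and its $2^d$ bipartitions are enumerated, the remaining graph $I_j$ is an independent set, i.e., a cluster graph with singleton clusters, and the budget-vector dynamic programming of Theorem~\ref{thm:fpt_cluster} --- which you invoke anyway --- decides which of its vertices go to the minority side, within the $(d+1)^{\dcc(G)+d}(\dcc(G)+d)n^2$ cost. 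That is exactly what the paper does in its final case; drop the type enumeration, let the DP do that work, and your proof goes through.
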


\begin{proof}
    Let $U \subseteq V(G)$ be a set of $\bigO{\dcc(G)}$ vertices such that $G - U$ is a co-cluster graph with color classes $\varphi = \{F_1, \dots, F_t\}$.
    Define $\mathcal{F} = \bigcup_{i \in [t]} F_i$ and suppose we are given a $d$-cut $(A,B)$ of $G[U]$.
    First, note that if $t \geq 2d+1$, we have that some of the vertices of $\mathcal{F}$ form a clique of size $2d+1$, which is a monochromatic set; furthermore, every vertex $v \in \mathcal{F}$ but not in $Q$ has at least $d+1$ neighbors in $Q$.
    This implies that $Q \cup \{v\}$ is monochromatic and, thus, $\mathcal{F}$ is a monochromatic set.
    Checking if either $(A \cup \mathcal{F}, B)$ or $(A, B \cup \mathcal{F})$ is a $d$-cut can be done in $\bigO{n^2}$ time.

    If the above does not apply, we have that $t \leq 2d$.
    \begin{itemize}
        \item Case 1: If $|\mathcal{F}| \leq 4d$ we can just try to extend $(A,B)$ with each of the $2^{|\mathcal{F}|}$ bipartitions of $\mathcal{F}$ in $\bigO{16^dn^2}$ time.
    \end{itemize}

    So now, let $\varphi_1 \dcup \varphi_2 = \varphi$ be a bipartition of the color classes, $\mathcal{F}_i = \{v \in F_j \mid F_j \in \varphi_i\}$, and, for simplicity, suppose that $|\mathcal{F}_1|~\leq~|\mathcal{F}_2|$.

    \begin{itemize}
        \item Case 2: If $|\mathcal{F}_1| \geq d+1$ and $|\mathcal{F}_2| \geq 2d+1$, we know that there is a set $Q \subseteq \mathcal{F}$ forming a (not necessarily induced) complete bipartite subgraph $K_{d+1, 2d+1}$, which is a monochromatic set.
        Again, any $v \notin Q$ has at least $d_Q(v) \geq d+1$, from which we conclude that $Q \cup \{v\}$ is also monochromatic, implying that $\mathcal{F}$ is monochromatic.
    \end{itemize}

    If Case 2 is not applicable, either $|\mathcal{F}_1| \leq d$ and $|\mathcal{F}_2| \geq 2d+1$, or $|\mathcal{F}_2| \leq 2d$.
    For the latter, note that this implies $|\mathcal{F}| \leq 4d$, which would have been solved by Case 1.
    For the former, two cases remain:

    \begin{itemize}
        \item Case 3: Every $F_i \in \varphi_2$ has $|F_i| \leq 2d$. This implies that every $F \in \varphi$ has size bounded by $2d$ and that $|\mathcal{F}| \leq 4d^2$; we can simply try to extend $(A, B)$ with each of the $\bigO{2^{d^2}}$ partitions of $\mathcal{F}$, which can be done in $\bigO{2^{d^2}n^2}$ time.

        \item Case 4: There is some $F_i \in \varphi_2$ with $|F_i| \geq 2d + 1$.
        Its existence implies that $|\mathcal{F}| - |F_i| \leq d$, otherwise we would have concluded that $\mathcal{F}$ is a monochromatic set.
        Since $\mathcal{F} \setminus F_i$ has at most $d$ vertices, the set of its bipartitions has size bounded by $2^d$.
        So, given a bipartition $\mathcal{F}_A \dcup \mathcal{F}_B = \mathcal{F} \setminus F$, we define $A' := A \cup \mathcal{F}_A$ and $B' := B \cup \mathcal{F}_B$.
        Finally, note that $G \setminus (A' \cup B')$ is a cluster graph where every cluster is a single vertex; that is, $\dc(G) \leq \dcc(G) + d$.
        In this case, we can apply Theorem~\ref{thm:fpt_cluster}, and obtain the running time of $\bigO{4^d(d+1)^{\dcc(G)+d}(\dcc(G)+d)n^2}$; we omit the term $2^{\dcc(G) + d}$ since we already have an initial partial $d$-cut $(A', B')$.
    \end{itemize}

    For the total complexity of the algorithm, we begin by guessing the initial partition of $U$ into $(A,B)$, spending $\bigO{n^2}$ time for each of the $\bigO{2^{\dcc(G)}}$ possible bipartitions.
    If $t \geq 2d+1$ we give the answer in $\bigO{n^2}$ time.
    Otherwise, $t \leq 2d$.
    If $|\mathcal{F}| \leq 4d$, then we spend $\bigO{16^dn^2}$ time to test all partitions of $\mathcal{F}$ and return the answer.
    Else, for each of the $\bigO{4^d}$ partitions of $\varphi$, if one of them has a part with $d+1$ vertices and the other part has $2d+1$ vertices, we respond in $\bigO{n^2}$ time.
    Finally, for the last two cases, we either need $\bigO{2^{d^2}n^2}$ time, or $\bigO{8^d(d+1)^{\dcc(G)+d}(\dcc(G)+d)n^2}$.
    This yields a final complexity of $\bigO{32^d2^{\dcc(G)}(d+1)^{\dcc(G)+d}(\dcc(G)+d)n^2}$.
\end{proof}

Using Theorems~\ref{thm:fpt_cluster},~\ref{thm:fpt_cocluster}, and the relation $\tau(G) \geq \max\{\dc(G), \dcc(G)\}$~\cite{matching_cut_ipec}, we obtain fixed-parameter tractability of \pname{$d$-Cut} for the vertex cover number $\tau(G)$.

\begin{corollary}
    For every $d \geq 1$, \pname{$d$-Cut} parameterized by the vertex cover number is in $\FPT$.
\end{corollary}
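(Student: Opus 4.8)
The plan is to observe that the vertex cover number dominates both the distance to cluster and the distance to co-cluster, and then to invoke one of the two fixed-parameter tractable algorithms already established. First, I would recall that a set $S \subseteq V(G)$ is a vertex cover if and only if $G - S$ has no edges, that is, $G - S$ is an edgeless (independent) graph. An edgeless graph is trivially a cluster graph, since each of its connected components is a single vertex and hence a (degenerate) clique; symmetrically, it is also a co-cluster graph, being the complement of a cluster graph. Consequently, any minimum vertex cover of $G$ is in particular a set whose removal yields both a cluster graph and a co-cluster graph, which gives the inequality $\tau(G) \geq \max\{\dc(G), \dcc(G)\}$ cited in the statement.

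Second, with this inequality in hand, I would simply apply Theorem~\ref{thm:fpt_cluster}. That algorithm runs in time $\bigO{4^d(d+1)^{\dc(G)}2^{\dc(G)}\dc(G)n^2}$, which is of the form $f(\dc(G)) \cdot \mathrm{poly}(n)$ for every fixed $d$, hence $\FPT$ when parameterized by $\dc(G)$. Since $\dc(G) \leq \tau(G)$, substituting this bound into the running time yields an algorithm running in time $\bigO{4^d(d+1)^{\tau(G)}2^{\tau(G)}\tau(G)n^2}$, which is $\FPT$ parameterized by $\tau(G)$, as desired. An entirely analogous argument routes through Theorem~\ref{thm:fpt_cocluster} using $\dcc(G) \leq \tau(G)$, so either result suffices on its own, though the cluster algorithm gives the cleaner final bound.

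I do not foresee any genuine obstacle here: the corollary is a direct consequence of the parameter hierarchy together with the previously proven tractability results. The only point that merits a sentence of care is verifying that an edgeless graph qualifies simultaneously as a cluster graph and a co-cluster graph, which is immediate from the definitions; everything else is a matter of monotonicity of the running-time bound in the parameter.
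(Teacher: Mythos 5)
Your proposal is correct and matches the paper's own argument: the paper likewise derives the corollary from the relation $\tau(G) \geq \max\{\dc(G), \dcc(G)\}$ together with Theorems~\ref{thm:fpt_cluster} and~\ref{thm:fpt_cocluster}. Your only (harmless) additions are spelling out why an edgeless graph is both a cluster and a co-cluster graph, and observing that either theorem alone already suffices.
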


\section{Concluding remarks}
\label{sec:concl}

We presented a series of algorithms and complexity results; many questions, however, remain open.
For instance, all of our algorithms have an exponential dependency on $d$ on their running times.
While we believe that such a dependency is an intrinsic property of \pname{$d$-cut}, we have no proof for this claim. Similarly, the existence of a {\sl uniform} polynomial kernel parameterized by the distance to cluster, i.e., a kernel whose degree does not depend on $d$, remains an interesting open question.

Also in terms of running time, we expect the constants in the base of the exact exponential algorithm to be improvable. However, exploring small structures that yield non-marginal gains as branching rules, as done by Komusiewicz et al.~\cite{matching_cut_ipec} for $d=1$ does not seem a viable approach, as the number of such structures appears to rapidly grow along with $d$.


The distance to cluster kernel is hindered by the existence of clusters of size between $d+2$ and $2d$, an obstacle that is not present in the \pname{Matching Cut} problem.
Aside from the extremal argument presented, we know of no way of dealing with them.
We conjecture that it should be possible to reduce the total kernel size from $\bigO{d^2\dc(G)^{2d+1}}$ to $\bigO{d^2\dc(G)^{2d}}$, matching the size of the smallest known kernel for \pname{Matching Cut}~\cite{matching_cut_ipec}.

We also leave open to close the gap between the known polynomial and $\NP$-hard cases in terms of maximum degree.
We showed that, if $\Delta(G) \leq d+2$ the problem is easily solvable in polynomial time, while for graphs with $\Delta(G) \geq 2d+2$, it is $\NP$-hard.
But what about the gap $d+3 \leq \Delta(G) \leq 2d+1$?
After much effort, we were unable to settle any of these cases.
In particular, we are very interested in \pname{2-Cut},  which has a single open case, namely when $\Delta(G) = 5$.
After some weeks of computation, we found no graph with more than 18 vertices and maximum degree five that had no $2$-cut, in agreement with the computational findings of Ban and Linial~\cite{internal_partition_regular6}.
Interestingly, all graphs on 18 vertices without a $2$-cut are either 5-regular or have a single pair of vertices of degree 4, which are actually adjacent.
In both cases, the graph is maximal in the sense that we cannot add edges to it while maintaining the degree constraints.
We recall the initial discussion about the \pname{Internal Partition} problem; closing the gap between the known cases for \textsc{$d$-Cut} would yield significant advancements on the former problem.

Finally, the smallest $d$ for which $G$ admits a $d$-cut may be an interesting additional parameter to be considered when more traditional parameters, such as treewidth, fail to provide $\FPT$ algorithms by themselves.
Unfortunately, by Theorem~\ref{thm:regular_nph}, computing this parameter is not even in $\XP$, but, as we have shown, it can be computed in $\FPT$ time under many different parameterizations.

\bibliographystyle{abbrv}
\bibliography{refs}\end{document}